\author{Ali Tajer$^*$ \and H. Vincent Poor$^\dag$}
\date{}
\newcommand{\med}{\;|\;}
\newtheorem{theorem}{Theorem}
\newtheorem{corollary}{Corollary}
\newtheorem{lemma}{Lemma}
\newtheorem{definition}{Definition}
\def \D{{\cal D}}
\def \L{{\cal L}}
\def \X{{\cal X}}
\def \F{{\cal F}}
\def \U{{\cal U}}
\def \sP{{\cal P}}
\def \H{{\sf H}}
\def \T{{\sf T}}
\def \D{{\sf D}}
\def \PP{{\sf P}}
\def \P{\mathbb{P}}
\newcommand{\aeq}{\overset{\rm a}{=}}
\newcommand{\dff}{\stackrel{\scriptscriptstyle\triangle}{=}}
\newlength{\dhatheight}
\newcommand{\doublehat}[1]{%
\settoheight{\dhatheight}{\ensuremath{\hat{#1}}}%
\addtolength{\dhatheight}{-0.35ex}%
\hat{\vphantom{\rule{1pt}{\dhatheight}}%
\smash{\hat{#1}}}}
\title{\huge Quick Search for Rare Events}
\begin{document}
\maketitle
\allowdisplaybreaks

\begin{center}
{\em First version: November 16, 2011\\
Current version: September 30, 2012 }
\end{center}
\begin{abstract}
Rare events can potentially occur in many applications. When manifested as opportunities to be exploited, risks to be ameliorated, or certain features to be extracted, such events become of paramount significance. Due to their sporadic nature, the information-bearing signals associated with rare events often lie in a large set of irrelevant signals and are not easily accessible.  This paper provides a statistical framework for detecting such events so that an optimal balance between detection {\em reliability} and  {\em agility}, as two opposing performance measures, is established. The core component of this framework is a sampling procedure that adaptively and quickly focuses the information-gathering resources on the segments of the dataset that bear the information pertinent to the rare events. Particular focus is placed on Gaussian signals with the aim of detecting signals with rare mean and variance values. {\renewcommand{\thefootnote}{}\footnotetext{$^*$Electrical \& Computer Engineering Department, Wayne State University, Detroit, MI 48202.
email: {\tt tajer@wayne.edu}
}}
{\renewcommand{\thefootnote}{}\footnotetext{
$^\dag$Electrical Engineering Department, Princeton University, Princeton, NJ 08544. e-mail: {\tt poor@princeton.edu.}
}}{\renewcommand{\thefootnote}{}\footnotetext{This research was supported in part by the National Science Foundation under Grant DMS-11-18605.}}

\end{abstract}

\section{Introduction}
\label{sec:general}

The problem of searching for scarce and at the same time significant events with certain statistical behavior in observed information streams is a classical one and has attracted attention in a wide variety of fields over the past few decades. Such events can broadly model  at least three categories of problems. One concentrates on seeking opportunities in a range of domains including trading in finance \cite{finance1, finance2, finance3} and spectrum sensing in telecommunication \cite{telecom1, telecom2}. The second category pertains to minimizing risks and avoiding catastrophes in applications such as risk analysis in econometrics \cite{econ1, econ2, econ3, econ4, econ5, econ6}, blackout cascade avoidance in energy systems \cite{energy1, energy2}, intrusion identification in network security \cite{sec1, sec2, sec3, sec4, sec5, sec6}, fraud detection \cite{fraud1, fraud2}, and seismology. And the third category deals with extracting certain data segments with pre-specified probabilistic features with applications in image and video analysis, medical diagnosis, neuroscience, and remote sensing (seismic, sonar, radar, biomedical)~\cite{sensing1, sensing2, sensing3}

While detecting the rare events in some of these applications is time-insensitive, in some other applications time is of the essence and it is important to devise timely and reliable decision-making mechanisms. Such time-sensitivity can be due to the transient nature of the opportunities that are attractive only when detected quickly, or due to the substantial costs that risks can incur if not detected and managed swiftly, or for allowing for real-time processing of the information.

All these applications can be modeled in terms of collection of events that can be broadly categorized into two groups. One group constitutes the majority of the events, which are deemed normal and occur most of the time, and the other group consists of  the events that occur rarely but are of extreme significance to the observer. In this paper, without loss of generality, we assume that the group of all normal events share the same statistical behavior and the group of rare events also share identical statistical behavior, albeit different from that of the normal events. This dichotomous model is mainly to focus the attention on the discrepancy between the rare and the normal events and can be easily generalized to models that involve multiple statistical behaviors for each group. Hence, we assume that the observer has access to a collection of $n$ sequences of random observations $\X^1,\dots,\X^n$, each modeling one event. Each sequence $\X^i$ consists of independent and identically distributed (i.i.d.) measurements $\X^i\dff\{X^i_1,X^i_2,\dots\}$ taking values in the set $\Omega=\mathbb{R}$ endowed with a $\sigma$-field $\F$ of events, obeying one of the two hypotheses.
\begin{equation}\label{eq:H}
    \begin{array}{cc}
      \H_0: & X^i_j\sim F_0, \quad j=1,2,\dots\\
      \H_1: & X^i_j\sim F_1, \quad j=1,2,\dots
    \end{array}
\end{equation}
where $F_0$ and $F_1$ denote the cumulative distribution functions (cdfs) of two distinct distributions on $(\Omega,\mathcal{F})$. The distribution $F_0$ models the statistical behavior of the normal events and the distribution $F_1$ models the statistical behavior of the rare events. For convenience, we assume that $F_0$ and $F_1$ have probability density functions (pdfs) $f_0$ and $f_1$, respectively. Each sequence $\X^i$ is generated  by $F_0$ or $F_1$ independently of the rest of the sequences and we assume that hypothesis $\H_1$ (a rare event) occurs with prior probability $\epsilon_n$. In order to incorporate the rareness of the sequences generated by $F_1$, we assume that $\epsilon_n=o(1)$.

The goal of quick search is to identify {\em one or more} rare events among all $n$ given events through 1) designing an information-gathering process for collecting information from the sequences $\X^1,\dots,\X^n$, and 2) delineating optimal decision rules. Designing a quick search process involves a tension between two performance measures, one being the aggregate amount of information accumulated (i.e., the number of observations made) and the other being the reliability (or cost) of the decision. In this paper we design an optimal information-gathering process that maximizes the decision reliability subject to a {\em hard}\footnote{By a hard constraint we mean that the {\em aggregate} number of observations made cannot exceed a specified level.} constraint on the aggregate number of observations we are allowed to make from $\X^1,\dots,\X^n$.

\subsection{Related Literature}

The quick search problem is closely related to the sequential detection literature, with two major differences. First, the existing approaches in sequential detection that can be applied to the quick search problem at hand often optimize a balance  between decision reliability and a {\em soft}\footnote{By soft constraint we mean that the {\em expected} number of the observations made cannot exceed a specified level.} constraint on the number of observations. This is in contrast with the quick search setting that enforces a {\em hard} constraint on the sensing resources. Secondly, sequential detection often aims to identify  {\em all} rare events or {\em only one} rare event, whereas our setting offers the flexibility to identify one or more rare events. The most relevant sequential detection solutions for identifying all and only one rare event (generated by $F_0$) are the sequential probability ratio test (SPRT)~\cite{Wald:AMS49} and the cumulative sum (CUSUM) test~\cite{Lai:IT11}, respectively. More specifically, by denoting the true hypothesis and a decision about sequence $\X^i$ by $\T_i\in\{\H_0,\H_1\}$ and $\D_i\in\{\H_0,\H_1\}$, respectively, the Type-I and Type-II detection error probabilities corresponding to sequence $\X^i$ are
\begin{equation}\label{eq:det_error}
    \PP_i^1=\P(\D_i=\H_1\med \T_i=\H_0)\quad\mbox{and}\quad \PP_i^2=\P(\D_i=\H_0\med \T_i=\H_1)\ .
\end{equation}
When the objective is to identify {\em all} rare events (generated by $F_1$), minimizing the {\em average} number of observations made with constraints on $\PP_i^1$ and $\PP^i_2$ can be decomposed into minimizing the {\em average} number of observations necessary for deciding between $\H_0$ and $\H_1$ for each individual sequence with the same reliability constraints~\cite{Wald:AMS45}. The optimal test for each, which is the test that requires the smallest number of observations and satisfies the reliability constraints, is the SPRT~\cite{Wald:AMS49}. On the other hand, when the objective is to identify only one rare event (generated by $F_1$) there exist, broadly, two classes of solutions. In one class $n$ is finite and it is known {\em a priori} that there exists only one rare event (this is called scanning problem). The optimal detector that identifies the rare event for the Brownian motion setting is given in~\cite{Zigangirov:TPA66} and \cite{Dragalin:Metrica96}. In the other class $n=\infty$ so that it is ensured that almost surely there exists a rare event. For this class the optimal test, which is the test that minimizes the {\em average} delay subject to a constraint on Type-II error probability, is the CUSUM test~\cite{Lai:IT11}.

We also remark briefly on the connection between the quick search problem and the problem of sparse support recovery in noisy environments. These two problems become equivalent (in settings and not objectives) when the events space is mapped to the sparse signal space, the rare events are mapped to the constituents of the support of the sparse signal, and the stochastic observations of the normal and rare events follow the distributions of noisy observations of the normal and sparse components of the sparse signal. A recent line of research in sparse recovery relevant to the quick search problem is the notion of adaptive sampling (sensing) in high-dimensional noisy data with sparsity structures, in which the objective is to estimate the support of the signal~\cite{Haupt:AISTATS09, Haupt:IT11}. The works in ~\cite{Haupt:AISTATS09}, \cite{Haupt:IT11}, and \cite{Malloy:ISIT11}  propose adaptive sampling procedures that effectively focus the sensing resources on the segments of the data that have higher likelihood of containing the support. Despite the similarities in the settings, however, there is one major discrepancy in the objectives of sparse recovery and quick search, and that is the equivalent of the goal of quick search in the sparse recovery setting becomes identifying {\em one subset of the support} of desired length, whereas the goal of sparse recovery is to identify the entire support of the signal.  It is noteworthy that such notion of {\em fractional} support recovery is also studied in the context of compressive sensing through  {\em non-adaptive} data acquisition~\cite{Reeves:ISIT08}. The research in~\cite{Reeves:ISIT08} considers a sparse signal with known support size (unlike our setting that the number of rare events is stochastic) and takes one set of low-dimensional observation and analyzes the interplay among the sampling rate, the fraction of the support to be recovered, and the necessary and sufficient conditions on signal power for recovering the support reliably.


\subsection{Contributions}

The objective of quick search in this paper is to identify a {\em fraction} of the rare events (as opposed to the existing literature which targets at identifying either all or only one rare events). This less-investigated objective  aims to fill the gap between the two well-studied extreme (all or one) objectives. This shift of objective allows the detector to tolerate a higher level of Type-I errors, i.e., the detector can afford to miss some of the rare events, in favor of quickly detecting the fraction of interest of the rare events (i.e. sequences generated by $F_1$). In general the number of rare events is a random variable between 0 and $n$ and is unknown {\em a priori}.

Besides the objective, another major distinction from the existing literature is the constraint enforced on the sampling resources. This paper considers the less-investigated scenario in which there exists a {\em hard} constraint on the sampling resources. This is in contrast with the existing literature which often incorporates the statistical {\em average} of the sampling budget in the analysis. We remark that to the best of our knowledge, the general problem of designing the optimal sampling strategy, which distributes a limited sampling budget among the sequences under observation in order to yield the highest level of reliability in detectng one, a fraction, or all of the sequences generated by $F_1$ is an open problem. In this paper we focus on the asymptotic performance when the number of sequences $n$ is sufficiently large and provide the asymptotically optimal sampling procedure when the distributions $F_0$ and $F_1$ are Gaussian with either different mean or different variance values.

To solve the quick search problem with the aforementioned objective and constraint, we design a sequential and data-adaptive information-gathering procedure and detection rule.

\subsection{Summary of Results}
An adaptive and sequential sampling process is designed that dynamically makes one of the following three data-driven decisions at each time:  1) it lacks sufficient evidence to identify the rare events and decides to postpone the decision and collects more data (observation), or 2) reaches some confidence to eliminate {\em roughly} $\alpha\in(0,1)$ fraction of the events that are deemed to be the weakest candidates for being rare events, but yet does not have enough evidence to make the final decision (refinement), or 3) it has accumulated enough information to identify the rare events of interest (detection). As one major result of this paper we characterize the the {\em asymptotically} optimal (in the asymptote of large number of events $n$) allocation of the sampling resources among the sequences and the pertinent sampling procedure, which consists of consecutive rounds of {\em coarse} observations and refinement actions, followed by consecutive cycles of {\em fine} observations. The number of each these cycles (refinement and observation) is also determined as a function of the number of available sampling budget.

It is confirmed in different contexts that in order to recover the rare events (or the support of the sparse signal) reliably, the power of the observed data should be scaling with the data size $n$ (cf. \cite{Jin:2006_1}, \cite{Reeves:ISIT08}, \cite{Haupt:AISTATS09}, \cite{Haupt:IT11}, and~\cite{Malloy:ISIT11}). We apply the proposed adaptive sampling procedure for setting that the normal and rare events are generated according to Gaussian distributions with either different mean or different variance values and characterize necessary and conditions on scaling rates of the mean and variance values for identifying the desired number of rare events. These scaling rates of the means and variances are functions of the data dimension $n$, the available amount of sampling budget, and the frequency of the rare events. More specifically, if we denote the likelihood of an event being a rare event by $\epsilon_n$ and define ${\varepsilon_n}\in(0,1)$ as
\begin{equation}
{\varepsilon_n}\dff\frac{\ln n\epsilon_n}{\ln n}\ ,
\end{equation}
for successfully identifying a  {\em small fraction} of the rare events when the normal events are distributed as ${\cal N}(\mu_0,1)$ and the rare events as ${\cal N}(\mu_1,1)$ via the adaptive procedure a necessary and sufficient condition is that in the asymptote of large $n$
\begin{equation}\label{eq:muuu}
\frac{(\mu_0-\mu_1)^2}{\log n}\;>\;\frac{(1-\sqrt{{\varepsilon_n}})^2}{c}\ ,
\end{equation}
where $c$ is a constant determined primarily by the available sampling budget and the number of refinement cycles. We also show that when the normal and rare events are distributed as ${\cal N}(0,A_0)$ and ${\cal N}(0,A_1)$, respectively, the counterpart necessary and sufficient condition is
\begin{equation}\label{eq:AAA}
\frac{\ln A_0/A_1}{\ln n}\;>\; \frac{2(1-{\varepsilon_n})}{c}
\end{equation}
Finally, in order to assess the gains of the data-adaptive sampling process in comparison with a non-adaptive procedure we assess the scaling laws of the mean and variance values. It is shown when the adaptive and non-adaptive procedures enjoy the same sampling budget, for the mean and variance values we identify the same behavior is in \eqref{eq:muuu} and \eqref{eq:AAA}, respectively, with the exception that the constant $c$ is decreased to $c\cdot\alpha^{K}$, where $\alpha\in(0,1)$ was defined earlier and $K$ is related to the number of refinement cycles. In another interpretation of the results, when the mean and variance have identical scaling behavior in both adaptive and non-adaptive sampling settings, for achieving identical level of reliability in detecting the rare events, the adaptive procedures requires {\em roughly} on the order of $\alpha^{-K}$ less sampling resources.
\section{Problem Statement}
\label{sec:problem}

\subsection{Sampling Model}
\label{sec:sampling}
With the ultimate objective of identifying $T_n=o(n\epsilon_n)$ (i.e., a small fraction of the rare events) the proposed sampling procedure is initiated by making observations from {\em all} sequences $\X^1,\dots,\X^n$. Based on these rough observations a fraction of the sequences that are least-likely generated by $F_0$ are discarded and the rest are retained for further and more accurate scrutiny. Repeating this procedure successively refines the search support and progressively focuses the observations on the more promising sequences. More specifically, at each time the sampling procedure selects a subset of the sequences $\X^1,\dots,\X^n$ and takes one sample from each of these sequences. Upon collecting these samples, it takes one of the following actions:
\begin{enumerate}
  \item [$\rm A_1$] {\bf (Detection):} stops further sampling and identifies $T_n$ sequences that have the highest likelihood of being rare events (generated by $F_1$);
  \item [$\rm A_2$] {\bf (Observation):} continues to further observe the same set of sequences in order to gather more information about their statistical behavior; or
  \item [$\rm A_3$] {\bf (Refinement):} discards a portion of the sequences and declares that they are most likely normal events (generated by $F_0$). When a sequence is discarded it will be deemed a weak candidate for being a rare event (generated by $F_1$) and will not be observed anymore, while the remaining sequences are retained for more scrutiny. By denoting the number of sequences retained prior to a refinement action by $\ell$, the number of sequences that this action discards is  $(1-\alpha)(\ell-T_n)$ for some $\alpha\in (0,1)$. Discarding the sequences at this rate ensures that at least $T_n$ sequences will be retained for the final detection action (action ${\rm A}_1)$.
\end{enumerate}
We denote the set of the indices of the sequences observed at time $t\in\mathbb{N}$ by $\L_t$. As we initialize the information-gathering procedure by including all sequences for observation we have $\L_1=\{1,\dots,n\}$. Also, we denote the stopping time of the procedure, i.e., the time after which detection (action $\rm A_1$) is performed, by $\tau$. Furthermore, we define the switching function $\psi:\{1,\dots,\tau\}\rightarrow\{0,1\}$ to model actions $\rm A_2$ (observation) and $\rm A_3$ (refinement). At each time $1\leq t\leq\tau-1$ we set $\psi(t)=0$ if we decide in favor of performing observation, while $\psi(t)=1$ indicates a decision in favor of performing refinement, i.e.,
\begin{equation}\label{eq:psi}
\forall t\in\{1,\dots,\tau-1\}:\quad \psi(t)=\left\{
\begin{array}{ll}
0 & \mbox{action }{\rm A}_2\;\;\mbox{and}\;\; \L_{t+1}=\L_t\\
1 & \mbox{action }{\rm A}_3\;\;\mbox{and}\;\; \L_{t+1}\subset\L_t\\
\end{array}\right.\ .
\end{equation}
Let $X_t^i$ denote the observations made from sequence $i\in\L_t$ at time $t$ and denote the $\sigma$-algebra generated by the observation $\{X^{i}_1,\dots,X_t^i\}$ by
\begin{equation}\label{eq:filter}
\forall i\in\L_t:\quad \F_t^i=\sigma(X^{i}_1,\dots,X_t^i)\ .
\end{equation}
Given $\F_t^i$, we denote the posterior probability that the sequence $\X^i$, for $i\in\L_t$, is generated by $F_1$ by $\pi_t^i\dff\P(\T_i=\H_1\med \F_t^i)$. Invoking the independence among the observations $\{X^i_1,\dots,X_t^i\}$ provides
\begin{equation}\label{eq:pi}
\pi_t^i=\left[1+\frac{1-\epsilon_n}{\epsilon_n}\prod_{u=1}^t\frac{f_0(X_u^i)}{f_1(X_u^i)}\right]^{-1}\ .
\end{equation}
By defining the likelihood ratio
\begin{equation}\label{eq:Lambda}
\Lambda_t^i\dff\prod_{u=1}^t\frac{f_0(X_u^i)}{f_1(X_u^i)}\ ,
\end{equation}
we have
\begin{equation}\label{eq:pi2}
\pi_t^i=\left[1+\frac{1-\epsilon_n}{\epsilon_n}\;\Lambda_t^i\right]^{-1}\ ,
\end{equation}
and the actions $\rm A_1,\;A_2$, and $A_3$ can be formalized as follows.
\begin{enumerate}
\item [$\rm A_1$:] At the stopping time $\tau$ identify the set $\U\subseteq \L_{\tau}$ as the detector's decision according to
\begin{eqnarray}\label{eq:U}
\nonumber \U & = & \arg\max_{\U\subseteq\L_{\tau}:\;|\U|=T_n}\P\left(\forall i\in \U:\;\T_i=\H_1\med \{\F^i_{\tau}:\;i\in\L_{\tau}\}\right)\\
\nonumber & = & \arg\max_{\U\subseteq\L_{\tau}:\;|\U|=T_n}\prod_{i\in\U}\P\left(\T_i=\H_1\med \F^i_{\tau}\right)\\
\nonumber & = & \arg\max_{\U\subseteq\L_{\tau}:\;|\U|=T_n}\prod_{i\in\U}\pi^i_{\tau}\\
& = & \mbox{the indices of the $T_n$ smallest elements of the set } \{\Lambda^i_{\tau}:\;i\in\L_{\tau}\}\ .
\end{eqnarray}
\item [$\rm A_2$:] At time $t$ the decision is to further measure the same set of sequences (i.e. $\L_{t+1}=\L_t$), and set
\begin{equation}\label{eq:A2}
\Lambda^i_{t+1}=\Lambda_t^i\cdot\frac{f_0(X^i_{t+1})}{f_1(X^i_{t+1})}\ .
\end{equation}
\item [$\rm A_3$:] At time $t$ the decision is to refine $\L_t$ and the set $\L_{t+1}$ is obtained as
\begin{eqnarray}\label{eq:A3}
\nonumber \L_{t+1} & = & \arg\max_{\L\subseteq\L_t:\;|\L|=\bar\alpha}\P\left(\forall i\in \L:\;\T_i=\H_1\med \{\F^i_{t}:\;i\in\L_{t}\}\right)\\
& = & \mbox{the indices of the $\bar\alpha$ smallest elements of the set } \{\Lambda^i_{t}:\;i\in\L_{t}\}\ ,
\end{eqnarray}
where by recalling the description of the refinement action we can write
\begin{equation}\label{eq:alpha_bar}
    \bar\alpha\;\dff\;\lfloor\alpha(|\L_t|-T_n)\rfloor\;+\;T_n\ .
\end{equation}

\end{enumerate}

\subsection{Optimal Sampling}
\label{sec:optimal}

Characterizing the experimental design and decision rules for performing quick search involves an interplay between two performance measures, one being the aggregate number of observations made and the other being the frequency of erroneous detection. Any improvement in either of these performance measure penalizes the other one and the optimal design of such procedures typically involves optimizing a tradeoff between them. Furthermore, once the optimal aggregate number of observations is determined, we also need to ascertain how these observations should be split and allotted to the refinement and detection actions. For a given stopping time $\tau$ and a given sequence of switching functions $\bar\psi(\tau)\dff\{\psi(1),\psi(2),\dots,\psi(\tau-1)\}$, the probability of erroneous detection, that is the probability that the detected sequences includes a normal event (generated by $F_1$), is
\begin{equation}\label{eq:Pn}
\PP_n(\tau,\bar\psi(\tau))\;\dff\;\P\Big(\big|\{i\in\U:\;\T_i=\H_0\}\big|\neq 0\Big)\ .
\end{equation}
Our goal is to minimize this above detection error probability over all possible stopping times $\tau$, all switching rules $\bar\psi$, and all possible allocations of the observations to refinement and detection actions subject to two {\em hard} constraints. One constraint incorporates the impact of the aggregate number of observations made, and the other one captures the cost of the refinement actions. Specifically, for the first constraint the aggregate number of observations normalized by the number of sequences is constrained to be less than a pre-specified value $S$. As the second constraint, the number of refinement actions must also be smaller than a pre-specified value $K$. This latter constraint captures the cost incurred by each refinement action, which is the permanent loss of the sequences discarded after the refinement actions. This optimization problem can be formalized as
\begin{equation}\label{eq:P}
\sP_n(S,K)\dff\left\{
\begin{array}{ll}
\inf_{\tau,\bar\psi(\tau)} & \PP_n(\tau,\bar\psi(\tau))\vspace{.1 in}\\
{\rm s. t. } & \frac{1}{n}\sum_{t=1}^{\tau-1}|\L_t|\leq S\vspace{.1 in}\\
 & \sum_{t=1}^{\tau}\psi(t)\leq K
\end{array}
\right. \ .
\end{equation}
We solve this problem in the asymptote of large $n$ for two hypothesis testing problems that test the mean and the variance of Gaussian observations, i.e., the problems of form \eqref{eq:H} with the hypothesis $\H_m$, for $m\in\{0,1\}$, are:
\begin{eqnarray}\label{eq:H_Gaussian}
\label{eq:P1}
\begin{array}{llll}
\mbox{\bf 1) Gaussian mean:}&\hspace*{.5in}\mbox{for }m\in\{0,1\}:\quad &\H_m:\quad & X_j^i\sim\mathcal{N}(\mu_m,1)\ .\\
\mbox{\bf 2) Gaussian variance:}&\hspace*{.5in}\mbox{for }m\in\{0,1\}:\quad &\H_m:\quad & X_j^i\sim\mathcal{N}(0,A_m)\ .
\end{array}
\end{eqnarray}
In this paper for the asymptotically large values of $n$ we characterize:
\begin{enumerate}
\item the optimal stopping time and sampling process (through designing the optimal switching sequence) that maximize the detection reliability subject to the {hard} constraints on the sampling resources and refinement actions;
\item the asymptotic detection error $\sP_n(S,K)$ for any given $S$ and $K$;
\item the minimum distance between the unknown pair of parameters in each test such that the two hypotheses are guaranteed to be distinguished perfectly, i.e., $\sP_n(S,K) \xrightarrow{\tiny n\rightarrow\infty}0$;
\item efficiencies of the proposed sequential and adaptive sampling procedure with respect to that of the non-adaptive procedure that does not involve any refinement action and performs detection directly, i.e., $K=0$; and
\item comparisons with a few relevant tests.
\end{enumerate}
Note that among the three actions, action A$_2$ (observation) concentrates on accumulating further evidence about the statistical behavior of the sequences under observation. Action A$_3$ (refinement) is intended to purify the set of the sequences on which we ultimately perform the detection action. Each iteration of the refinement actions  monitors the events retained by the previous iterations and eliminates those deemed least-likely to be rare events. The core idea of the refinement action is that it is relatively easy to identify sequences drawn from $F_1$ with low-quality measurements as the events of interest occur rarely ($\epsilon_n$ is small). After iteratively performing actions A$_2$ and A$_3$, we will have a more condensed proportion of the desired sequences to the non-desired ones. The sequences retained after the last action A$_3$ are further observed and finally fed to the detector in order to identify $T_n$ sequences of interest.

Characterizing the detection error rate $\PP(\tau,\bar\psi(\tau))$, as a result, depends on 1) the evolution of the proportion of the number of the normal and rare events throughout the refinement actions, and 2) the quality of the detector. For any time $t\in\{1,\dots,\tau\}$ let us partition the set $\L_t$, which includes the indices of the sequences observed at time $t$, into two disjoint sets $\L^0_t$ and $\L^1_t$ as
\begin{equation*}
\L^m_t\dff\{i\in\L_t:\;\T_i=\H_m\},\quad\mbox{for }\;\;m\in\{0,1\}\ .
\end{equation*}
Also let us define $n_t\dff|\L^1_t|$ and $\bar{n}_t\dff|\L^0_t|$ for all $t\in\{1,\dots,\tau\}$. As will be made clear in the subsequent sections, implementing action $\rm A_1$ (detection) necessitates obtaining the low-order statistics of the sets $\{\Lambda^{\tau}_i:\;i\in\L^0_t\}$ and $\{\Lambda^{\tau}_i:\;i\in\L^1_t\}$ with sample sizes $n_{\tau}$ and $\bar{n}_{\tau}$, respectively. Evaluating the dynamics of the refinement action ($\rm A_3$) additionally requires analyzing the high-order statistics of the sets $\{\Lambda_t^i:\;i\in\L^0_t\}$ and the low-order statistics of the sets $\{\Lambda_t^i:\;i\in\L^1_t\}$ for all $t\in\{1,\dots, \tau\}$. Moreover, $n_t$ and $\bar{n}_t$ are also random variables that change (reduce) after each refinement action. Analyzing the evolution of the set sizes $n_t$ and $\bar{n}_t$ as well as the necessary order statistics depends on the distributions $F_0$ and $F_1$.

\section{Preliminaries}
\label{sec:pre}

In this section we provide some definitions, notation, and basic results on asymptotic statistical behavior of order statistics that we frequently use throughout the rest of the paper. Given a sample of $m$ random variables $Y_1,\dots,Y_m$, we denote the corresponding sequence of order statistics by $Y_{1:m},\dots,Y_{m:m}$, where $Y_{r:m}$ is the $r^{th}$ order statistic. When the $Y_i$'s are independent and drawn from the same parent distribution with cdf $G$, the cdf of $Y_{r:m}$, denoted by $G_{r:m}$, is given by
\begin{equation*}
    G_{r:m}(y)=\sum_{k=r}^m\;{m\choose k}\;[G(y)]^k\;[1-G(y)]^{m-k}\ .
\end{equation*}
When $m$ tends to infinity the limit distributions $G_{1:m}$ and $G_{m:m}$ become degenerate as
\begin{equation*}
\lim_{m\rightarrow\infty}G_{1:m}(y)=\mathds{1}_{\{G(y)>0\}}\quad \mbox{and}\quad \lim_{m\rightarrow\infty}G_{m:m}(y)=\mathds{1}_{\{G(y)<1\}}\ .
\end{equation*}
A common approach to avoid degeneracy is to identify two {\em appropriate} sets of affine transformations of $Y_1,\dots,Y_m$
\begin{eqnarray}\label{eq:WL}
W_i & = & a_m\;+\;b_m\;Y_i\ ,\quad\mbox{for}\quad i\in\{1,\dots,m\}\ ,\\
\label{eq:WU}\mbox{and}\quad
\bar W_i & = & c_m\;+\;d_m\;Y_i\ ,\quad\mbox{for}\quad i\in\{1,\dots,m\}\ ,
\end{eqnarray}
that have non-generate limit distributions. Specifically, the cdfs of the order statistics $W_{1:m}$ and $\bar W_{1:m}$, denoted by $Q_{1:m}(\cdot;m)$ and $\bar Q_{1:m}(\cdot;m)$, respectively, satisfy
\begin{eqnarray}\label{eq:L}
&&\lim_{m\rightarrow\infty}Q_{1:m}(w;m)\;=\;L(w)\ ,\quad \forall w\in\mathbb{R}\ ,
\\
\label{eq:H2} \mbox{and}\quad
&&\lim_{m\rightarrow\infty}\bar Q_{1:m}(w;m)\;=\;H(w)\ ,\quad \forall w\in\mathbb{R}\ ,
\end{eqnarray}
for some non-degenerate cdfs $L$ and $H$. The precise characterizations of $L$ and $H$ depend on the transformation coefficients $a_m$, $b_m$, $c_m$, and $d_m$. The random variables $\{W_{i}\}_{i=1}^m$ and $\{\bar W_i\}_{i=1}^m$, in general, have completely different statistical behavior in low, central, and high order statistics. In the analysis in this paper we need the distributions of all kind of orders, which are formally defined next.
\begin{definition}\label{def:low_high}
The order statistic $Y_{r:m}$ is said to be of low order and the order statistic $Y_{m-r+1:m}$ is said to be of high order
iff $\lim_{m\rightarrow\infty}\frac{r}{m}=0$.
\end{definition}\label{def:central}
\begin{definition}
The order statistic $Y_{r:m}$ is said to be of central order iff $\exists\; \zeta\in(0,1)$ such that
\begin{equation*}
\lim_{m\rightarrow\infty}\sqrt{m}\left(\frac{r}{m}-\zeta\right)=0\ .
\end{equation*}
\end{definition}
\begin{definition}[Domain of attraction]
A given parent cdf $G$ for $\{Y_i\}_{i=1}^m$ is said to belong to the minimal domain of attraction of a cdf $L$ if there exists at least one pair of sequences $\{a_m\}_{m=1}^\infty$ and $\{b_m\}_{m=1}^\infty$ such that the sequence of random variables $\{W_i\}_{i=1}^m$ defined in \eqref{eq:WL} satisfies \eqref{eq:L}. Similarly, we say that $G$ belongs to the maximal domain of attraction of a cdf $H$ if for at least one pair of sequences $\{c_m\}_{m=1}^\infty$ and $\{d_m\}_{m=1}^\infty$ the sequence of random variables $\{\bar W_i\}_{i=1}^m$ defined in \eqref{eq:WU} satisfies \eqref{eq:H2}.
\end{definition}
Finding the possible cdfs $L$ and $H$ corresponding to any given parent distribution $F$ has a rich literature; cf. \cite{Fisher:Cambridge28} or \cite{Galambos:book}. The well-known result is that only one parametric family is possible for $L$ and only one family for $H$:
\begin{theorem}[von-Mises family of distribution]\label{thm:minima}
The only non-degenerate family of distributions satisfying \eqref{eq:L} is of the form
\begin{equation}\label{eq:Lk}
L(w)=1-\exp\left\{-\left[1+\kappa\left(\frac{w-\lambda}{\sigma}\right)\right]^{1/\kappa}\right\},\quad 1+\kappa\left(\frac{w-\lambda}{\sigma}\right)\geq 0\ ,
\end{equation}
and the only non-degenerate family of distributions satisfying \eqref{eq:H2} is of the form
\begin{equation}\label{eq:Hk}
H(w)=\exp\left\{-\left[1-\kappa\left(\frac{w-\lambda}{\sigma}\right)\right]^{1/\kappa}\right\},\quad 1-\kappa\left(\frac{w-\lambda}{\sigma}\right)\geq 0\ .
\end{equation}
\end{theorem}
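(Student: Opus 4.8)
The statement to prove is Theorem~\ref{thm:minima}, the classical extreme value theorem characterizing the von~Mises family as the only non-degenerate limit laws for extreme order statistics. The plan is to establish the result for the maximal domain of attraction (the cdf $H$ of $\bar W_{1:m}$ governing high-order statistics, equivalently maxima after the natural sign flip) and then obtain the dual statement for $L$ by applying the argument to $-Y_i$. First I would reduce the problem to a stability (functional) equation. If $\bar Q_{1:m}(w;m)\to H(w)$ at all continuity points of a non-degenerate $H$, then for any fixed integer $k\geq 1$ the sample of size $km$ can be split into $k$ blocks of size $m$, giving $\bar Q_{1:km}(w;km)=\big(\bar Q_{1:m}(w;m)\big)^k$ along the subsequence $km$; passing to the limit and invoking the convergence-of-types lemma (Khinchin), one shows there exist constants $\gamma_k>0$ and $\delta_k\in\mathbb{R}$ with
\begin{equation*}
H(w)^k \;=\; H(\gamma_k w + \delta_k)\ ,\qquad \forall w\in\mathbb{R}\ .
\end{equation*}
The convergence-of-types step is what forces $H$ to be non-degenerate in a strong sense and pins down that the normalizing sequences $c_m,d_m$ are regularly varying; this is the technical heart and the step I expect to be the main obstacle, since it requires care in handling continuity points and in ruling out the degenerate solutions.

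Next I would solve this Hamel-type functional equation. Writing $H(w)=\exp\{-e^{-\phi(w)}\}$ wherever $0<H(w)<1$ (so $\phi$ is a continuous, strictly increasing function on the interior of the support), the equation becomes $\phi(\gamma_k w + \delta_k)=\phi(w)+\ln k$. Extending from integer $k$ to all real $s>0$ by a monotonicity/density argument yields a one-parameter group of affine maps $w\mapsto \gamma_s w+\delta_s$ under which $\phi$ shifts by $\ln s$. Standard analysis of such groups (e.g.\ Cauchy's equation for the continuous homomorphism $s\mapsto \ln\gamma_s$) splits into two cases: either $\gamma_s\equiv 1$ for all $s$, forcing $\phi$ to be affine and hence $H$ to be of the Gumbel type (the limiting case $\kappa\to 0$ of \eqref{eq:Hk}), or $\gamma_s = s^{\rho}$ for some $\rho\neq 0$, in which case solving for $\phi$ and re-parametrizing with $\kappa=\rho$, together with a location-scale shift to absorb $\lambda,\sigma$, produces exactly the form
\begin{equation*}
H(w)=\exp\left\{-\left[1-\kappa\left(\frac{w-\lambda}{\sigma}\right)\right]^{1/\kappa}\right\}\ ,
\end{equation*}
with the stated support constraint $1-\kappa(w-\lambda)/\sigma\geq 0$ coming from the requirement $0\le H\le 1$ and the boundary behavior of $\phi$. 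One also checks that the Gumbel case is the pointwise limit $\kappa\to 0$ of this family, so it is subsumed in the same parametric description.

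Finally, for the dual statement about $L$: observe that $W_{1:m}=a_m+b_m Y_{1:m}=a_m - b_m\,(-Y)_{m:m}$ when $b_m>0$, so the minimum order statistics of $\{Y_i\}$ correspond to the maximum order statistics of $\{-Y_i\}$; applying the result just proved to the parent distribution of $-Y_i$ and translating back (which sends $\kappa\mapsto\kappa$, $w\mapsto -w$ up to the affine normalization) yields
\begin{equation*}
L(w)=1-\exp\left\{-\left[1+\kappa\left(\frac{w-\lambda}{\sigma}\right)\right]^{1/\kappa}\right\}\ ,
\end{equation*}
with support $1+\kappa(w-\lambda)/\sigma\geq 0$, completing the proof. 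Throughout I would cite \cite{Galambos:book} (or \cite{Fisher:Cambridge28}) for the convergence-of-types lemma and the measurable-solutions-of-Cauchy's-equation facts rather than reproving them, keeping the write-up to the reduction, the functional equation, and its solution.
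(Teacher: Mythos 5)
Your outline is correct and is essentially the classical Fisher--Tippett--Gnedenko argument; the paper itself offers no proof here but simply defers to \cite{Fisher:Cambridge28, Galambos:book}, and those references prove the theorem exactly as you describe (max-stability via block decomposition, convergence of types, solution of the resulting functional equation, and the minimum case by the reflection $Y\mapsto -Y$). The only point needing care in a full write-up is your displayed identity $\bar Q_{1:km}(w;km)=(\bar Q_{1:m}(w;m))^k$: as written the two sides use different normalizing constants ($c_{km},d_{km}$ versus $c_m,d_m$), so the block identity holds only for the unnormalized maxima, and it is precisely the convergence-of-types lemma --- which you do invoke --- that converts the two limits under different normalizations into the relation $H(w)^k=H(\gamma_k w+\delta_k)$.
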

\begin{proof}
See \cite{Fisher:Cambridge28, Galambos:book}.
\end{proof}
Besides the limit distributions of minima and maxima, in this paper we also need to find the asymptotic non-degenerate distributions corresponding to low and central order statistics. The following theorems characterize the asymptotic non-degenerate distributions for these order statistics as functions of the limit distributions of minima.
\begin{theorem}[Asymptotic distributions of low-order statistics]\label{th:low}
Let $G$ be a continuous cdf that belongs to the minimal domain of attraction of a cdf $L$ with the associated pair of sequences $\{a_m\}_{m=1}^\infty$ and $\{b_m\}_{m=1}^\infty$. Then for $r=o(m)$ we have
\begin{equation}\label{eq:Lr_dist}
\lim_{m\rightarrow\infty} Q_{r:m}(w;m)=
1-\mathds{1}_{\{L(w)<1\}}\cdot [1-L(w)]\sum_{i=0}^{r-1}\frac{[-\ln(1-L(w))]^i}{i!}\ ,\quad\forall w\in\mathbb{R}\ .
\end{equation}
\end{theorem}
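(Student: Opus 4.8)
The plan is to derive the limiting cdf of the $r$-th order statistic $W_{r:m}$ (equivalently $Y_{r:m}$ after the affine normalization) directly from the exact formula $Q_{r:m}(w;m)=\sum_{k=r}^{m}\binom{m}{k}[Q(w;m)]^k[1-Q(w;m)]^{m-k}$, where $Q(\cdot;m)$ is the common cdf of the normalized variables $W_i=a_m+b_mY_i$. The key observation is that, by the domain-of-attraction hypothesis, $Q_{1:m}(w;m)=1-[1-Q(w;m)]^m\to L(w)$ pointwise, which is equivalent to the statement that
\begin{equation*}
m\,Q(w;m)\;\longrightarrow\;-\ln\bigl(1-L(w)\bigr)\ ,\qquad\forall\,w\ \text{with}\ L(w)<1\ ,
\end{equation*}
and $m\,Q(w;m)\to\infty$ whenever $L(w)=1$. (Here I am using the elementary fact that $1-(1-p_m)^m\to 1-e^{-\lambda}$ iff $m p_m\to\lambda\in[0,\infty]$.) So the whole proof reduces to a Poisson-type limit for the binomial tail with success probability $Q(w;m)\sim \lambda(w)/m$ where $\lambda(w)\dff-\ln(1-L(w))$.

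First I would fix $w$ with $L(w)<1$ and write $G_{r:m}$'s complement as a binomial \emph{lower} tail: $Q_{r:m}(w;m)=\P(\mathrm{Bin}(m,Q(w;m))\ge r)=1-\P(\mathrm{Bin}(m,Q(w;m))\le r-1)$. Since $Q(w;m)\to 0$ and $m\,Q(w;m)\to\lambda(w)$, the classical Poisson limit theorem for triangular arrays (law of small numbers) gives $\mathrm{Bin}(m,Q(w;m))\Rightarrow \mathrm{Poisson}(\lambda(w))$. Because $r$ is fixed — or more generally $r=o(m)$, in which case one still needs the limit only through the fixed first $r$ terms — evaluating the Poisson cdf at $r-1$ yields
\begin{equation*}
\lim_{m\to\infty}Q_{r:m}(w;m)=1-e^{-\lambda(w)}\sum_{i=0}^{r-1}\frac{\lambda(w)^i}{i!}
=1-[1-L(w)]\sum_{i=0}^{r-1}\frac{[-\ln(1-L(w))]^i}{i!}\ ,
\end{equation*}
which is exactly \eqref{eq:Lr_dist} on the event $\{L(w)<1\}$. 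Second, I would handle the degenerate range: if $L(w)=1$ then $m\,Q(w;m)\to\infty$, so $\mathrm{Bin}(m,Q(w;m))\to\infty$ in probability and $\P(\mathrm{Bin}\le r-1)\to 0$, giving $Q_{r:m}(w;m)\to 1$; this matches the right-hand side of \eqref{eq:Lr_dist} because the indicator $\mathds{1}_{\{L(w)<1\}}$ kills the subtracted term. Continuity of $G$ (hence of $Q(\cdot;m)$) is what guarantees $Q(w;m)\in(0,1)$ for the relevant $w$ and legitimizes the binomial representation with no atoms at $w$.

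The one point that needs real care — and which I expect to be the main obstacle — is making the Poisson approximation uniform enough to cover the case $r=o(m)$ rather than $r$ fixed, since then the number of summed terms grows with $m$. The clean way around this is a quantitative bound: by Le Cam's inequality the total variation distance between $\mathrm{Bin}(m,Q(w;m))$ and $\mathrm{Poisson}(mQ(w;m))$ is at most $m\,Q(w;m)^2=(mQ(w;m))\cdot Q(w;m)\to 0$, so $\P(\mathrm{Bin}(m,Q(w;m))\le r-1)-\P(\mathrm{Poisson}(mQ(w;m))\le r-1)\to 0$ \emph{uniformly in $r$}; then one only has to check that the Poisson tail $\P(\mathrm{Poisson}(mQ(w;m))\le r-1)$ converges to $\P(\mathrm{Poisson}(\lambda(w))\le r-1)$, which follows from $mQ(w;m)\to\lambda(w)$ together with the fact that $r=o(m)$ forces $r-1$ to eventually exceed any fixed threshold only in the regime where $\lambda(w)<\infty$ keeps the limiting Poisson mass concentrated on a bounded set — so the finite-sum formula with upper index $r-1$ and the infinite Poisson cdf agree in the limit. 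Assembling these pieces (binomial$\to$Poisson via Le Cam, $mQ(w;m)\to -\ln(1-L(w))$ from the minimal-domain hypothesis, and the $\{L(w)=1\}$ degeneracy handled separately) gives \eqref{eq:Lr_dist} for every $w\in\mathbb R$.
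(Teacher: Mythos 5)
Your argument is correct and is, in substance, the standard proof of the result the paper simply cites (Theorem 8.4.1 of Arnold et al.): represent $Q_{r:m}(w;m)$ as the upper tail of a $\mathrm{Bin}(m,Q(w;m))$ count of exceedances, use the domain-of-attraction hypothesis to get $mQ(w;m)\to-\ln(1-L(w))$, and pass to the Poisson limit; your Le Cam bound is exactly the right device for making the approximation uniform in $r$ so that the statement survives $r=o(m)$ growing (where the ``limit'' must be read as asymptotic equality, since the right-hand side still depends on $r$). The only soft spot is the degenerate case $L(w)=1$ with $r\to\infty$: there $mQ(w;m)\to\infty$ does not by itself force $\P(\mathrm{Bin}(m,Q(w;m))\le r-1)\to 0$ unless $r$ is also dominated by $mQ(w;m)$, but this is really a looseness in the theorem's statement rather than in your proof, and it is vacuous for the Gaussian and chi-squared parent laws actually used in the paper, for which $L(w)<1$ everywhere.
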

\begin{proof}
See \cite[Theorem 8.4.1]{Arnold:Book}.
\end{proof}
\begin{theorem}[Asymptotic distributions of central order statistics]\label{th:central}
Let a continuous cdf $G$ with associated pdf $g$ be the distribution of $m$ i.i.d. random variables. Let $\zeta$ be a real number in the interval $(0,1)$ such that $g(G^{-1}(\zeta))\neq 0$. Then central order statistic $r=\lceil m\zeta\rceil$ is distributed as
\begin{equation*}
Y_{r:m}\sim \mathcal{N}\left(G^{-1}(\zeta)\;,\;\frac{\zeta(1-\zeta)}{m\left[g(G^{-1}(\zeta))\right]^2}\right)\ .
\end{equation*}
\end{theorem}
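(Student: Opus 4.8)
The plan is to derive the asymptotic normality of a central order statistic from the exact binomial-type representation of its cdf together with a local (de Moivre--Laplace) approximation. First I would set $p = G(y)$ and recall that for i.i.d.\ samples the event $\{Y_{r:m}\leq y\}$ is exactly the event that at least $r$ of the $m$ observations fall in $(-\infty,y]$, so that $G_{r:m}(y) = \P(B_m \geq r)$ where $B_m \sim \mathrm{Binomial}(m, G(y))$. Fixing $\zeta\in(0,1)$ with $g(G^{-1}(\zeta))\neq 0$, I would parametrize points near the $\zeta$-quantile as $y = y_m(x) \dff G^{-1}(\zeta) + \frac{x}{\sqrt m}\,\sigma_\zeta$, where $\sigma_\zeta \dff \frac{\sqrt{\zeta(1-\zeta)}}{g(G^{-1}(\zeta))}$, and then show that $\P(Y_{r:m}\leq y_m(x)) \to \Phi(x)$, which is exactly the claimed $\mathcal N\!\left(G^{-1}(\zeta), \frac{\zeta(1-\zeta)}{m[g(G^{-1}(\zeta))]^2}\right)$ statement.

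The key analytic step is a two-term Taylor expansion of $G$ at $G^{-1}(\zeta)$: since $G$ is differentiable there with derivative $g(G^{-1}(\zeta))$, we get $G(y_m(x)) = \zeta + g(G^{-1}(\zeta))\cdot \frac{x\sigma_\zeta}{\sqrt m} + o(1/\sqrt m) = \zeta + \frac{x\sqrt{\zeta(1-\zeta)}}{\sqrt m} + o(1/\sqrt m)$. Writing $p_m \dff G(y_m(x))$, the mean of $B_m$ is $mp_m = m\zeta + x\sqrt{m\zeta(1-\zeta)} + o(\sqrt m)$ and its variance is $mp_m(1-p_m) = m\zeta(1-\zeta) + o(m)$. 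With $r = \lceil m\zeta\rceil = m\zeta + O(1)$, the standardized threshold is
\begin{equation*}
\frac{r - mp_m}{\sqrt{mp_m(1-p_m)}} = \frac{m\zeta + O(1) - m\zeta - x\sqrt{m\zeta(1-\zeta)} + o(\sqrt m)}{\sqrt{m\zeta(1-\zeta)} + o(\sqrt m)} \longrightarrow -x\ .
\end{equation*}
Then the central limit theorem for the (triangular-array) Binomial gives $\P(B_m \geq r) = \P\!\left(\frac{B_m - mp_m}{\sqrt{mp_m(1-p_m)}} \geq \frac{r - mp_m}{\sqrt{mp_m(1-p_m)}}\right) \to \P(Z \geq -x) = \Phi(x)$ for a standard normal $Z$; the continuity of $\Phi$ together with $\frac{r-mp_m}{\sqrt{mp_m(1-p_m)}}\to -x$ lets us replace the moving threshold by its limit.

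The main obstacle is making the triangular-array CLT rigorous when $p_m \to \zeta$ rather than being fixed: one cannot invoke the classical i.i.d.\ CLT verbatim since the success probability drifts with $m$. The clean fix is the Lindeberg--Feller CLT for triangular arrays (or equivalently the Lyapunov condition, which is immediate for bounded Bernoulli summands since $\E|Y - p_m|^3 \leq p_m(1-p_m) \to \zeta(1-\zeta) > 0$ bounds the third absolute moment relative to the variance), which delivers asymptotic normality uniformly as long as $p_m$ stays bounded away from $0$ and $1$ — guaranteed here because $p_m \to \zeta \in (0,1)$. A secondary, more routine point is controlling the $O(1)$ discrepancy between $r = \lceil m\zeta\rceil$ and $m\zeta$ and the $o(\sqrt m)$ remainder from the Taylor expansion; both are negligible after dividing by $\sqrt{mp_m(1-p_m)} = \Theta(\sqrt m)$, so they do not affect the limit. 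Once the convergence $\P(Y_{r:m}\leq y_m(x))\to\Phi(x)$ is established for every fixed $x\in\mathbb R$, it is precisely the assertion that $Y_{r:m}$ is asymptotically $\mathcal N\!\left(G^{-1}(\zeta), \frac{\zeta(1-\zeta)}{m[g(G^{-1}(\zeta))]^2}\right)$, completing the proof.
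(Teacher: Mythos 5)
Your argument is correct, but it cannot be compared to the paper's own proof in the usual sense: the paper does not prove this theorem at all, it simply cites \cite[Theorem 8.5.1]{Arnold:Book}. What you have written is, in effect, the standard proof of that cited result: the exact identity $\P(Y_{r:m}\leq y)=\P(B_m\geq r)$ with $B_m\sim\mathrm{Binomial}(m,G(y))$, the local parametrization $y_m(x)=G^{-1}(\zeta)+x\sigma_\zeta/\sqrt m$, the expansion $G(y_m(x))=\zeta+x\sqrt{\zeta(1-\zeta)}/\sqrt m+o(1/\sqrt m)$, and a triangular-array normal approximation to conclude $\P(Y_{r:m}\leq y_m(x))\to\Phi(x)$. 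The two technical points you flag are handled correctly: the Lyapunov condition for the drifting Bernoulli array is immediate since $\mathbb{E}|X-p_m|^3\leq p_m(1-p_m)$ and $p_m\to\zeta\in(0,1)$, and the $O(1)$ and $o(\sqrt m)$ discrepancies wash out after dividing by $\sqrt{mp_m(1-p_m)}=\Theta(\sqrt m)$; passing from the moving threshold to its limit is justified because the limit law $\Phi$ is continuous (Polya's theorem gives uniform convergence of the cdfs). The one caveat worth making explicit is regularity at the quantile: the hypothesis as stated (a cdf $G$ with ``associated pdf'' $g$ and $g(G^{-1}(\zeta))\neq 0$) does not by itself guarantee that $G$ is pointwise differentiable at $G^{-1}(\zeta)$ with derivative $g(G^{-1}(\zeta))$, which your Taylor step requires; the standard formulation (and the version in Arnold et al.) assumes $g$ is continuous and positive at $G^{-1}(\zeta)$. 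Under that mild additional assumption your proof is complete and is exactly the argument the cited reference delegates to.
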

\begin{proof}
See \cite[Theorem 8.5.1]{Arnold:Book}.
\end{proof}
Finally, we define asymptotic equivalence and equality as follows.
\begin{definition}[Asymptotic equivalence]\label{def:asym}
Two sequences $\{a_m\}$ and $\{b_m\}$ are said to be {\em asymptotically equivalent}, denoted by $a_m\aeq b_m$, when $\lim_{m\rightarrow\infty}\frac{a_m}{b_m}=1$. We also say that $\{a_m\}$ and $\{b_m\}$ are {\em asymptotically equal}, denoted by $a_m\doteq b_m$, when $\lim_{n\rightarrow\infty}(a_n-b_n)=0$.
\end{definition}
\begin{definition}[Asymptotic dominance]
A Sequence $\{a_m\}$ is said to be asymptotically dominated by $\{b_m\}$, denoted by $a_m=o(b_m)$, when $\lim_{m\rightarrow\infty}\frac{a_m}{b_m}=0$. Also, $\{a_m\}$ is said to be asymptotically dominating $\{b_m\}$, denoted by $a_m=\omega(b_m)$, when $b_m=o(a_m)$
\end{definition}
\section{Optimal Sampling}
\label{sec:optimal}
In this section we determine the optimal stopping time $\tau$ and the associated optimal switching sequence $\bar\psi(\tau)=\{\psi(1),\dots,\dots,\psi(\tau-1)\}$. For this purpose, we initially characterize the error probability $\PP(\tau,\bar\psi(\tau))$ defined in \eqref{eq:Pn} for any given stopping time $\tau$ and switching sequence $\bar\psi(\tau)$ and then optimize it over the valid choices of $\tau$ and $\bar\psi(\tau)$ such that the constraints on the aggregate sampling budget and the number of refinement actions are satisfied. As stated earlier, characterizing the detection performance $\PP(\tau,\bar\psi(\tau))$ for a given stopping time $\tau$ and a switching sequence $\bar\psi(\tau)$ involves analyzing 1) the evolution of the proportion of the number of sequences generated by $F_0$ and $F_1$ throughout the refinement actions, and 2) the quality of the detection action. Analyzing both aspects heavily depend on the underlying distributions $F_0$ and $F_1$ and are treated separately for two the problems given in \eqref{eq:H_Gaussian}.
\subsection{Gaussian Mean}
\label{sec:mean}
The hypothesis-testing problem in this case is
\begin{equation}\label{eq:H_mean}
    \begin{array}{cc}
      \H_0: & X^i_j\sim \mathcal{N}(\mu_0,1), \quad j=1,2,\dots\\
      \H_1: & X^i_j\sim \mathcal{N}(\mu_1,1), \quad j=1,2,\dots
    \end{array}
\end{equation}
where $\mu_0>\mu_1$ and $\mu_0,\mu_1\in\mathbb{R}$. Under this setting, the likelihood ratio at time $t$ for the sequences $i\in\L_t$ is
\begin{equation}\label{eq:Lambda_2}
\Lambda_t^i \;=\; \prod_{u=1}^t\frac{f_0(X_u^i)}{f_1(X_u^i)}\;=\; \exp\left\{(\mu_0-\mu_1)\sum_{u=1}^t\;X_u^i\right\}\cdot \prod_{u=1}^t\exp\left\{\frac{\mu_1^2-\mu_0^2}{2}\right\} \ .
\end{equation}
By defining
\begin{equation}\label{eq:Z}
    \forall t\in\{1,\dots,\tau\}\;\;\mbox{and }\;\forall i\in\L_t:\quad Z_t^i\dff\sum_{u=1}^tX_u^i\ ,
\end{equation}
the detection and refinement actions formalized in \eqref{eq:U} and \eqref{eq:A3}, respectively, are equivalently given by
\begin{eqnarray}\label{eq:U2}
    \U&=&\mbox{the indices of the $T_n$ smallest elements of the set } \{Z^i_{\tau}:\;i\in\L_{\tau}\}\ ,\mbox{ and}\\
    \L_{t+1} & = &\mbox{the indices of the $\lfloor\alpha|\L_t|\rfloor$ smallest elements of the set } \{Z^i_{t}:\;i\in\L_{t}\}\ .
\end{eqnarray}
On the other hand, by invoking the distribution of $X_t^i$ from  \eqref{eq:H_mean} we immediately have
\begin{equation}\label{eq:Z_dist}
    \forall t\in\{1,\dots,\tau\}\;\;\mbox{and }\;\forall i\in\L_t:\quad Z_t^i\med \H_m\;\sim\;{\cal N}\Big(\mu_m\cdot t\;,\;t\Big)\ .
\end{equation}
Furthermore, let us for each $t\in\{1,\dots,\tau\}$ define
\begin{eqnarray}
\label{eq:U0}
&&\forall j\in\{1,\dots, n_t\}:\quad \bar U^t_j\;\dff\;\mbox{the $j^{th}$ smallest element of } \{Z_t^i:\;i\in\L^0_t\}\ ,\\
\label{eq:U1}
\mbox{and} && \forall j\in\{1,\dots, \bar{n}_t\}:\quad U^t_j\;\dff\;\mbox{the $j^{th}$ smallest element of } \{Z_t^i:\;i\in\L^1_t\}\ .
\end{eqnarray}
Given these definitions, the probability $\PP(\tau,\bar\psi(\tau))$, which is the probability that at least one member of $\U$ is distributed according to $F_1$, can be equivalently written as
\begin{equation}\label{eq:Pn2}
\PP_n(\tau,\bar\psi(\tau))=\P\;(\;|\;\U\cap \L^0_{\tau}\;|\;\geq 0)=\P\;\Big(\;U^{\tau}_{T_n}> \bar U^{\tau}_1\;\Big)\ .
\end{equation}
Assessing this performance measure involves finding the cardinalities and the distributions of the first and the $T_n^{th}$ order statistics of the sets of random variables $\{Z_{\tau}^i:\;i\in\L^0_t\}$ and $\{Z_{\tau}^i:\;i\in\L^1_t\}$, respectively. In order to proceed with analyzing $\PP_n(\tau,\bar\psi(\tau))$, we need the distributions of these order statistics. As the first step, the following lemma determines to what minimal domain of attraction the standard Gaussian distribution belongs. Also, this lemma in conjunction with Theorem~\ref{th:central} determines the distribution of the $T_n^{th}$ order statistic.
\begin{lemma}[Gaussian Minimum]\label{lmm:gaussian_min}
The standard Gaussian distribution belongs to the minimal domain of attraction of
\begin{equation*}
    L(w)=1-\exp\left(-\exp(w-\ln 2\sqrt{\pi})\right)\ ,\qquad\forall w\in\mathbb{R}\ ,
\end{equation*}
corresponding to \eqref{eq:Lk} for $\kappa\rightarrow 0$. The associated affine transformation is characterized by the sequences $\{a_m\}$ and $\{b_m\}$ given b y
\begin{equation*}
 a_m\;\dff\;h(m)\ ,\quad\mbox{and}\quad b_m\;\dff\;\sqrt{h(m)}\ ,\qquad\forall m\in\{2,3,\dots\}\ ,
\end{equation*}
where $h:\{x\in\mathbb{R}\;:\;x>1\}\rightarrow\mathbb{R}^+$ is defined as
\begin{equation}\label{eq:h}
        h(x)=2\ln x-\ln\ln x\ .
\end{equation}
\end{lemma}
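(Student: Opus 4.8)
The plan is to establish that the standard Gaussian cdf $\Phi$ lies in the minimal domain of attraction of $L$ by appealing to the symmetry between minimal and maximal domains of attraction: if $Y\sim\Phi$, then $-Y\sim\Phi$ as well, so it suffices to verify that $\Phi$ lies in the \emph{maximal} domain of attraction of the corresponding extreme-value law, and then reflect. Concretely, I would first recall the classical von~Mises sufficient condition for the maximal domain of attraction of the Gumbel (i.e.\ the $\kappa\to 0$ case of \eqref{eq:Hk}): if $G$ has density $g$, is eventually twice differentiable, $g(y)>0$ near the right endpoint, and
\[
\lim_{y\to\infty}\frac{d}{dy}\left(\frac{1-G(y)}{g(y)}\right)=0\ ,
\]
then $G$ is in the domain with normalizing constants $c_m$ solving $1-G(c_m)=1/m$ and $d_m = g(c_m)$. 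For $G=\Phi$ we have $1-\Phi(y)\sim \phi(y)/y$ by the standard Mills-ratio asymptotics, so $(1-\Phi(y))/\phi(y)\sim 1/y\to 0$ and its derivative behaves like $-1/y^2\to 0$; this verifies the condition.

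The second step is to pin down the normalizing sequences. Solving $1-\Phi(c_m)=1/m$ asymptotically with the Mills ratio gives $c_m^2 \sim 2\ln m$, and a more careful two-term expansion — substituting $c_m^2 = 2\ln m - \ln\ln m - \ln(4\pi) + o(1)$ into the relation $\ln(1-\Phi(c_m)) = -\tfrac{c_m^2}{2} - \ln c_m - \tfrac12\ln(2\pi) + o(1) = -\ln m$ — yields exactly $c_m \aeq \sqrt{h(m)}$ with $h(x)=2\ln x - \ln\ln x$, and then $d_m = \phi(c_m)\cdot(1+o(1)) \aeq c_m \aeq \sqrt{h(m)}$ after absorbing the slowly varying correction into the limit law's location parameter. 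Tracking the constant $\ln 2\sqrt\pi$ carefully through this expansion is what produces the specific shift appearing in the statement of $L$; this bookkeeping of lower-order terms is the one place where care is genuinely required, since sloppiness there changes the location parameter of the limiting Gumbel.

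The final step is the reflection: with $Y_i\sim\Phi$ i.i.d., set $\bar Y_i = -Y_i\sim\Phi$, so $\max_i \bar Y_i = -\min_i Y_i$. Applying the maximal-domain result to the $\bar Y_i$ gives that $d_m\big(\min_i Y_i\big) + a_m'$ converges to a Gumbel for the appropriate constants; rearranging, $a_m + b_m\, Y_{1:m}$ with $a_m = h(m)$ and $b_m = \sqrt{h(m)}$ (the factor $h(m)$ rather than $\sqrt{h(m)}$ in $a_m$ coming from $a_m = b_m c_m$ with $c_m\aeq\sqrt{h(m)}$) has the limiting cdf $L(w) = 1 - \exp(-\exp(w - \ln 2\sqrt\pi))$, which is precisely \eqref{eq:Lk} in the limit $\kappa\to 0$ after identifying the location parameter. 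I expect the main obstacle to be purely computational — carrying enough terms in the inversion of the Mills-ratio asymptotics to recover the exact constant $\ln 2\sqrt\pi$ and to confirm that the residual slowly varying factor in $b_m$ can indeed be discarded without affecting the limit — rather than conceptual, since the domain-of-attraction machinery is entirely classical and cited in Theorem~\ref{thm:minima}.
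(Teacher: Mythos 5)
Your proposal is correct in substance but takes a genuinely different route from the paper. The paper does not invoke the von Mises condition or solve $1-G(c_m)=1/m$ at all: it simply plugs the prescribed constants $a_m=h(m)$, $b_m=\sqrt{h(m)}$ into $1-Q_{1:m}(w;m)=\left[1-G\!\left(\frac{w}{\sqrt{h(m)}}-\sqrt{h(m)}\right)\right]^m$ and evaluates $\lim_{x\to\infty}x\ln\left[1-G\!\left(\tfrac{w}{\sqrt{h(x)}}-\sqrt{h(x)}\right)\right]=-\tfrac{e^w}{2\sqrt\pi}$ directly by L'H\^{o}pital's rule, working with the lower tail of $G$ itself; this is a pure verification that the stated normalization works, with the Mills-ratio asymptotics appearing only implicitly inside the L'H\^{o}pital computation. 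Your route --- reflect to the maximum, check the von Mises/Gumbel criterion, invert $1-\Phi(c_m)=1/m$ to two terms, and track the $\ln 2\sqrt\pi$ shift arising from replacing the canonical $c_m$ (with $c_m^2=2\ln m-\ln\ln m-\ln 4\pi+o(1)$) by $\sqrt{h(m)}$ --- derives the constants rather than verifying them, which is more illuminating about where $h$ and the location shift come from, at the cost of the delicate two-term expansion you correctly flag. One bookkeeping slip to fix: you write the scaling constant as $d_m=g(c_m)$ and later assert $d_m=\phi(c_m)(1+o(1))\aeq c_m$, which cannot hold since $\phi(c_m)\to 0$; the correct normalization is $d_m=\frac{g(c_m)}{1-G(c_m)}=m\,\phi(c_m)\aeq c_m\aeq\sqrt{h(m)}$, so your final identification is right but the intermediate formula is off by the factor $m$. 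With that repaired, both arguments rest on the same tail asymptotic $1-\Phi(y)\sim\phi(y)/y$ and reach the same limit law.
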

\begin{proof}
See Appendix \ref{app:lmm:gaussian_min}.
\end{proof}
In the first step we assess the variations of $n_t$, $\bar n_t$, and their associated ratio throughout the refinement actions. The following lemma sheds light on the tendency of the refinement action (A$_2$) towards retaining {\em almost} all the rare events and discarding {\em almost} $\alpha$ proportion of the normal events.
\begin{lemma}[Refinement Performance]\label{lmm:refinement}
Let $\bar n_t=|\L^0_t|$ and ${n}_t=|\L^1_t|$ denote the number of sequences generated by $F_0$ and $F_1$ that are retained up to time $t$. For any arbitrary $\delta\in (0,1)$ and for sufficiently large $n$, the event
\begin{equation}\label{eq:n_t}
   n_{\tau}\;\geq\; (1-\delta)n_1\
\end{equation}
holds almost surely if
\begin{equation}\label{eq:mu_scaling_ratio0}
    \mu_0-\mu_1 \;=\; \omega\left(n^{-{\varepsilon_n}/2}\right)\ ,
\end{equation}
where $\varepsilon_n$ is defined as
\begin{equation}
{\varepsilon_n}\dff\frac{\ln n\epsilon_n}{\ln n}\ .
\end{equation}
\end{lemma}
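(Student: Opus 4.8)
The plan is to track, through each of the (at most $K$) refinement actions, the number $\bar n_t$ of surviving normal sequences and the number $n_t$ of surviving rare sequences, and to show that with overwhelming probability each refinement retains essentially all rare sequences while discarding its full share of normal ones. First I would set up the state just before a refinement at time $t$: there are $\bar n_t$ normal survivors with $Z^i_t \sim \mathcal N(\mu_0 t, t)$ and $n_t$ rare survivors with $Z^i_t \sim \mathcal N(\mu_1 t, t)$, and the action keeps the $\bar\alpha = \lfloor \alpha(|\mathcal L_t| - T_n)\rfloor + T_n$ sequences with the smallest $Z$-values. Since $\mu_0 > \mu_1$, the rare sequences are precisely the ones with a downward shift of size $(\mu_0-\mu_1)t$. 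Using Lemma~\ref{lmm:gaussian_min}, the lower extreme of the normal pool $\{Z^i_t : i\in\mathcal L^0_t\}$ concentrates around $\mu_0 t - \sqrt{t\,h(\bar n_t)}$, whereas a rare sequence's value sits around $\mu_1 t$ with $\mathcal O(\sqrt t)$ fluctuations. So a rare sequence survives the refinement as long as its value is below the $\bar\alpha$-th order statistic of the combined pool, which (since $T_n = o(n\epsilon_n)$ and $n_t \ll \bar n_t$) is dominated by the low-order statistics of the normal pool, i.e.\ a threshold near $\mu_0 t - \sqrt{t\, h(\bar n_t)}$.

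The key quantitative comparison is therefore: a rare sequence is (almost surely) retained when
\begin{equation*}
(\mu_0 - \mu_1)\, t \;>\; \sqrt{t\, h(\bar n_t)} \;+\; (\text{lower-order fluctuation terms})\ .
\end{equation*}
Here $\bar n_t \le n$, so $h(\bar n_t) = \mathcal O(\ln n)$, and the relevant sample size governing the rare pool is $n_t$, which is $\Theta(n\epsilon_n)$ up to the surviving fraction — this is where $\varepsilon_n = \ln(n\epsilon_n)/\ln n$ enters, since $\ln n_t \asymp \varepsilon_n \ln n$. I would choose $t$ to be the (constant, $n$-independent) per-round observation budget that the optimal allocation assigns, so that $t$ does not vanish; then dividing through by $\sqrt t$ and by $\sqrt{\ln n}$, the survival condition becomes a comparison between $(\mu_0-\mu_1)\sqrt t$ and $\sqrt{h(\bar n_t)} \asymp \sqrt{2\ln n}$, i.e.\ exactly a statement of the form $(\mu_0-\mu_1) = \omega(n^{-\varepsilon_n/2})$ after absorbing constants — I would verify the exponents by writing $n^{-\varepsilon_n/2} = (n\epsilon_n)^{-1/2} \cdot n^{(\varepsilon_n - 1 + \text{stuff})/2}$ type manipulations carefully against the definition of $\varepsilon_n$. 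Next I would make the "almost all rare sequences survive" claim precise: for a single rare sequence, the failure probability is $\P(Z^i_t > \text{threshold})$, a Gaussian tail of the form $\exp(-\Omega((\mu_0-\mu_1)^2 t))$; under \eqref{eq:mu_scaling_ratio0} this tail is $o(1/n)$, so a union bound over the at most $n\epsilon_n$ rare sequences (and over the at most $K$ rounds, $K$ fixed) gives that with probability $\to 1$ no rare sequence is ever discarded, hence $n_\tau = n_1$, which certainly implies $n_\tau \ge (1-\delta) n_1$ for any fixed $\delta$.

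To turn "with probability $\to 1$" into the almost-sure statement quoted (in the paper's $\overset{\rm as}{=}$/$\overset{\rm a}{\ge}$ sense, which is really "eventually almost surely as $n\to\infty$"), I would use that the per-$n$ failure probabilities are summable — a Gaussian tail of order $n^{-1-c}$ times a polynomial-in-$n$ union bound is summable in $n$ — and invoke Borel–Cantelli, so that the event \eqref{eq:n_t} holds for all sufficiently large $n$. One technical point to handle cleanly is that the thresholds (order statistics) are themselves random and correlated with the rare sequences only through the shared ranking, so I would first condition on the normal pool, show the relevant normal order statistic is within $o(\sqrt t)$ of $\mu_0 t - \sqrt{t\,h(\bar n_t)}$ with probability $1 - o(1/n)$ (using Theorem~\ref{th:low} or the concentration of $G_{r:m}$ directly), and only then bound the rare-sequence tail against this now-deterministic threshold; a small slack $\eta$ absorbs both the approximation error and the $\bar\alpha$-versus-$1$ order discrepancy.

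I expect the main obstacle to be the bookkeeping across multiple refinement rounds: $\bar n_t$ shrinks geometrically like $\alpha^j n$ while $n_t$ stays essentially fixed, so the "effective separation threshold" $\sqrt{h(\bar n_t)}$ only decreases as rounds proceed, meaning the first round is the binding one — I would argue the condition \eqref{eq:mu_scaling_ratio0} is dictated by the round with the largest normal pool (the first one, $\bar n_1 \le n$), and that subsequent rounds are then automatically safe, being careful that the reduction of $T_n$-padding in $\bar\alpha$ never forces a rare sequence out. The second delicate point is making the constant $t$ explicit enough (it is the per-round budget from the still-to-be-derived optimal allocation) without circularity; I would phrase the lemma as holding for any fixed per-round budget $t \ge 1$, so the scaling $\omega(n^{-\varepsilon_n/2})$ is budget-independent and the constants get folded into the $\omega(\cdot)$.
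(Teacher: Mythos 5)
There is a genuine gap, and it lies in your identification of the refinement threshold. You argue that a rare sequence survives a refinement if its value falls below the $\bar\alpha$-th order statistic of the combined pool, and you then assert that this order statistic is ``dominated by the low-order statistics of the normal pool,'' i.e.\ sits near $\mu_0 t-\sqrt{t\,h(\bar n_t)}\approx\mu_0 t-\sqrt{2t\ln n}$. But $\bar\alpha=\lfloor\alpha(|\L_t|-T_n)\rfloor+T_n\approx\alpha|\L_t|$ with $\alpha\in(0,1)$ a fixed constant, so by the paper's Definition~1 the cut-point is a \emph{central} order statistic (at fraction $\alpha$), not a low-order one: it concentrates near the population $\alpha$-quantile $\mu_0 t+\sqrt{t}\,\Phi^{-1}(\alpha)$, an $O(\sqrt t)$ perturbation of the normal mean, with sampling fluctuation $O(\sqrt{t/\bar n_t})$ by Theorem~\ref{th:central}. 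The extreme-value threshold $\mu_0 t-\sqrt{2t\ln n}$ that you import from Lemma~\ref{lmm:gaussian_min} is the one relevant to the \emph{detection} action (where only $T_n=o(n\epsilon_n)$ sequences are kept, so the boundary is genuinely a low-order statistic); using it for refinement leads you to the survival requirement $(\mu_0-\mu_1)\sqrt t\gtrsim\sqrt{2\ln n}$, which is of a completely different order from the claimed sufficient condition $\mu_0-\mu_1=\omega(n^{-\varepsilon_n/2})=\omega((n\epsilon_n)^{-1/2})$. No amount of ``absorbing constants'' reconciles a growing quantity $\sqrt{\ln n}$ with a vanishing one $(n\epsilon_n)^{-1/2}$; your condition would make the refinement requirement as stringent as the detection requirement, which defeats the entire point of the lemma (that refinement is the easy step).

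The paper's proof instead compares two central order statistics: the $\lceil(1-\bar\delta)n_t\rceil$-th value of the rare pool against the $(|\L_{t+1}|-\lceil(1-\bar\delta)n_t\rceil+1)$-th value of the normal pool (the latter sitting at fraction roughly $\alpha$ of $\bar n_t$). Both are asymptotically Gaussian by Theorem~\ref{th:central}, with standard deviations of order $\sqrt{t/n_t}$ and $\sqrt{t/\bar n_t}$ respectively, and the required separation is driven by these fluctuation scales --- the dominant one being $1/\sqrt{n_1}\asymp(n\epsilon_n)^{-1/2}=n^{-\varepsilon_n/2}$ --- whence the stated condition. A secondary point: your union-bound conclusion that \emph{no} rare sequence is ever discarded ($n_\tau=n_1$) is stronger than the lemma claims and is not available under the weak hypothesis $\omega(n^{-\varepsilon_n/2})$; individual rare sequences can be lost with non-negligible probability, and the paper only controls the \emph{fraction} retained per round, which is exactly why the statement is $n_\tau\geq(1-\delta)n_1$ rather than equality (equality is reserved for the variance setting of Lemma~\ref{lmm:refinement_var}, where the argument is different).
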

\begin{proof}
See Appendix \ref{app:lmm:refinement}.
\end{proof}
Therefore, when the condition in \eqref{eq:mu_scaling_ratio0} is satisfied, the refinement actions almost surely discard no more than a fraction $\delta$ of the rare events, for any arbitrary $\delta\in(0,1)$. Therefore, the final ratio of the number of rare events to that of the normal events increases dramatically throughout the refinement actions. Besides the performance of the refinement actions, the overall detection reliability also depends on the performance of the detection action (A$_1$). The next lemma describes a necessary and sufficient condition that guarantees asymptotically error-free detection.
\begin{lemma}[Detection Performance]\label{lmm:detection}
For a given stopping time $\tau$ and switching sequence $\bar\psi(\tau)$, the detection error probability $\PP_n(\tau,\bar\psi(\tau))$ tends to zero in the asymptote of large $n$ if and only if
\begin{equation}\label{eq:scaling_mean0}
   r_{\rm m}\;>\;\frac{(1-\sqrt{{\varepsilon_n}})^2}{\tau}\ ,
\end{equation}
where we have defined
\begin{equation}\label{eq:rm}
    r_{\rm m}\dff \frac{(\mu_0-\mu_1)^2}{2\ln n}\ .
\end{equation}
\end{lemma}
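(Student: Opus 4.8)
The plan is to compute the two-sided tail event $\PP_n(\tau,\bar\psi(\tau)) = \P(U^\tau_{T_n} > \bar U^\tau_1)$ from \eqref{eq:Pn2} by locating the relevant order statistics. The detection action operates on the sequences surviving the refinement phase; by Lemma~\ref{lmm:refinement}, under the scaling \eqref{eq:mu_scaling_ratio0} (which \eqref{eq:scaling_mean0} will be shown to imply), almost surely $n_\tau \geq (1-\delta) n_1 = (1-\delta)\,n\epsilon_n$ rare sequences are retained, while the number of surviving normal sequences $\bar n_\tau$ is at most $n$ (and is in fact on the order of $\alpha^K n$, but the crude bound $\bar n_\tau \le n$ suffices for the leading-order exponent). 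Each surviving $Z^i_\tau \med \H_m \sim \mathcal N(\mu_m \tau, \tau)$ by \eqref{eq:Z_dist}. So I must find the left tail (first order statistic $\bar U^\tau_1$) of roughly $n$ i.i.d.\ $\mathcal N(\mu_0\tau,\tau)$ variables, and the right tail — specifically the $T_n^{th}$ order statistic $U^\tau_{T_n}$ with $T_n = o(n\epsilon_n)$, hence a low-order statistic by Definition~\ref{def:low_high} — of roughly $n\epsilon_n$ i.i.d.\ $\mathcal N(\mu_1\tau,\tau)$ variables.

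First I would pin down $\bar U^\tau_1$. Writing $Z^i_\tau = \mu_0\tau + \sqrt\tau\, Y_i$ with $Y_i$ standard Gaussian, Lemma~\ref{lmm:gaussian_min} gives that $a_m + b_m Y_i$ with $a_m = h(m)$, $b_m = \sqrt{h(m)}$, $m \approx \bar n_\tau$, has a non-degenerate Gumbel-type minimum limit; hence $\bar U^\tau_1 \doteq \mu_0\tau - \sqrt{\tau}\,\sqrt{h(\bar n_\tau)}$ to leading order, and since $\bar n_\tau \le n$ while $\bar n_\tau = \omega(n^{1-c})$ for the relevant constants, $h(\bar n_\tau) \aeq 2\ln n$, so $\bar U^\tau_1 \doteq \mu_0\tau - \sqrt{2\tau\ln n}$. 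Next, for the low-order statistic $U^\tau_{T_n}$ I would apply Theorem~\ref{th:low} with the same Gaussian-minimum domain-of-attraction data, now with parent sample size $m \approx n_\tau \aeq n\epsilon_n$: for $r = T_n = o(m)$ the limiting cdf \eqref{eq:Lr_dist} concentrates $U^\tau_{T_n}$ around the normalizing location, giving $U^\tau_{T_n} \doteq \mu_1\tau - \sqrt{\tau}\,\sqrt{h(n\epsilon_n)} \aeq \mu_1\tau - \sqrt{2\tau\ln(n\epsilon_n)} = \mu_1\tau - \sqrt{2\tau\,{\varepsilon_n}\ln n}$, using ${\varepsilon_n} = \ln(n\epsilon_n)/\ln n$.

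Then the asymptotic condition $\PP_n \to 0$ becomes the separation requirement $U^\tau_{T_n} < \bar U^\tau_1$ with a gap that diverges relative to the order-statistic fluctuations, i.e.
\begin{equation*}
\mu_1\tau - \sqrt{2\tau\,{\varepsilon_n}\ln n} \;<\; \mu_0\tau - \sqrt{2\tau\ln n}\ ,
\end{equation*}
equivalently $(\mu_0-\mu_1)\tau > \sqrt{2\tau\ln n} - \sqrt{2\tau{\varepsilon_n}\ln n} = \sqrt{2\tau\ln n}\,(1-\sqrt{{\varepsilon_n}})$. Squaring and dividing by $2\tau^2\ln n$ yields exactly $r_{\rm m} = (\mu_0-\mu_1)^2/(2\ln n) > (1-\sqrt{{\varepsilon_n}})^2/\tau$, which is \eqref{eq:scaling_mean0}. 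For the converse (necessity), if the inequality is reversed then $U^\tau_{T_n}$ exceeds $\bar U^\tau_1$ with probability bounded away from zero — here one uses that the limiting laws in Theorems~\ref{th:low}--\ref{thm:minima} are non-degenerate, so both order statistics have genuine $\Theta(1)$-width fluctuations after affine normalization, and a crossing occurs with non-vanishing probability; this gives $\PP_n \not\to 0$.

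The main obstacle is the interaction between the two phases: the detection-phase analysis requires $n_\tau$ and $\bar n_\tau$ as inputs, but these are random and determined by the refinement dynamics, so I must verify that the almost-sure bounds from Lemma~\ref{lmm:refinement} are strong enough that conditioning on $\{n_\tau \geq (1-\delta)n\epsilon_n\}$ changes the order-statistic distributions only negligibly, and that the logarithmic normalizing constants $h(n_\tau)$, $h(\bar n_\tau)$ are, to the precision $\doteq$, insensitive to the $\Theta(1)$ multiplicative uncertainty in $n_\tau, \bar n_\tau$ (true because $h$ depends only logarithmically on its argument, so $h(cn^\beta) \aeq \beta \cdot 2\ln n$). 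A secondary subtlety is checking that $T_n = o(n\epsilon_n)$ genuinely lands in the low-order regime so that Theorem~\ref{th:low} applies and the $T_n^{th}$ order statistic has the same leading location as the minimum; this is where the choice $T_n = o(n\epsilon_n)$ (rather than a constant fraction of $n\epsilon_n$) is essential, since a central-order statistic via Theorem~\ref{th:central} would sit at a strictly different quantile and weaken the achievable threshold.
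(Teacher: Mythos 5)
Your proposal is correct and follows essentially the same route as the paper's proof: both reduce $\PP_n(\tau,\bar\psi(\tau))=\P(U^\tau_{T_n}>\bar U^\tau_1)$ to a comparison of the Gumbel-normalized locations $-\sqrt{h(\bar n_\tau)}\aeq-\sqrt{2\ln n}$ and $-\sqrt{h(n_\tau)}\aeq-\sqrt{2\varepsilon_n\ln n}$ via Lemma~\ref{lmm:gaussian_min} and Theorem~\ref{th:low}, and your separation condition is exactly the paper's criterion $B_n=\omega(1)$ from \eqref{eq:A}. The paper merely makes the sufficiency/necessity step more explicit by integrating the limit densities to get two-sided bounds on $\PP_n$, where you invoke non-degeneracy of the limit laws; the substance is the same.
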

\begin{proof}
See Appendix \ref{app:lmm:detection}.
\end{proof}
Therefore, this lemma, conditionally on the value of the stopping time which is stochastic, provides a necessary and sufficient condition on the distance between distributions $F_0$ and $F_1$ (captured by $(\mu_0-\mu_1)$) for achieving asymptotically optimal detection performance. In the next lemma we show that the stochastic stopping time is upper bounded by a constant.
\begin{lemma}\label{lmm:tau}
The stopping time $\tau$ is upper bounded by $S/\alpha^K$ for sufficiently large $n$.
\end{lemma}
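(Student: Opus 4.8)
The plan is to bound below the cardinality $|\L_t|$ of the retained set at every time and then read off the bound on $\tau$ from the aggregate sampling budget constraint in \eqref{eq:P}. The intuition is that the only way to keep sampling for many rounds is to shrink $\L_t$ as fast as possible by performing refinements; but at most $K$ refinements are allowed and each one shrinks $\L_t$ only by roughly a factor $\alpha$, so the set can never get much smaller than $\alpha^K n$, and a budget of $nS$ then accommodates at most about $S/\alpha^K$ rounds.

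Concretely, I would first record two elementary facts. (i) The map $t\mapsto|\L_t|$ is non-increasing, since action $\mathrm{A}_2$ gives $\L_{t+1}=\L_t$ and action $\mathrm{A}_3$ gives $\L_{t+1}\subset\L_t$; in particular $|\L_t|$ changes only at refinement steps. (ii) By the second constraint in \eqref{eq:P}, at most $K$ refinement steps occur before $\tau$. Now let $\ell_0=|\L_1|=n$ and let $\ell_1,\ell_2,\dots$ be the successive values that $|\L_t|$ takes right after each refinement. From \eqref{eq:alpha_bar},
\[
\ell_{k+1}\;=\;\lfloor\alpha(\ell_k-T_n)\rfloor+T_n\;\ge\;\alpha\ell_k+(1-\alpha)T_n-1\;\ge\;\alpha\ell_k-1\ ,
\]
using $\alpha\in(0,1)$ and $T_n\ge 0$. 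Unrolling this recursion gives $\ell_k\ge\alpha^k n-\sum_{j=0}^{k-1}\alpha^j\ge\alpha^k n-\frac{1}{1-\alpha}$. Since $0<\alpha<1$ implies $\alpha^k\ge\alpha^K$ for every $k\le K$, combining with facts (i)--(ii) yields, for all $t\in\{1,\dots,\tau\}$ and all $n$ large enough that $\alpha^K n>(1-\alpha)^{-1}$,
\[
|\L_t|\;\ge\;\alpha^K n-\frac{1}{1-\alpha}\;=\;\alpha^K n\,(1-o(1))\ .
\]

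Finally I would substitute this lower bound into the first constraint of \eqref{eq:P}: $nS\ \ge\ \sum_{t=1}^{\tau-1}|\L_t|\ \ge\ (\tau-1)\,\alpha^K n\,(1-o(1))$, which rearranges to $\tau-1\le S\alpha^{-K}(1+o(1))$, and hence $\tau\le S/\alpha^K$ for all sufficiently large $n$. There is essentially no hard step here; the only thing requiring care is the floor operation and the additive $T_n$ term in the refinement rule \eqref{eq:alpha_bar}, but since $T_n$ can only enlarge $\ell_{k+1}$ it never weakens the lower bound, and the resulting $O(1)$ slack is absorbed into the asymptotic statement. (The same telescoping argument applies verbatim to the simplified refinement rule $|\L_{t+1}|=\lfloor\alpha|\L_t|\rfloor$ used in Section~\ref{sec:mean}.)
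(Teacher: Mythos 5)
Your proposal is correct and follows essentially the same route as the paper: lower-bound $|\L_t|$ by $\alpha^K n - O(1)$ via the refinement recursion (at most $K$ refinements, each shrinking the set by roughly a factor $\alpha$), then divide the hard budget $nS$ by this lower bound. The paper's own proof performs exactly this telescoping, with the same asymptotic absorption of the additive $O(1)$ slack from the floor and $T_n$ terms.
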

\begin{proof}
See Appendix \ref{app:lmm:tau}.
\end{proof}
Combining the results of Lemmas \ref{lmm:detection} and \ref{lmm:tau} and replacing the stopping time $\tau$ in \eqref{eq:scaling_mean0} with its upper bound from Lemma~\ref{lmm:tau} provides that
\begin{equation*}
\frac{(\mu_0-\mu_1)^2}{2\ln n}\;>\;\frac{(1-\sqrt{{\varepsilon_n}})^2}{S/\alpha^K}\ ,
\end{equation*}
is a {\em necessary} condition for ensuring asymptotically error-free detection. This clearly imposes a more stringent condition on the distance $(\mu_0-\mu_1)$ than \eqref{eq:mu_scaling_ratio0}, which is a {\em sufficient} condition for retaining at least a fraction $(1-\delta)$ of the rare events for throughout the refinement cycles. As a result, irrespective of the optimal value of the stopping time $\tau$, ensuring asymptotically error-free detection requires the refinement process to retain at least $(1-\delta)$ fraction of the rare events.

Finally, we would like to remark that while the exact characterization of the error probability $\PP_n(\tau,\bar\psi(\tau))$, {\em does} depend on $T_n$ (the number of events to be returned by the detector), the scaling rate given in \eqref{eq:scaling_mean0} {\em does not} depend on $T_n$. The intuitive reason is that in analyzing such scaling laws, in the asymptote of large $n$, only the dominant (leading) terms survive and the impacts of the non-leading terms vanish as $n$ grows. $T_n$ appears to impact only the non-leading terms and its impact is not observed on the ultimate scaling laws. In a more technical sense, the reason  lies at the core of the analysis on order statistics. Note that the focus of this paper is on the regime $T_n=o(n\epsilon_n)$. The analysis reveals that while the error probabilities vary for different choices of $T_n$, the scaling laws of the mean values is identical for all values of $T_n$ in the regime $T_n=o(n\epsilon_n)$. The underlying reason for this observation is that according to the definition of the detection action, the ultimate $T_n$ sequences selected are the sequences with the smallest likelihood ratios, and hence, constitute the $T_n$ smallest order statistics of the set comprised of the likelihood ratios of the retained sequences. When $T_n=o(n\epsilon_n)$ all these order statistics are of low order (Definition \ref{def:low_high}) and have identical asymptotic impacts on the scaling rates.


\subsection{Gaussian Variance}
\label{sec:var}
In this section we analyze the performance of detection and refinement in the Gaussian variance hypothesis testing problem. The presentation of the results follows the same flow as the problem of testing the mean. The proofs, however, are different due to different statistical behavior of the likelihood ratio and its pertinent sufficient statistic. The problem of interest can be posed as
\begin{equation}\label{eq:H_var}
    \begin{array}{cc}
      \H_0: & X^i_j\sim \mathcal{N}(0, A_0), \quad j=1,2,\dots\\
      \H_1: & X^i_j\sim \mathcal{N}(0, A_1), \quad j=1,2,\dots
    \end{array}
\end{equation}
where $A_0>A_1$ and $A_0,A_1\in\mathbb{R}^+$. Hence, the likelihood ratio at time $t$ for the sequences $i\in\L_t$ is
\begin{equation}\label{eq:Lambda_2_var}
\Lambda_t^i \;=\; \prod_{u=1}^t\frac{f_0(X_u^i)}{f_1(X_u^i)}\;=\; \exp\left\{\frac{1}{2}\left(\frac{1}{A_1}-\frac{1}{A_0}\right)\sum_{u=1}^t\;(X_u^i)^2\right\}\ .
\end{equation}
Similarly to \eqref{eq:Z} we define
\begin{equation}\label{eq:Z_var}
    \forall t\in\{1,\dots,\tau\}\;\;\mbox{and }\;\forall i\in\L_t:\quad \bar Z_t^i\dff\sum_{u=1}^t(X_u^i)^2\ .
\end{equation}
Therefore, the indices of the sequences yielded by the detection and refinement actions defined in \eqref{eq:U} and \eqref{eq:A3}, respectively, are given by
\begin{eqnarray}\label{eq:U2_var}
    \U&=&\mbox{the indices of the $T_n$ smallest elements of the set } \{\bar Z^i_{\tau}:\;i\in\L_{\tau}\}\ ,\mbox{ and}\\
    \L_{t+1} & = &\mbox{the indices of the $\bar\alpha$ smallest elements of the set } \{\bar Z^i_{t}:\;i\in\L_{t}\}\ ,
\end{eqnarray}
where $\bar\alpha$ is defined in \eqref{eq:alpha_bar}.
By recalling the distribution of $X_t^i$ given in~\eqref{eq:H_var} we have
\begin{equation}\label{eq:Z_dist_var}
    \forall t\in\{1,\dots,\tau\}\;\;\mbox{and }\;\forall i\in\L_t:\quad \bar Z_t^i\med \H_m\;\sim\;A_m\cdot\chi^2(t)\ ,
\end{equation}
where $\chi^2(k)$ denotes a chi-squared distribution with $k$ degrees of freedom. Next, for each $t\in\{1,\dots,\tau\}$ let us define
\begin{eqnarray}
\label{eq:U0_var}
&&\forall j\in\{1,\dots, \bar{n}_t\}:\quad \bar V^t_j\dff\mbox{the $j^{th}$ smallest element of } \{\bar Z_t^i:\;i\in\L^0_t\}\ ,\\
\label{eq:U1_var}
\mbox{and} && \forall j\in\{1,\dots, n_t\}:\quad V^t_j\dff\mbox{the $j^{th}$ smallest element of } \{\bar Z_t^i:\;i\in\L^1_t\}\ .
\end{eqnarray}
As shown in \eqref{eq:Pn2} the probability $\PP(\tau,\bar\psi(\tau))$ is given by
\begin{equation}\label{eq:Pn2_var}
\PP_n(\tau,\bar\psi(\tau))=\P\;(\;|\;\U\cap \L^1_{\tau}\;|\;\geq\; 0\;)=\P\;\Big(\;V^{\tau}_{T_n}\;>\; \bar V^{\tau}_1\;\Big)\ .
\end{equation}
As assessing this probability requires knowing the distributions of the first and the $T_n^{th}$ order statistics of the sets of random variables $\{\bar Z_{\tau}^i:\;i\in\L^0_t\}$ and $\{\bar Z_{\tau}^i:\;i\in\L^1_t\}$, respectively, in the following lemma we provide the minimal domain of attraction of chi-squared distributions.
\begin{lemma}[$\chi^2$ Minimum]\label{lmm:chi_min}
The chi-squared distribution with $k$ degrees of freedom belongs to the minimal domain of attraction of
\begin{equation*}
    L(w)=1-\exp\left(-w^{k/2}\right)\ ,\qquad\forall w\in\mathbb{R}\ ,
\end{equation*}
which is corresponding to \eqref{eq:Lk} for $\kappa\rightarrow 0$.
The associated affine transformation is characterized by the sequences $\{a_m\}$ and $\{b_m\}$ given by
\begin{equation*}
 a_m=0\ ,\quad\mbox{and}\quad b_m=\frac{1}{2}\cdot\left[\frac{m}{\Gamma(k/2+1)}\right]^{2/k},\qquad\forall m\in\{2,3,\dots\}\ .
\end{equation*}
\end{lemma}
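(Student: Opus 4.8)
The plan is to mimic the strategy that Lemma~\ref{lmm:gaussian_min} used for the standard Gaussian, but now applied to the $\chi^2(k)$ parent distribution, using the classical von Mises sufficient condition for the minimal domain of attraction of the Gumbel-type law (the $\kappa\to 0$ limit of \eqref{eq:Lk}). Recall that if $G$ is a continuous cdf with left endpoint $\ell=\inf\{x:G(x)>0\}$ and density $g$, then $G$ belongs to the minimal domain of attraction of the type~\eqref{eq:Lk} law with $\kappa\to 0$ whenever, for $x\downarrow\ell$, one has a regular-variation / von Mises condition on the lower tail; equivalently, it suffices to exhibit sequences $\{a_m\},\{b_m\}$ with $b_m>0$ such that $m\,G(a_m+w/b_m)\to e^{w}$ (or, more conveniently here, $m\,G(w/b_m)\to w^{k/2}$ pointwise on $w>0$), since then $\P(W_{1:m}>w)=[1-G(w/b_m)]^m\to \exp(-w^{k/2})$, which is exactly the claimed $L$. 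So the whole lemma reduces to checking this single pointwise limit with the stated choice $a_m=0$, $b_m=\tfrac12[m/\Gamma(k/2+1)]^{2/k}$.

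First I would write down the left-tail behavior of the $\chi^2(k)$ cdf near $0$. The $\chi^2(k)$ density is $g(x)=\frac{1}{2^{k/2}\Gamma(k/2)}x^{k/2-1}e^{-x/2}$ for $x>0$, so integrating from $0$ and using $e^{-x/2}\to 1$ gives the asymptotic $G(x)\aeq \frac{1}{2^{k/2}\Gamma(k/2+1)}\,x^{k/2}$ as $x\downarrow 0$ (the extra factor $k/2$ from integrating $x^{k/2-1}$ combines with $\Gamma(k/2)$ to give $\Gamma(k/2+1)$). Next I substitute $x=w/b_m=2w[\Gamma(k/2+1)/m]^{2/k}$, which tends to $0$ for each fixed $w>0$ as $m\to\infty$, so the asymptotic expansion applies. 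Plugging in:
\begin{equation*}
m\,G(w/b_m)\;\aeq\;m\cdot\frac{1}{2^{k/2}\Gamma(k/2+1)}\left(2w\left[\frac{\Gamma(k/2+1)}{m}\right]^{2/k}\right)^{k/2}\;=\;m\cdot\frac{1}{2^{k/2}\Gamma(k/2+1)}\cdot 2^{k/2}w^{k/2}\cdot\frac{\Gamma(k/2+1)}{m}\;=\;w^{k/2}\ ,
\end{equation*}
as desired. Then $\P(W_{1:m}>w)=[1-G(w/b_m)]^m=\exp\{m\ln(1-G(w/b_m))\}$, and since $G(w/b_m)\to 0$, a Taylor expansion of $\ln(1-\cdot)$ gives $m\ln(1-G(w/b_m))\aeq -m\,G(w/b_m)\to -w^{k/2}$; hence $Q_{1:m}(w;m)=1-[1-G(w/b_m)]^m\to 1-\exp(-w^{k/2})=L(w)$ for every $w>0$, and trivially $L(w)=0$ for $w\le 0$ since $\chi^2$ is supported on $(0,\infty)$. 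This establishes \eqref{eq:L} with the stated $L$, $a_m$, $b_m$, and the identification with \eqref{eq:Lk} as $\kappa\to 0$ follows by the standard limit $[1+\kappa(w/\sigma)]^{1/\kappa}\to e^{w/\sigma}$ after matching $\sigma$ to the scale implicit in $L(w)=1-\exp(-w^{k/2})=1-\exp(-\exp(\tfrac{k}{2}\ln w))$.

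The step I expect to require the most care is making the asymptotic estimate $G(x)\aeq c\,x^{k/2}$ near $0$ fully rigorous and controlling the error uniformly enough that it survives multiplication by $m$: one must show $m\cdot o(G(w/b_m))\to 0$, i.e. that the relative error in the tail expansion vanishes, which follows because $e^{-x/2}=1+O(x)$ forces $G(x)=c\,x^{k/2}(1+O(x))$ and $m\,G(w/b_m)\cdot O(w/b_m)=w^{k/2}\cdot O(b_m^{-1})=O(m^{-2/k})\to 0$. Apart from this bookkeeping the argument is a direct computation, and the non-degeneracy of $L$ (hence membership in \emph{the} minimal domain of attraction in the sense of the Domain of attraction definition, consistent with Theorem~\ref{thm:minima}) is immediate since $1-\exp(-w^{k/2})$ is a genuine cdf on $(0,\infty)$.
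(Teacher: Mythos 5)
Your proposal is correct and follows essentially the same route as the paper: expand the $\chi^2(k)$ cdf near its left endpoint to get $G(x)\aeq \frac{1}{2^{k/2}\Gamma(k/2+1)}x^{k/2}$, substitute $x=w/b_m$ so that $m\,G(w/b_m)\to w^{k/2}$, and pass to the limit $[1-G(w/b_m)]^m\to\exp(-w^{k/2})$. The paper controls the tail expansion via explicit two-sided bounds $e^{-y/2}\le e^{-t/2}\le 1$ on the integrand rather than your $O(x)$ error bookkeeping, but this is a cosmetic difference.
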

\begin{proof}
See Appendix \ref{app:lmm:chi_min}.
\end{proof}
Also, as will be shown later, for analyzing the performance of the detection action we also need to determine to what maximal domain of attraction a $\chi^2$ distribution belongs.
\begin{lemma}[$\chi^2$ Maximum]\label{lmm:chi_max}
The chi-squared distribution with $k$ degrees of freedom belongs to the maximal domain of attraction of
\begin{equation*}
    H(w)=\exp\left(-\frac{1}{\Gamma(k/2)}\cdot \exp(-w)\right)\ ,\qquad\forall w\in\mathbb{R}\ ,
\end{equation*}
which is corresponding to \eqref{eq:Hk} for $\kappa\rightarrow 0$.
The associated affine transformation is characterized by the sequences $\{a_m\}$ and $\{b_m\}$ given by
\begin{equation*}
 c_m= -\left(\ln m+\frac{K-1}{2}\ln\ln m\right)\ , \quad\mbox{and}\quad d_m=\frac{1}{2} ,\qquad\forall m\in\{2,3,\dots\}\ .
\end{equation*}
\end{lemma}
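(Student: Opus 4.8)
The plan is to carry out the classical extreme-value argument on the upper tail of $G=\chi^2(k)$: establish the precise tail asymptotics of $1-G$, use a von Mises sufficient condition to place $G$ in the maximal domain of attraction of the Gumbel law (the $\kappa\to0$ member of the family \eqref{eq:Hk}), and then extract the normalizing sequences $\{c_m\},\{d_m\}$ from the convergence requirement for $m\bigl(1-G(c_m+d_m w)\bigr)$. This is the mirror image, on the right tail, of the proof of Lemma~\ref{lmm:chi_min}.

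First I would record the tail. With density $g(x)=\frac{1}{2^{k/2}\Gamma(k/2)}x^{k/2-1}e^{-x/2}$ and $1-G(x)=\frac{1}{\Gamma(k/2)}\Gamma\!\bigl(\tfrac{k}{2},\tfrac{x}{2}\bigr)$ in terms of the upper incomplete gamma function, repeated integration by parts (equivalently the expansion $\Gamma(s,y)=y^{s-1}e^{-y}\bigl(1+O(1/y)\bigr)$ as $y\to\infty$) gives
\begin{equation*}
1-G(x)\;\aeq\;\frac{1}{\Gamma(k/2)}\Bigl(\frac{x}{2}\Bigr)^{k/2-1}e^{-x/2}\;\aeq\;2\,g(x)\ ,\qquad x\to\infty\ .
\end{equation*}
In particular the reciprocal-hazard (auxiliary) function $r(x)\dff\frac{1-G(x)}{g(x)}$ converges to the constant $2$, hence $r'(x)\to0$; this is exactly the von Mises condition guaranteeing that $G$ belongs to the Gumbel maximal domain of attraction (alternatively, one may invoke that every Gamma law, and hence $\chi^2(k)=\mathrm{Gamma}(k/2,2)$, lies in this domain, cf.~\cite{Galambos:book}). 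This already pins down the limit to be of the $\kappa\to0$ form \eqref{eq:Hk}, i.e.\ a Gumbel law, and only its location remains.

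Next I would determine $\{c_m\},\{d_m\}$. Since $r(x)\to2$, the scale is $d_m\to\tfrac12$ and nothing is lost by taking $d_m=\tfrac12$ exactly; it then remains to choose $c_m$ so that $m\bigl(1-G(2(w-c_m))\bigr)$ converges. Substituting the tail estimate gives $m\bigl(1-G(2(w-c_m))\bigr)\aeq\frac{m\,e^{c_m}}{\Gamma(k/2)}(w-c_m)^{k/2-1}e^{-w}$, and since $w-c_m\aeq-c_m\to\infty$ for fixed $w$, demanding that the prefactor tend to $\tfrac{1}{\Gamma(k/2)}$ forces $m\,e^{c_m}(-c_m)^{k/2-1}\to1$, i.e.\ $\ln m+c_m+\bigl(\tfrac{k}{2}-1\bigr)\ln(-c_m)=o(1)$. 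One bootstrap step (using $c_m\aeq-\ln m$, hence $\ln(-c_m)\aeq\ln\ln m$) yields $c_m=-\bigl(\ln m+\tfrac{k-2}{2}\ln\ln m\bigr)+o(1)$, the form recorded in the lemma. Then $G^m\bigl(2(w-c_m)\bigr)=\bigl(1-(1-G)(2(w-c_m))\bigr)^m\to\exp\!\bigl(-\tfrac{1}{\Gamma(k/2)}e^{-w}\bigr)=H(w)$ for every $w\in\mathbb{R}$, which is the claim.

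The delicate step is the bookkeeping of the sub-leading terms of the tail: the $\ln\ln m$ correction has to be tracked to fix the coefficient of $\ln\ln m$ in $c_m$ (a crude $c_m\sim-\ln m$ estimate is not enough), and one must also check that no $O(1)$ shift survives in $c_m$, so that the leading tail constant $\tfrac{1}{\Gamma(k/2)}$ passes undistorted into the prefactor of $H$. Everything else---the tail expansion, the verification of the von Mises condition, and the limit passage $G^m(\cdot)\to H$ via $m(1-G(\cdot))\to-\ln H(\cdot)$---is routine and runs in exact parallel (on the opposite tail) to the proof of Lemma~\ref{lmm:chi_min}.
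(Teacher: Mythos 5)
Your proposal is correct and follows essentially the same route as the paper's proof: both reduce the claim to the right-tail asymptotic $1-G(x)\aeq\frac{1}{\Gamma(k/2)}(x/2)^{k/2-1}e^{-x/2}$ and then verify that $m\bigl(1-G(2(w-c_m))\bigr)\rightarrow\frac{1}{\Gamma(k/2)}e^{-w}$ (the paper phrases this as $\lim_x x\log F_Y(2(w-u(x)))=-e^{-w}/\Gamma(k/2)$ via L'H\^opital, you via the incomplete-gamma expansion, with an optional von Mises check on top). Your bookkeeping correctly forces the coefficient of $\ln\ln m$ to be $\tfrac{k}{2}-1=\tfrac{k-2}{2}$, which is also what the exponent $(1/\ln x)^{k/2-1}$ in the paper's own computation requires, so the $\tfrac{K-1}{2}$ appearing in the lemma statement should be read as a typographical slip rather than a discrepancy with your argument.
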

\begin{proof}
See Appendix \ref{app:lmm:chi_max}.
\end{proof}
In the next lemma we provide the performance of the refinement action in retaining the sequences generated by $F_0$ and $F_1$.
\begin{lemma}[Refinement Performance]\label{lmm:refinement_var}
Let $\bar n_t=|\L^0_t|$ and ${n}_t=|\L^1_t|$ denote the number of sequences generated by $F_0$ and $F_1$, respectively, that are retained up to time $t$. For sufficiently large $n$, the event
\begin{equation}\label{eq:n_t_var}
   n_\tau\;=\;  n_{1}\
\end{equation}
holds almost surely if
\begin{eqnarray}\label{eq:A_scaling_ratio}
\frac{A_0}{A_1}\;=\;\omega\;({\varepsilon_n}\ln n)\ ,
\end{eqnarray}
where $\varepsilon_n$ is defined as
\begin{equation}
{\varepsilon_n}\dff\frac{\ln n\epsilon_n}{\ln n}\ .
\end{equation}
\end{lemma}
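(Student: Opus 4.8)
The plan is to track the two populations of statistics $\{\bar Z_t^i : i\in\L^0_t\}$ and $\{\bar Z_t^i : i\in\L^1_t\}$ through the refinement actions and to show that, under the scaling~\eqref{eq:A_scaling_ratio}, no rare sequence is ever discarded, i.e.\ $n_{t+1}=n_t$ at every refinement step. Since each refinement action of type $\rm A_3$ keeps the $\bar\alpha$ sequences with the smallest $\bar Z_t^i$, it suffices to show that with probability tending to one the largest of the rare statistics lies below the $\bar\alpha$-th smallest among \emph{all} retained statistics; and because $\bar\alpha\ge\alpha(\bar n_t-T_n)+T_n\ge\alpha\bar n_t$ (up to the floor), it is enough to compare a high-order statistic of the rare population $\{\bar Z_t^i : i\in\L^1_t\}$ against an appropriate low/central-order statistic of the normal population $\{\bar Z_t^i : i\in\L^0_t\}$. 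Concretely, I would show
\[
\P\Bigl(\max_{i\in\L^1_t}\bar Z_t^i \;<\; \bar V^t_{\lceil c\bar n_t\rceil}\Bigr)\;\longrightarrow\;1
\]
for a suitable constant $c\in(\alpha,1)$, which forces all $n_t$ rare statistics into the retained set.

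The key steps, in order. First, recall from~\eqref{eq:Z_dist_var} that $\bar Z_t^i\med\H_m\sim A_m\cdot\chi^2(t)$, so at the first refinement the normal statistics are scaled copies of the rare statistics by the factor $A_0/A_1$. Because the refinement action only \emph{shrinks} both populations by selecting order statistics, and the time index $t$ is bounded (Lemma~\ref{lmm:tau} gives $\tau\le S/\alpha^K$, a constant), the degrees of freedom $k=t$ stay bounded throughout; this lets me use Lemma~\ref{lmm:chi_min} for the low-order statistics of the normal population and Lemma~\ref{lmm:chi_max} for the maximum of the rare population. Second, I would bound the rare maximum: using the maximal domain of attraction from Lemma~\ref{lmm:chi_max}, $\max_{i\in\L^1_t}\bar Z_t^i$ concentrates around $A_1\cdot 2(\ln n_t + \tfrac{k-1}{2}\ln\ln n_t)$, and since $n_t\le n\epsilon_n$ we get $\max_{i\in\L^1_t}\bar Z_t^i = O(A_1\ln(n\epsilon_n)) = O(A_1\,{\varepsilon_n}\ln n)$ almost surely. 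Third, I would lower-bound the relevant order statistic of the normal population: a central order statistic $\bar V^t_{\lceil c\bar n_t\rceil}$ with $\bar n_t=\Theta(n)$ concentrates (by Theorem~\ref{th:central}, or more crudely by Lemma~\ref{lmm:chi_min} applied to low orders if I take $c$ small) around $A_0$ times a fixed positive quantile of $\chi^2(t)$, hence is $\Theta(A_0)$ almost surely. Fourth, compare: the desired event holds almost surely once $A_0 = \omega(A_1\,{\varepsilon_n}\ln n)$, i.e.\ precisely~\eqref{eq:A_scaling_ratio}, which after rearranging is $A_0/A_1 = \omega({\varepsilon_n}\ln n)$. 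Fifth, since each of the (at most $K$) refinement actions preserves all rare sequences, by a union bound over the constantly many refinement steps the event $n_\tau = n_1$ holds almost surely, and the bookkeeping on $\bar n_t$ (it can only decrease, so the concentration estimates above remain valid with $\bar n_t$ in place of $n$ as long as $\bar n_t\to\infty$, which holds because $\bar n_1=n-n_1\sim n$ and each step multiplies by roughly $\alpha$, leaving $\bar n_\tau=\Theta(\alpha^K n)\to\infty$) closes the argument.

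The main obstacle I anticipate is controlling the normal population's low-order statistic \emph{after} several rounds of refinement: once refinements have removed the larger normal statistics, the retained normal population is itself an order-statistics-truncated sample, not an i.i.d.\ $\chi^2$ sample, so Theorem~\ref{th:central} and Lemma~\ref{lmm:chi_min} do not apply verbatim to it. The fix is to argue conditionally and monotonically — the set retained after $j$ refinements is, by construction, the $\bar n_j$ smallest among an i.i.d.\ sample, so its $\lceil c\bar n_j\rceil$-th smallest element equals the $\lceil c\bar n_j\rceil$-th smallest element of the \emph{original} i.i.d.\ normal sample of size $\bar n_1$, for which the classical order-statistics asymptotics do apply; one only needs $\lceil c\bar n_j\rceil = o(\bar n_1)$ or a fixed fraction of it, both of which hold since $\bar n_j=\Theta(\alpha^j n)$. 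A secondary technical point is handling the floor in $\bar\alpha$ and the $+T_n$ shift, but since $T_n=o(n\epsilon_n)=o(\bar n_t)$ these contribute only lower-order corrections to the threshold index and do not affect the leading-order comparison.
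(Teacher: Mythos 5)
Your overall strategy is the same as the paper's: reduce retention of all rare sequences at a refinement step to the event that the maximum $V^t_{n_t}$ of the rare population falls below the $(|\L_{t+1}|-n_t+1)$-th order statistic of the normal population, estimate the rare maximum via the Gumbel asymptotics of Lemma~\ref{lmm:chi_max} (giving $\approx 2A_1\ln n_t = O(A_1\varepsilon_n\ln n)$), estimate the normal-side order statistic as $A_0$ times a fixed $\chi^2(t)$ quantile (the paper uses Theorem~\ref{th:central} to place it at the $\alpha$-quantile, $\Theta(A_0)$), and conclude $A_0/A_1=\omega(\varepsilon_n\ln n)$ suffices, finishing with a union bound over the at most $K$ refinement steps. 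The paper carries out the comparison by integrating the Gumbel cdf against the Gaussian density of the central order statistic and controlling the deviation with Chebyshev's inequality, but this is a presentational difference, not a different idea.

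One step as written is backwards and would fail: you propose to show $\max_{i\in\L^1_t}\bar Z_t^i<\bar V^t_{\lceil c\bar n_t\rceil}$ for $c\in(\alpha,1)$ and claim this forces retention of all rare sequences. Since $\bar V^t_j$ is increasing in $j$, lying below the $\lceil c\bar n_t\rceil$-th normal order statistic with $c>\alpha$ is a \emph{weaker} event than lying below the threshold actually used by the refinement action, which sits at index $\bar\alpha-n_t+1\approx\alpha\bar n_t$ of the normal population; so the implication goes the wrong way. You need $c\leq\alpha$ (or the exact index, as the paper uses). This is harmless for the scaling law, because $\bar G^{-1}(c)$ is a positive constant for any fixed $c\in(0,1)$, but the inequality on $c$ must be reversed for the sufficiency argument to be valid. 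Your observation about the retained normal population no longer being an i.i.d.\ $\chi^2$ sample after earlier refinements is a legitimate concern that the paper's proof does not address explicitly; your proposed fix handles the selection at the refinement instant, though note that the statistics continue to accumulate fresh increments between refinements, so the selected set at time $t$ is not literally an order-statistics truncation of an i.i.d.\ $\chi^2(t)$ sample either.
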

\begin{proof}
See Appendix \ref{app:lmm:refinement_var}.
\end{proof}
Therefore, when the scaling law in \eqref{eq:A_scaling_ratio} is satisfied, the proportion of the sequences generated by $F_0$ to those generated by $F_1$ increases after the refinement actions. In the next lemma we provide a necessary and sufficient condition on the scaling of $A_1/A_0$ that ensures perfect identification of the $T_n$ sequences generated by $F_1$ (rare events).
\begin{lemma}[Detection Performance]\label{lmm:detection_var}
For a given stopping time $\tau$ and switching sequence $\bar\psi(\tau)$, and conditionally on $n_{\tau}$ and $\bar{n}_{\tau}$, the detection error probability $\PP_n(\tau,\bar\psi(\tau))$ tends to zero in the asymptote of large $n$ if and only if
\begin{equation}\label{eq:scaling_var0}
   \xi_{\rm v}\; >\; \frac{2(1-{\varepsilon_n})}{\tau}\ ,
\end{equation}
where we have defined
\begin{equation}\label{eq:xi}
    \xi_{\rm v}\dff\frac{\ln\frac{A_0}{A_1}}{\ln n}\ .
\end{equation}
\end{lemma}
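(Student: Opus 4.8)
The plan is to mirror the structure of the proof of Lemma~\ref{lmm:detection} (the Gaussian-mean analogue), but with the order-statistic inputs replaced by the results of Lemmas~\ref{lmm:chi_min} and \ref{lmm:chi_max}, together with Theorem~\ref{th:central} applied to the scaled chi-squared law. Recall from \eqref{eq:Pn2_var} that the detection error equals $\P(V^{\tau}_{T_n} > \bar V^{\tau}_1)$, where $\bar V^{\tau}_1$ is the \emph{minimum} of $\bar n_{\tau}$ samples of $A_0\cdot\chi^2(\tau)$ and $V^{\tau}_{T_n}$ is the $T_n^{th}$ order statistic (a low-order statistic, since $T_n = o(n\epsilon_n)$ and $n_{\tau} \asymp n\epsilon_n$) of $n_{\tau}$ samples of $A_1\cdot\chi^2(\tau)$. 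So the first step is to locate, asymptotically, each of these two quantities: a concentration (``typical value'') for $\bar V^{\tau}_1$ obtained from the minimal-domain-of-attraction normalization in Lemma~\ref{lmm:chi_min} applied to the $\bar n_{\tau}$-sample of $A_0\chi^2(\tau)$, and a concentration for $V^{\tau}_{T_n}$ obtained similarly for the $n_{\tau}$-sample of $A_1\chi^2(\tau)$ via Theorem~\ref{th:low} (low-order statistics of a chi-squared minimal domain). Because both limit laws are of the $\kappa\to 0$ (Gumbel-type) form with the explicit $a_m,b_m$ given in Lemma~\ref{lmm:chi_min}, each order statistic will concentrate around $b_{m}^{-1}\cdot(\text{constant})$, i.e.\ around a quantity of order $A_m \cdot m^{-2/\tau}$ up to logarithmic corrections, where $m$ is the relevant sample size.

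The second step is to take logarithms: the event $\{V^{\tau}_{T_n} > \bar V^{\tau}_1\}$ becomes, after inserting $\bar Z^i_{\tau} = A_m\chi^2(\tau)$ and dividing, a comparison between $\ln(A_0/A_1)$ and a term that behaves like $\frac{2}{\tau}\ln(\bar n_{\tau}/n_{\tau})$ plus lower-order (doubly-logarithmic) fluctuations coming from the Gumbel normalization and from the low-order correction sum in \eqref{eq:Lr_dist}. Now use $\bar n_{\tau} \asymp n$ (almost all normal events survive, since by Lemma~\ref{lmm:refinement_var} the refinement retains essentially all $F_1$-sequences while discarding only an $O(1)$-bounded fraction per round of the $F_0$-sequences, so $\bar n_{\tau} = \Theta(n)$) and $n_{\tau} \asymp n_1 = \Theta(n\epsilon_n)$, so that $\ln(\bar n_{\tau}/n_{\tau}) \doteq -\ln\epsilon_n = (1-{\varepsilon_n})\ln n$ by the definition ${\varepsilon_n} = \ln(n\epsilon_n)/\ln n$. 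Dividing the resulting threshold $\ln(A_0/A_1) \gtrless \frac{2}{\tau}(1-{\varepsilon_n})\ln n$ through by $\ln n$ and recalling $\xi_{\rm v} = \ln(A_0/A_1)/\ln n$ yields exactly the claimed condition \eqref{eq:scaling_var0}; the ``only if'' direction follows because when the inequality fails the two order statistics overlap with probability bounded away from zero (the Gumbel fluctuations are only of doubly-logarithmic order and cannot rescue a deficit that is of order $\ln n$).

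I expect the main obstacle to be the careful bookkeeping of the \emph{sub-leading} terms: the Gumbel normalizer $b_m$ in Lemma~\ref{lmm:chi_min} carries a $[\Gamma(\tau/2+1)]^{2/\tau}$ factor and the low-order correction in Theorem~\ref{th:low} contributes a $\sum_{i=0}^{T_n-1}$ term, and one must verify that \emph{all} of these are $o(\ln n)$ uniformly, so that only the $\ln(\bar n_{\tau}/n_{\tau})$ term survives in the scaling law — this is also the technical reason the stated result does not depend on $T_n$ in the regime $T_n=o(n\epsilon_n)$, exactly as in the remark following Lemma~\ref{lmm:detection}. A secondary subtlety is that $n_{\tau}$ and $\bar n_{\tau}$ are random, so one conditions on them (as the lemma statement already does) and then invokes Lemma~\ref{lmm:refinement_var} to pin down their almost-sure orders; since $\tau$ is treated as given and is bounded by a constant ($\tau \leq S/\alpha^K$ by the analogue of Lemma~\ref{lmm:tau}), the $m^{-2/\tau}$ exponents stay bounded away from $0$ and $1$ and no degeneracy arises.
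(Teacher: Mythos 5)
Your strategy is sound and lands on the correct threshold, but it takes a genuinely different route from the paper. You propose a ``locate-and-compare'' argument in the style of the Gaussian-mean proof: concentrate $\bar V^{\tau}_1$ around $A_0\,\bar n_{\tau}^{-2/\tau}$ and $V^{\tau}_{T_n}$ around $A_1\, n_{\tau}^{-2/\tau}$ (up to order-one multiplicative fluctuations), take logarithms, and reduce the crossing event to $\ln(A_0/A_1) \gtrless \frac{2}{\tau}\ln(\bar n_{\tau}/n_{\tau}) \doteq \frac{2}{\tau}(1-\varepsilon_n)\ln n$. The paper instead exploits the special structure of the chi-squared (Weibull-type) limit laws to evaluate the error probability \emph{in closed form}: with $\Theta \dff \frac{A_0}{A_1}\cdot\frac{\bar h(n_\tau)}{\bar h(\bar n_\tau)}$ it integrates the limiting densities exactly and obtains $\PP_n(\tau,\bar\psi(\tau)) = 1-\bigl(\Theta^{\tau/2}/(1+\Theta^{\tau/2})\bigr)^{T_n}$, from which both directions of the ``if and only if'' are immediate and the $T_n$-dependence is visible explicitly rather than argued away. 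Your approach is more portable (it would survive a change of parent distribution), while the paper's buys an exact error expression and a cleaner converse. Two descriptive slips in your write-up are worth correcting, though neither breaks the argument: the minimal limit law of Lemma~\ref{lmm:chi_min} has $a_m = 0$ and $b_m \propto m^{2/k}$, i.e.\ it is Weibull-type with purely multiplicative normalization, so there is no Gumbel centering and the log-scale fluctuations are $O_P(1)$ with exponential tails, not ``doubly-logarithmic'' --- the bookkeeping you flag as the main obstacle is therefore simpler here than in the mean case (what you actually need is that an $O_P(1)$ additive term on the log scale, together with the $[\Gamma(\tau/2+1)]^{2/\tau}$ and $T_n$-sum corrections, is $o(\ln n)$, which holds since $\tau$ is bounded and $T_n = o(n\epsilon_n)$). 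Also, Lemma~\ref{lmm:chi_max} and Theorem~\ref{th:central} are not needed for this lemma; they enter only in the refinement analysis of Lemma~\ref{lmm:refinement_var}.
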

\begin{proof}
See Appendix \ref{app:lmm:detection_var}.
\end{proof}
By comparing the scaling laws offered by Lemmas~\ref{lmm:refinement_var}~and~\ref{lmm:detection_var} we find that the scaling law {\em necessary} for making a reliable detection, irrespective of $\tau$ and $\bar\psi(\tau)$, dominates the one that is {\em sufficient} for maintaining $n_\tau=n_1$ almost surely. In other words, in order to perform reliable detection the refinement action (A$_2$) retains all the rare events almost surely.
\subsection{Optimal Stopping Time}
\label{sec:stop}
Given the performance of the refinement action offered by Lemmas \ref{lmm:refinement}~and~\ref{lmm:refinement_var}, and the detection action given by Lemmas~\ref{lmm:detection}~and~\ref{lmm:detection_var}, in this section we provide the optimal choices of the stopping time and the switching sequence. Given the discussions at the end of Sections~\ref{sec:mean}~and~\ref{sec:var}, irrespective of the discrepancies in the analysis and the ensuring scaling laws in the mean and variance settings,  these two settings conform in the fact that targeting at error-free detection forces the refinement process to retain {\em almost} all of the rare events. More specifically, the refinement process is guaranteed (probabilistically) to discard at most a fraction $\delta$ of the rare events in testing the mean and retain almost all such events in testing the variance.

Primarily based on the similar behavior of the refinement process in both settings, the optimal choices of the stopping time and the switching sequence turn out to be exactly the same in both settings. The scheme of the proofs in this section are as follows. By using the result of Lemma~\ref{lmm:tau}, which states that the stopping time $\tau$ is upper bounded by a constant, it is concluded that regardless of the choice of the stopping time and the switching sequence, the conditions that Lemmas~\ref{lmm:detection}~and~\ref{lmm:detection_var} impose on $(\mu_0-\mu_1)$ and $A_1/A_0$, respectively, dominate those imposed by Lemmas~\ref{lmm:refinement}~and~\ref{lmm:refinement_var}, respectively. This immediately indicates that the refinement process in both settings retains almost all rare events almost surely. Based on this property of the refinement process, we obtain the choices of the stopping time and switching sequence that minimize the error probability.
\begin{theorem}[Stopping Time]\label{th:stop}
For achieving $\sP_n(S,K) \xrightarrow{\tiny n\rightarrow\infty}0$ the optimal switching sequence satisfies
    \begin{equation}\label{eq:switch}
    \forall t\in\{1,\dots,K^*\}:\quad \psi(1)=1,\qquad\mbox{and}\qquad \forall t>K^*:\quad \psi(t)=0\ ,
  \end{equation}
  where
  \begin{equation}\label{eq:K_optimal}
K^*=\left\{
\begin{array}{ll}
K\ , & \mbox{if}\;\;\alpha\leq 1-\frac{1}{S}\\
\;0\ , & \mbox{if}\;\;\alpha> 1-\frac{1}{S}\\
\end{array}\right.\ .
\end{equation}
Also the optimal stopping time is
\begin{equation}\label{eq:t_optimal}
\tau=\left\{
\begin{array}{ll}
\;K+ s(K) \ , & \mbox{if}\;\;\alpha\leq 1-\frac{1}{S}\\
\;S\ , & \mbox{if}\;\;\alpha> 1-\frac{1}{S}\\
\end{array}\right.\ ,
\end{equation}
where
\begin{equation}\label{eq:S_bar}
    s(K)\;\dff\; \left\lfloor S\cdot\alpha^{-K}+\frac{1-\alpha^{-K}}{1-\alpha}\right\rfloor\ .
\end{equation}

\end{theorem}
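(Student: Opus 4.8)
The plan is to reduce the optimization $\sP_n(S,K)$ in \eqref{eq:P} to a purely deterministic combinatorial problem — maximizing the stopping time $\tau$ — and then to solve that problem by an exchange argument together with a geometric-series computation.

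\emph{Step 1 (reduction to maximizing $\tau$).} I would fix an arbitrary feasible pair $(\tau,\bar\psi(\tau))$ and invoke Lemmas~\ref{lmm:detection} and~\ref{lmm:detection_var}: $\PP_n(\tau,\bar\psi(\tau))\to 0$ iff a threshold inequality of the form $r_{\rm m}>\frac{(1-\sqrt{{\varepsilon_n}})^2}{\tau}$ (mean) or $\xi_{\rm v}>\frac{2(1-{\varepsilon_n})}{\tau}$ (variance) holds, and since the right-hand side decreases in $\tau$, a larger $\tau$ can only help. By the remarks closing Sections~\ref{sec:mean} and~\ref{sec:var}, the detection condition dominates the retention conditions of Lemmas~\ref{lmm:refinement} and~\ref{lmm:refinement_var}, so the refinement steps retain essentially all rare events no matter how they are scheduled. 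Hence the optimal $(\tau,\bar\psi(\tau))$ is the one maximizing $\tau$ subject to the two hard constraints of \eqref{eq:P}. Since the trajectory $\{|\L_t|\}$ is determined by $\bar\psi(\tau)$ alone ($|\L_{t+1}|=|\L_t|$ if $\psi(t)=0$, and $|\L_{t+1}|=\bar\alpha$ as in \eqref{eq:alpha_bar} if $\psi(t)=1$), adaptivity of the stopping and switching rules buys nothing, and by Lemma~\ref{lmm:tau} $\tau\le S/\alpha^K$ is bounded, so this is a finite deterministic optimization.

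\emph{Step 2 (all refinements first).} Next I would show that, among switching sequences with a fixed number $j\le K$ of refinements, placing all of them at the start — $\psi(1)=\dots=\psi(j)=1$, $\psi(t)=0$ for $t>j$ — minimizes $\sum_{t=1}^{\tau-1}|\L_t|$ for every $\tau$, hence permits the largest feasible $\tau$. The argument is an exchange: if $\psi(t)=0$ and $\psi(t+1)=1$, swapping the two labels replaces $|\L_{t+1}|$ by $\bar\alpha(|\L_t|)$ instead of $|\L_t|$ and leaves every other $|\L_{t'}|$ unchanged (in particular $|\L_{t+2}|=\bar\alpha(|\L_t|)$ both before and after), strictly reducing the cumulative count by $|\L_t|-\bar\alpha(|\L_t|)$. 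Iterating bubbles every refinement to the front, so without loss of generality $\bar\psi(\tau)$ consists of $j$ refinements followed by observations only.

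\emph{Step 3 (geometric series and the optimal $j$).} For such a schedule $|\L_t|=\alpha^{t-1}n(1+o(1))$ for $1\le t\le j+1$ and $|\L_t|=\alpha^j n(1+o(1))$ for $t\ge j+1$ (the floor in \eqref{eq:alpha_bar} and the $T_n=o(n)$ shift perturb only lower-order terms, harmless since $\tau=\Theta(1)$). Summing, the sample constraint becomes $\frac{1-\alpha^j}{1-\alpha}+(\text{number of observation steps})\,\alpha^j\le S$, and maximizing the integer $\tau$ yields $\tau=j+s(j)$ with $s(j)$ as in \eqref{eq:S_bar}. I would then maximize $\tau(j)=j+s(j)$ over $j\in\{0,\dots,K\}$: before flooring, $\tau(j+1)-\tau(j)=1+\alpha^{-(j+1)}[(1-\alpha)S-1]$. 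If $(1-\alpha)S\ge 1$, i.e. $\alpha\le 1-\frac1S$, this is positive for all $j$, so $\tau(j)$ is increasing and $K^*=K$, $\tau=K+s(K)$. If $(1-\alpha)S<1$, i.e. $\alpha>1-\frac1S$, the $j=0$ increment equals $\frac{(1-\alpha)(S-1)}{\alpha}$ before flooring — which is absorbed by the floor precisely when $\alpha>1-\frac1S$ — and the increments are negative for $j\ge 1$, so no refinement strictly enlarges $\tau$; since each refinement irreversibly discards sequences while the retention guarantees are only asymptotic, the optimum is $K^*=0$, $\tau=s(0)=S$. In both regimes $\tau=K^*+s(K^*)$ and the optimal switching sequence is \eqref{eq:switch}.

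\emph{Main obstacle.} The hardest part is the integer bookkeeping in Step~3: verifying that the rounding in $\bar\alpha$ contributes only $o(1)$ to the geometric sum so that $\tau$ is \emph{exactly} $j+s(j)$ for large $n$, and showing that in the regime $\alpha>1-\frac1S$ the would-be fractional gain of a first refinement is always swallowed by the floor, so that $K^*=0$ is optimal — with the final tie-breaking supplied by the irreversibility of the refinement action. Steps~1 and~2 and the geometric summation itself are routine.
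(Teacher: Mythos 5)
Your proposal follows the paper's own route step for step: the reduction of $\sP_n(S,K)$ to maximizing $\tau$ (the paper does this by observing that $\tau$ enters the error probability only through $\sqrt{\tau}\,(\mu_0-\mu_1)$ in the mean case and through $\Theta^{\tau/2}$ in the variance case, which is equivalent to your appeal to the monotone thresholds of Lemmas~\ref{lmm:detection} and~\ref{lmm:detection_var}); the adjacent--transposition exchange argument showing that a $(1,0)$ block uses strictly fewer samples than a $(0,1)$ block, hence all refinements can be moved to the front; and the geometric-series computation giving $\tau=K^*+s(K^*)$. Steps~1 and~2 and the $\alpha\leq 1-\frac{1}{S}$ branch of Step~3 are correct and match the paper, since there the increment $1+\alpha^{-(j+1)}[(1-\alpha)S-1]\geq 1$ makes $j\mapsto j+s(j)$ increasing, so $K^*=K$.

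The gap is in your treatment of the branch $\alpha>1-\frac{1}{S}$: the claim that ``the increments are negative for $j\geq 1$'' is false. Writing $\beta\dff 1-(1-\alpha)S\in(0,1)$, the pre-floor increment is $1-\beta\,\alpha^{-(j+1)}$, which is negative only when $\alpha^{j+1}<\beta$; for moderate $j$ it is typically positive. Concretely, take $\alpha=0.9$ and $S=9$ (so $\alpha>1-\frac{1}{S}=\frac{8}{9}$): with no refinement $\tau=9$, while two refinements cost $n+0.9n=1.9n$ samples and leave $7.1n$ for observation rounds of size $0.81n$, giving $\tau=2+\lfloor 7.1/0.81\rfloor=10>9$. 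So the cumulative gain is not ``swallowed by the floor'' beyond $j=1$, and your tie-breaking argument does not rescue the claim $K^*=0$. To be fair, the paper's own proof of this branch has the same lacuna: it establishes only that $s(x)$ is decreasing in $x$ when $\alpha>1-\frac{1}{S}$ and silently drops the additive $K^*$ in $\tau=K^*+s(K^*)$, so it does not actually prove $K^*=0$ either. Since the paper explicitly restricts all subsequent results to the regime $\alpha\leq 1-\frac{1}{S}$, this defect is quarantined, but a complete proof of the theorem as stated would need a genuinely different argument (or a corrected statement) for the other regime.
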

\begin{proof}
See Appendix \ref{app:th:mean}.
\end{proof}
This theorem demonstrates that when $\alpha$ is large (close to 1) the optimal sampling does not involve any refinement action and adaptive sampling does not offer any gain over the non-adaptive sampling procedure. However, for sufficiently large $S$, only for limited choices of $\alpha$ does adaptation in sampling offer no gain and for a wide range of $\alpha$ the optimal sampling procedure involves refinement actions and becomes adaptive. Throughout the remainder of this paper we focus on the regime $\alpha\leq 1-\frac{1}{S}$.

\section{Adaptation Gains}
\label{sec:gain}

The main component of the proposed sampling procedure is the inclusion of the refinement actions, which makes the detection procedure adaptive to the data. In this section we show the gains attained by the inclusion of the refinement process. These gains can be viewed as the gain of adapting the detection process to the observed data and can be interpreted in two ways, namely in the forms of agility and scaling gains defined in the sequel. These gains essentially evaluate the contribution of the refinement actions by comparing the performance of the proposed sampling procedure against the same procedure without any refinement action, i.e., $K=0$.
\begin{definition}[Agility Gain] The agility gain, denoted by $G_{\rm a}$, is the ratio of the minimum sampling budget required by the non-adaptive procedure i.e., $K=0$, to that required by the adaptive procedure with $K>0$ refinement actions such that both achieve asymptotically error-free detection while enjoying identical scaling for $(\mu_0-\mu_1)$ in the mean setting or identical scaling for $A_0/A_1$ in the variance setting.
\end{definition}
In order to quantify this gain in the Gaussian mean setting we consider a non-adaptive detection procedure with the aggregate sampling budget controlled by $S_0$ and obtain the required scaling law for $(\mu_0-\mu_1)$ that guarantees asymptotically error-free detection. For the same scaling law in an adaptive procedure with $K$ refinement actions we assess the minimum sampling budget $S$ that ensures asymptotically error-free detection by the adaptive procedure as well. Then we find the agility gain as the ratio $\frac{S_0}{S}$. In order to analyze the agility gain for the variance setting we repeat the same steps by replacing all the arguments on the scaling of $(\mu_0-\mu_1)$ by those on the scaling of $A_0/A_1$.
\begin{definition}[Scaling Gain] The scaling gain, denoted by $G_{\rm s}$, is the ratio of the smallest scaling law required by the non-adaptive procedure to that required by the adaptive procedure when both target at achieving asymptotically error-free detection while enjoying identical sampling budgets.
\end{definition}
In order to analyze the scaling gain of the adaptive procedure with $K$ refinement actions in the Gaussian mean setting we assume that both adaptive and non-adaptive procedures are allocated the aggregate sampling budget $S\cdot n$ and aim at identifying the smallest values of $r_{\rm m}$ and $r_{\rm m}^0$ such that the scaling laws $(\mu_0-\mu_1)=\sqrt{2\;r_{\rm m}\ln n}\;$ and $(\mu_0-\mu_1)=\sqrt{2\;r^0_{\rm m}\ln n}$ guarantee asymptotically error-free detection for the adaptive and the non-adaptive procedures, respectively. The scaling gain can be found as $\frac{r^0_{\rm m}}{r_{\rm m}}$. For computing the scaling gain in the variance setting we follow the same logic and instead of finding $r_{\rm m}$ and $r_{\rm m}^0$ we consider the scaling laws $A_0/A_1=n^{\xi_{\rm v}}$ and $A_0/A_1=n^{\xi^0_{\rm v}}$ for the adaptive and non-adaptive procedures and find $\xi_{\rm v}$ and $\xi_{\rm v}^0$, respectively. In this case the scaling gain is defined as $\frac{\xi_{\rm v}^0}{\xi_{\rm v}}$.

Analyzing the scaling and agility gains strongly relies on the connection between the available sampling budget $S$ and the scaling laws that lead to reliable detection of the sequences of interest. The following two theorems establish this connection.
\begin{theorem}[Mean Scaling]
\label{th:power_NA} When $\alpha\leq 1-\frac{1}{S}$ a necessary and sufficient condition for $\sP_n(S,K)\xrightarrow{n\rightarrow\infty} 0$ is that
\begin{equation}\label{eq:mean_scaling}
r_{\rm m}\;>\;
\frac{(1-\sqrt{{\varepsilon_n}})^2}{K+s(K)}\ ,
\end{equation}
where $r_{\rm m}$ is defined in \eqref{eq:rm} and $s(K)$ was defined as
\begin{equation}\label{eq:S_bar_2}
    s(K)\;\dff\; \left\lfloor S\cdot\alpha^{-K}+\frac{1-\alpha^{-K}}{1-\alpha}\right\rfloor\ ,
\end{equation}
where $\alpha$ controls what fraction of the events are discarded at each refinement cycle, $S$ is the aggregate sampling budget normalized by $n$, and $K$ is the maximum allowable number of refinement actions.
\end{theorem}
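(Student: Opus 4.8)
The plan is to obtain this theorem essentially as a corollary of Lemma~\ref{lmm:detection}, Lemma~\ref{lmm:tau}, and Theorem~\ref{th:stop}; almost all of the analytic work is already contained in those results, so what remains is an accounting argument. The pivotal observation is that Lemma~\ref{lmm:detection} already collapses the question ``does $\PP_n(\tau,\bar\psi(\tau))\to 0$?'' to the single scalar inequality $r_{\rm m}>(1-\sqrt{\varepsilon_n})^2/\tau$, which sees the switching sequence only through the realized stopping time $\tau$ and is monotone in $\tau$. Hence $\sP_n(S,K)\to 0$ if and only if there is a schedule $(\tau,\bar\psi(\tau))$ that is feasible for \eqref{eq:P} (at most $K$ refinements, normalized aggregate budget at most $S$) and satisfies that inequality; and because the inequality only gets easier as $\tau$ grows, the problem reduces to identifying the largest feasible stopping time $\tau^*$.

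For sufficiency I would take the schedule prescribed by Theorem~\ref{th:stop} in the regime $\alpha\le 1-\frac1S$: spend the first $K$ steps on refinement, so that the retained set shrinks geometrically, $|\L_t|\aeq \alpha^{t-1}n$ for $t\le K+1$ (valid since $T_n=o(n)$), and then take observation steps on the surviving set of size $\aeq\alpha^K n$ until stopping. Summing $\frac1n\sum_{t=1}^{\tau-1}|\L_t|$ along this trajectory gives $\frac{1-\alpha^K}{1-\alpha}+(\tau-1-K)\alpha^K$, so the budget constraint permits $\tau$ up to $\tau^*=K+s(K)$ with $s(K)$ as in \eqref{eq:S_bar_2}, and this schedule uses exactly $K$ refinements; thus it is feasible. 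By Lemma~\ref{lmm:detection} with $\tau=\tau^*$, the hypothesis $r_{\rm m}>(1-\sqrt{\varepsilon_n})^2/(K+s(K))$ then forces $\PP_n\to 0$, hence $\sP_n(S,K)\to 0$. Before invoking Lemma~\ref{lmm:detection} one should check that the refinement cycles indeed retain almost all rare events, i.e. that \eqref{eq:mu_scaling_ratio0} of Lemma~\ref{lmm:refinement} holds: but the hypothesis makes $(\mu_0-\mu_1)^2$ of order $\ln n$, so $\mu_0-\mu_1$ grows without bound and is therefore $\omega(n^{-\varepsilon_n/2})$; this is precisely the regime in which Lemma~\ref{lmm:detection} was established.

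For necessity, assume $\sP_n(S,K)\to 0$. Then some feasible schedule $(\tau,\bar\psi(\tau))$ has $\PP_n\to 0$, so by Lemma~\ref{lmm:detection} its stopping time obeys $r_{\rm m}>(1-\sqrt{\varepsilon_n})^2/\tau$. It remains to show that no feasible schedule can have $\tau>K+s(K)$; equivalently, that front-loading all $K$ refinement cycles is the cheapest way (in aggregate samples) to reach a given number of steps, which is exactly the schedule-optimality content of Theorem~\ref{th:stop} and is also consistent with the cruder bound $\tau\le S/\alpha^K$ of Lemma~\ref{lmm:tau}. Combining $\tau\le K+s(K)$ with the inequality above yields $r_{\rm m}>(1-\sqrt{\varepsilon_n})^2/(K+s(K))$, completing the ``only if'' direction.

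The only delicate point, in an argument that is otherwise a chain of citations, is making the geometric-sum accounting rigorous in the presence of the floors --- both the $\lfloor\alpha(|\L_t|-T_n)\rfloor+T_n$ in the definition \eqref{eq:alpha_bar} of $\bar\alpha$ and the floor in $s(K)$ --- and confirming that these roundings cost only $o(n)$ samples per step, so that $\tau^*=K+s(K)$ is simultaneously achievable and an upper bound on the feasible stopping time. Since the required schedule optimality is precisely Theorem~\ref{th:stop}, I would invoke it directly rather than re-run the combinatorial optimization over schedules.
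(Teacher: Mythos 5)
Your proposal is correct and follows essentially the same route as the paper: reduce the question to the single inequality $r_{\rm m}>(1-\sqrt{\varepsilon_n})^2/\tau$ via Lemma~\ref{lmm:detection}, then use the schedule optimality of Theorem~\ref{th:stop} to identify the maximal feasible stopping time $\tau=K+s(K)$ and substitute. Your additional checks (feasibility accounting of the front-loaded schedule and verifying that the hypothesis implies the refinement condition \eqref{eq:mu_scaling_ratio0}) are sound elaborations of steps the paper leaves implicit.
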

\begin{proof}
The proof can be established by combining the results from Lemma~\ref{lmm:detection} and the optimal stopping time given in Theorem~\ref{th:stop}. As shown in the proof of Lemma~\ref{lmm:detection} a necessary and sufficient condition for having asymptotically error-free detection is that $B_n=\omega(1)$, where $B_n$ is defined in \eqref{eq:A}. Moreover, we have also proved that a necessary and sufficient condition for $B_n=\omega(1)$ is that
\begin{equation*}
r_{\rm m}\;>\;
\frac{(1-\sqrt{{\varepsilon_n}})^2}{\tau}\ .
\end{equation*}
On the other hand, in the proof of Theorem~\ref{th:stop} we showed that the detection error probability is minimized when $\tau$ is maximized. In other words, the smallest necessary scaling law for $(\mu_0-\mu_1)^2$ is obtained when $\tau$ is maximized and is equal to $K+s(K)$. By substituting this value into the equation above we obtain the desired result.
\end{proof}
For a given sampling budget $S\cdot n$ and the number of refinement actions $K$ this corollary delineates the asymptotic performance of the proposed detector in the $(r_{\rm m},{\varepsilon_n})$ plane.  It shows that when $(\mu_0-\mu_1)$ scales as $\sqrt{2\;r\ln n}$, if $r> r_{\rm m}$ the proposed sequential detection procedure is guaranteed to make error-free decisions for identifying $T_n$ sequences generated by $F_1$. On the other hand, when $r\leq r_{\rm m}$ the probability of erroneous detection is bounded away from zero. Therefore, $r_{\rm m}$ defines a sharp threshold for identifying the sequences of interest under the objective and constraints in \eqref{eq:P}.

On a related context, note the relevant results provided by ~\cite{Abramovich} and \cite{Jin:2006_1}  when the objective is to identify {\em all} sequences generated by $F_1$ through observing each sequence only {\em once}, i.e., $S=1$ and $K=0$. The works in~\cite{Abramovich} and \cite{Jin:2006_1} show that for ${\varepsilon_n}\in(0,\frac{1}{2})$ the false-discovery and non-discovery proportions\footnote{The false-discovery proportion is the number of falsely discovered sequences that are generated by $F_1$ relative to the total number of sequences, and the non-discovery proportion is the ratio of the number of sequences generated by $F_0$ that are missed to the entire number of sequence.} tend to zero if and only if $(\mu_0-\mu_1)$ scales as $\sqrt{2\;r\ln n}$ and $r>1-{\varepsilon_n}$. This is clearly a more stringent condition than the requirement $r>r_{\rm m}=(1-\sqrt{{\varepsilon_n}})$  corresponding to the setting $S=1$ and $K=0$ in our sampling procedure. This discrepancy demonstrates the tradeoff between {\em partial} versus {\em full} recovery of the sequences of interest on one hand, and the required scaling law on $(\mu_0-\mu_1)$ on the other hand.

In the following lemma we focus on the variance setting and by using the result of Lemma~\ref{lmm:detection_var} and Theorem~\ref{th:stop}, we offer a necessary and sufficient condition on the scaling of $A_0/A_1$ in order to guarantee asymptotically error-free detection.
\begin{theorem}[Variance Scaling]
\label{th:power_NA_var} When $\epsilon_n=o(1)$, $n\epsilon_n=\omega(1)$, and $\alpha\leq 1-\frac{1}{S}$ a necessary and sufficient condition for $\sP_n(S,K)\xrightarrow{n\rightarrow\infty} 0$ is that
\begin{equation}\label{eq:rv}
\xi_{\rm v}\;>\;
\frac{2(1-{\varepsilon_n})}{K+s(K)}\ ,
\end{equation}
where $\xi_{\rm v}$ is defined in \eqref{eq:xi} and $s(K)$ was defined as
\begin{equation}\label{eq:S_bar_2}
    s(K)\;\dff\; \left\lfloor S\cdot\alpha^{-K}+\frac{1-\alpha^{-K}}{1-\alpha}\right\rfloor\ ,
\end{equation}
where $\alpha$ controls what fraction of the events are discarded at each refinement cycle, $S$ is the aggregate sampling budget normalized by $n$, and $K$ is the maximum allowable number of refinement actions.
\end{theorem}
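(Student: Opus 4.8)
The plan is to mirror exactly the argument used for the Mean Scaling theorem (Theorem~\ref{th:power_NA}), substituting the variance-setting analogues throughout. First I would invoke Lemma~\ref{lmm:detection_var}, which states that, conditionally on $n_\tau$ and $\bar n_\tau$, the detection error probability $\PP_n(\tau,\bar\psi(\tau))$ vanishes as $n\to\infty$ if and only if $\xi_{\rm v} > 2(1-{\varepsilon_n})/\tau$. This is a per-realization statement in terms of the stochastic stopping time $\tau$, so the next task is to replace $\tau$ by the deterministic quantity that the optimal scheme achieves.

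Second, I would bring in Theorem~\ref{th:stop}, which (under the standing assumption $\alpha \le 1 - \tfrac{1}{S}$, so that $K^* = K$) gives the optimal stopping time $\tau = K + s(K)$ with $s(K)$ as in \eqref{eq:S_bar}, and which establishes that the error probability is minimized precisely by this choice of $\tau$ and the associated switching sequence. The logical point to make explicit is the same one flagged in the proof of Theorem~\ref{th:power_NA}: since the error probability decreases as $\tau$ grows, the \emph{least stringent} scaling requirement on $\xi_{\rm v}$ is obtained by taking $\tau$ as large as the budget constraints permit, which Theorem~\ref{th:stop} shows is $\tau = K + s(K)$. Plugging this into the condition from Lemma~\ref{lmm:detection_var} yields $\xi_{\rm v} > 2(1-{\varepsilon_n})/(K+s(K))$, which is \eqref{eq:rv}.

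Third, I would verify that this condition is genuinely both necessary and sufficient, i.e., that no residual effect of the conditioning on $n_\tau$, $\bar n_\tau$ intrudes. Here I would appeal to Lemma~\ref{lmm:refinement_var}: its sufficient condition \eqref{eq:A_scaling_ratio} for $n_\tau = n_1$ almost surely is dominated by the necessary condition \eqref{eq:rv} (this dominance was already observed in the remark closing Section~\ref{sec:var}), so whenever \eqref{eq:rv} holds the refinement process retains \emph{all} rare events almost surely, and $n_\tau = n_1$, $\bar n_\tau$ a fixed deterministic function of the scheme. Hence the conditional statement of Lemma~\ref{lmm:detection_var} holds unconditionally, and de-conditioning is trivial. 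The hypotheses $\epsilon_n = o(1)$ and $n\epsilon_n = \omega(1)$ are exactly what is needed to keep ${\varepsilon_n} \in (0,1)$ and to make the order-statistics asymptotics in Lemmas~\ref{lmm:chi_min}, \ref{lmm:chi_max}, and \ref{lmm:detection_var} valid.

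I do not expect a serious obstacle, since all the heavy lifting — the order-statistics asymptotics for $\chi^2$ minima and maxima and the resulting exact characterization of the detection error — is already packaged in Lemmas~\ref{lmm:chi_min}--\ref{lmm:detection_var} and Theorem~\ref{th:stop}. The one point requiring care is the same subtlety as in the mean case: one must argue that substituting the upper bound $\tau = K + s(K)$ is not merely necessary but also achievable, so that the resulting inequality is tight in both directions; this is precisely the content of the optimality claim in Theorem~\ref{th:stop}, which I would cite rather than re-prove. A secondary bookkeeping point is to confirm that the floor in the definition of $s(K)$ does not affect the asymptotic scaling (it changes $s(K)$ by at most $1$, which is negligible against the leading behavior), so that the threshold in \eqref{eq:rv} is the sharp one.
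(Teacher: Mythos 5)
Your proposal follows essentially the same route as the paper, which proves this theorem by simply noting it ``follows from the same line of argument as in the proof of Theorem~\ref{th:power_NA}'', i.e., combining the necessary and sufficient condition of Lemma~\ref{lmm:detection_var} with the maximal stopping time $\tau = K + s(K)$ from Theorem~\ref{th:stop}. Your additional care about de-conditioning on $n_\tau,\bar n_\tau$ via Lemma~\ref{lmm:refinement_var} is a point the paper leaves implicit, but it is the same argument.
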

\begin{proof}
This results follows from the same line of argument as in the proof of Theorem~\ref{th:power_NA}.
\end{proof}
\begin{figure}[1t]\label{fig:1}
\centering
\subfigure[Mean scaling]{
\includegraphics[width = 3.1 in]{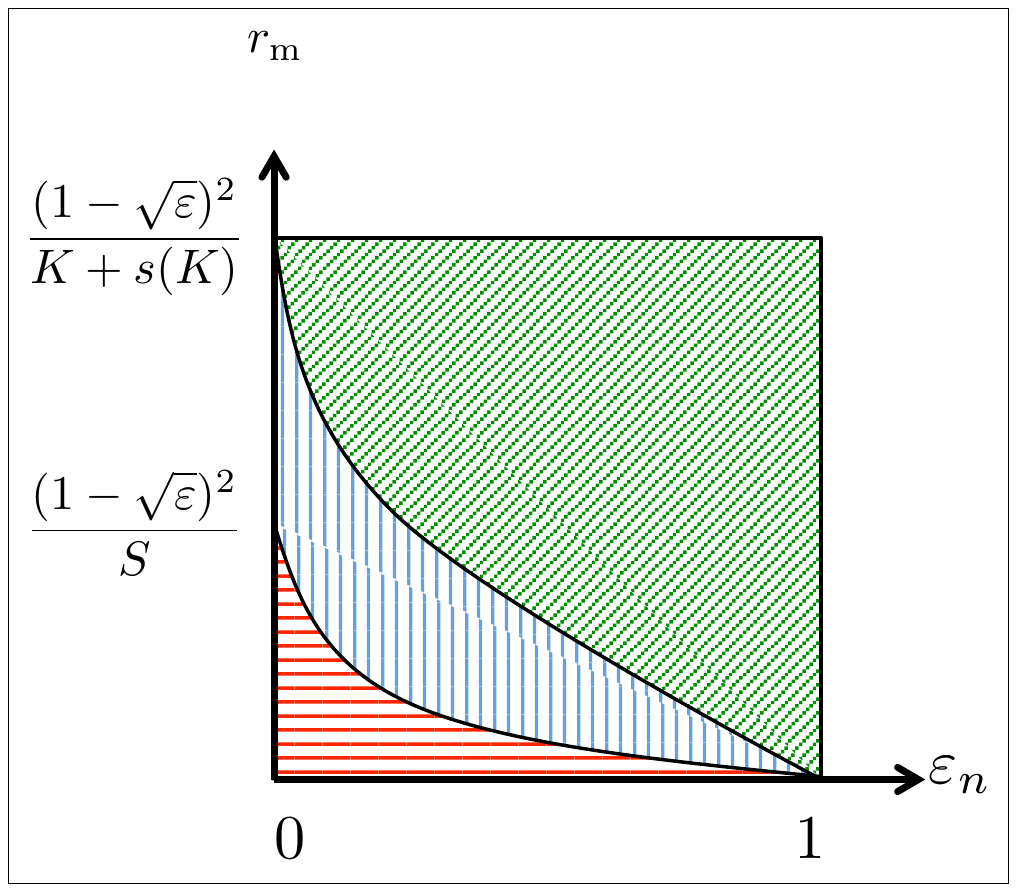}
\label{fig:subfig1}
}
\subfigure[Variance scaling]{
\includegraphics[width = 3.1 in]{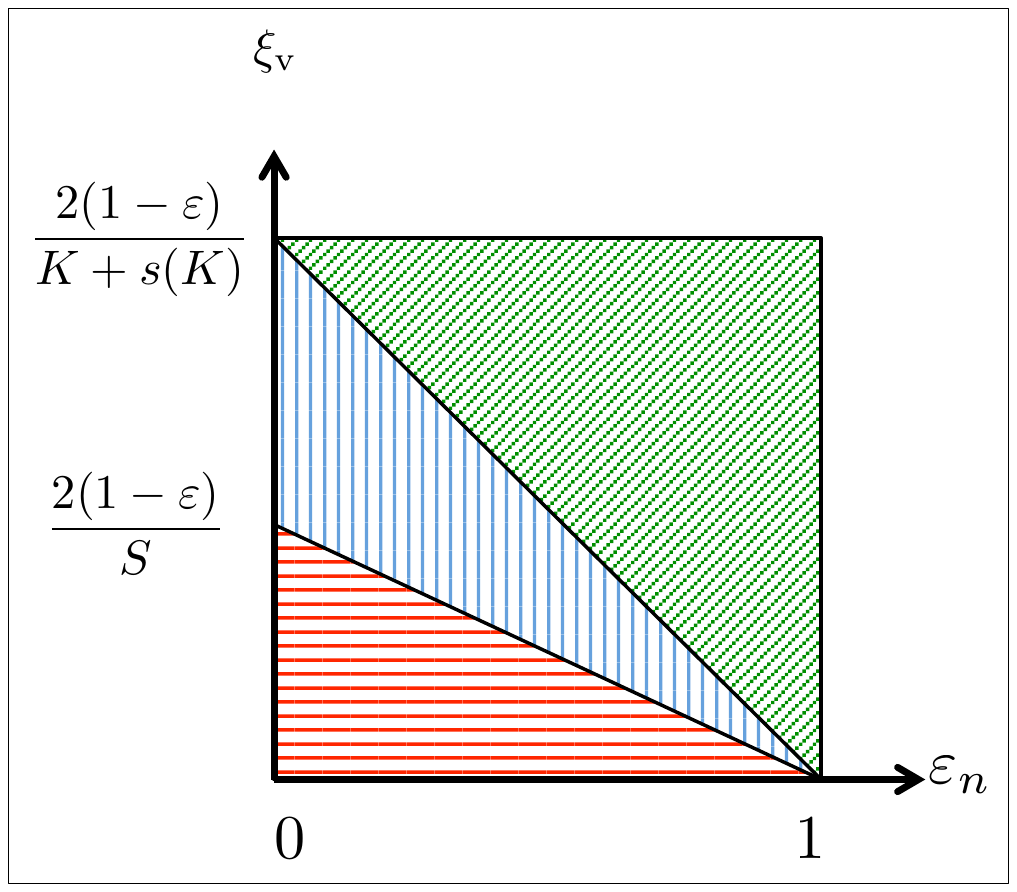}
\label{fig:subfig2}
}
\label{fig:subfigureExample}
\caption[Detectable regions]{Detectable vs. non-detectable regions}
\end{figure}

Figures \ref{fig:subfig1} compares the regions in the $(r_{\rm m},{\varepsilon_n})$ plane over which the adaptive and the non-adaptive procedures are guaranteed to make error-free detections. Specifically, the diagonally shaded region is the region in which both schemes succeed to detect the $T_n$ sequences of interest. In the vertically dashed region, however, only the adaptive procedure succeeds and the non-adaptive procedure makes an erroneous decision almost surely, and finally both schemes fail in the horizontally shaded region. It is observed that, depending on the choice of $K$, the detectability region corresponding to the adaptive procedure can be substantially larger than the one corresponding to the non-adaptive procedure. Figure~\ref{fig:subfig2} depicts the counterpart regions in the variance setting in the $(\xi_{\rm v},{\varepsilon_n})$ plane. Given  the necessary and sufficient conditions on the scaling law for performing error-free detection, in the following corollaries we determine the agility and scaling gains.
\begin{corollary}[Agility Gain]
\label{cor:agility_mean} When $\epsilon_n=o(1)$, $n\epsilon_n=\omega(1)$, and $\alpha\leq 1-\frac{1}{S}$ the agility gain satisfies
\begin{equation}\label{eq:Ga}
    \frac{1-\alpha^{K}}{1-\alpha}\;\leq\; S_0(G_{\rm a}^{-1}-\alpha^K) \;\leq\;\frac{1-\alpha^{K+1}}{1-\alpha}\ ,
\end{equation}
and in the asymptote of large $K$, the upper and lower bounds on the agility gain meet and are equal to $S_0(1-\alpha)$.
\end{corollary}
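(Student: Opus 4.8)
The plan is to read off from Theorem~\ref{th:power_NA} the smallest normalised budget each procedure needs in order to make $\sP_n\to 0$, and then eliminate the common scaling of $(\mu_0-\mu_1)$ between the two.

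Fix the scaling $(\mu_0-\mu_1)=\sqrt{2\,r_{\rm m}\ln n}$, i.e.\ fix $r_{\rm m}$. The non-adaptive procedure is the instance $K=0$ with budget $S_0$, and by Theorem~\ref{th:power_NA} it achieves $\sP_n(S_0,0)\to 0$ iff $r_{\rm m}>(1-\sqrt{{\varepsilon_n}})^2/s(0)$ with $s(0)=\lfloor S_0\rfloor$; hence its smallest admissible budget is the one for which this bound is met with asymptotic equality, so that its effective horizon is $s(0)\doteq(1-\sqrt{{\varepsilon_n}})^2/r_{\rm m}$, up to integer rounding. For the adaptive procedure with $K$ refinement cycles and budget $S$, the same theorem demands $K+s(K)\ge s(0)$; since $s(K)=\big\lfloor\alpha^{-K}\big(S-\frac{1-\alpha^{K}}{1-\alpha}\big)\big\rfloor$ is nondecreasing in $S$ and — for $S_0$ large enough, with $\alpha\le 1-\frac1S$ keeping it positive — attains every integer in the relevant range, the minimal admissible $S$ is characterised by the equality $K+s(K)=s(0)$.

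I would then impose $s(K)=s(0)-K$, substitute the displayed form of $s(K)$, and solve for $S$: dropping the floor one gets $\alpha^{-K}\big(S-\frac{1-\alpha^{K}}{1-\alpha}\big)=s(0)-K$, so $S=\frac{1-\alpha^{K}}{1-\alpha}+\alpha^{K}\big(s(0)-K\big)$, where the geometric sum $\frac{1-\alpha^{K}}{1-\alpha}=1+\alpha+\cdots+\alpha^{K-1}$ is precisely the normalised budget consumed by the $K$ coarse/refinement rounds. Substituting $s(0)=\lfloor S_0\rfloor$ and $G_{\rm a}=S_0/S$ (so $S=S_0G_{\rm a}^{-1}$) and carrying the two floor operations — the one in $\lfloor S_0\rfloor$ and the one in $s(K)$, each perturbing its argument by less than one unit — through this identity turns the single equality into a lower and an upper estimate for $S_0\big(G_{\rm a}^{-1}-\alpha^{K}\big)$; to leading order these reduce to $\frac{1-\alpha^{K}}{1-\alpha}$ and $\frac{1-\alpha^{K+1}}{1-\alpha}$, the two differing by exactly $\alpha^{K}=\frac{1-\alpha^{K+1}}{1-\alpha}-\frac{1-\alpha^{K}}{1-\alpha}$, i.e.\ by the combined slack of the two roundings, which is the asserted inequality. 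For the large-$K$ statement one only needs $\alpha\in(0,1)\Rightarrow\alpha^{K}\to 0$: both bounds then converge to $\frac1{1-\alpha}$, whence $S_0G_{\rm a}^{-1}\to\frac1{1-\alpha}$ and $G_{\rm a}\to S_0(1-\alpha)$. The argument is word-for-word the same in the variance setting via Theorem~\ref{th:power_NA_var}, since the budget-to-horizon map $K+s(K)$ is identical there.

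The step I expect to be the main obstacle is not any single calculation but the bookkeeping: one must argue that the minimiser is exactly the budget for which the adaptive horizon $K+s(K)$ equals the non-adaptive horizon $s(0)$, leaving no extra slack, and then that the two unavoidable integer roundings together account for no more than the single $\alpha^{K}$ gap between the two stated bounds. The one technical point to verify with care is that, as $S$ increases, $K+s(K)$ actually hits the target value $s(0)$ rather than stepping over it — which again rests on $\alpha\le 1-\frac1S$ keeping the quantity inside the floor in range for large $S_0$.
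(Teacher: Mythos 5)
Your overall route is the same as the paper's: use Theorem~\ref{th:power_NA} to reduce ``identical scaling, both error-free'' to equating the effective horizons of the adaptive and non-adaptive procedures, solve the resulting floor identity for $S$, and convert the unit of rounding slack into the two-sided bound on $S_0(G_{\rm a}^{-1}-\alpha^K)$; the large-$K$ limit is handled identically ($\alpha^K\to 0$, both bounds collapse to $\tfrac{1}{1-\alpha}$, whence $G_{\rm a}\to S_0(1-\alpha)$).

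However, the final algebra does not close the way you assert. You equate $K+s(K)=s(0)$, i.e.\ $s(K)=s(0)-K$, and obtain $S=\tfrac{1-\alpha^{K}}{1-\alpha}+\alpha^{K}\bigl(s(0)-K\bigr)$. With $s(0)\doteq S_0$ this gives $S_0\bigl(G_{\rm a}^{-1}-\alpha^{K}\bigr)=S-S_0\alpha^{K}=\tfrac{1-\alpha^{K}}{1-\alpha}-K\alpha^{K}$ up to rounding, which sits strictly \emph{below} the asserted lower bound $\tfrac{1-\alpha^{K}}{1-\alpha}$ for every $K\geq 1$. This offset cannot be absorbed by the floor corrections: each floor perturbs $S$ by at most $\alpha^{K}$, whereas the discrepancy is $K\alpha^{K}$. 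So your claim that the two estimates ``to leading order reduce to'' $\tfrac{1-\alpha^{K}}{1-\alpha}$ and $\tfrac{1-\alpha^{K+1}}{1-\alpha}$ is not justified for finite $K$; your derivation yields the interval of \eqref{eq:Ga} shifted down by $K\alpha^{K}$. The paper's proof instead writes the matching condition as $\bigl\lfloor S\alpha^{-K}+\tfrac{1-\alpha^{-K}}{1-\alpha}\bigr\rfloor=S_0$ --- equating $s(K)$, not $K+s(K)$, with $S_0$ --- after which a single unit of floor slack scaled by $\alpha^{K}$ produces exactly the gap $\tfrac{1-\alpha^{K+1}}{1-\alpha}-\tfrac{1-\alpha^{K}}{1-\alpha}=\alpha^{K}$ in \eqref{eq:Ga}. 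Whether the $+K$ belongs in the horizon is a bookkeeping question about Theorem~\ref{th:stop} versus Theorem~\ref{th:power_NA} (your accounting is the literal reading of $\tau=K+s(K)$), but as a proof of the corollary \emph{as stated} your computation lands on a different inequality; to recover \eqref{eq:Ga} you must drop the $-K$ from $s(K)=s(0)-K$. Only the large-$K$ conclusion survives either accounting, since $K\alpha^{K}\to 0$.
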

\begin{proof}
See Appendix \ref{app:cor:mean:agility}.
\end{proof}
This result indicates that for a given set of sequences, the proposed sequential detection procedure can make a detection decision with substantially fewer measurements compared with the non-adaptive sampling procedure. Specifically, the sampling budget required by the adaptive procedure reduces exponentially with the number of refinement actions $K$. It is noteworthy that while the number of refinement actions $K$ can be made arbitrarily large (but fixed as a function of $n$), increasing it beyond some point increases the agility gain marginally. More specifically, for large $K$, both upper and lower bounds on the agility gain tend to $S_0(1-\alpha)$, which is a constant. This observation sheds light on the fundamental limit of the agility gain yielded by adaptive sampling.

In Fig.~\ref{fig:QS} we provides a numerical comparison between the proposed adaptive sampling scheme and the CUSUM test repeated $T_n$ times. With the target error probability 
$\sP_n(S,K)=10^{-5}$, the plot depicts the necessary {\em average} number of samples by the CUSUM and the necessary the number of samples by the adaptive sampling schemes for different number of refinement actions. We consider $n=10^4$ events and set the prior probability $\epsilon_n=n^{\varepsilon-1}$ and aim to identify $T_n=\sqrt{n\epsilon_n}$ rare events. The normal events are distributed as ${\cal N}(0,A_0$ and the rare events as ${\cal N}(0,A_1)$ with the variance values satisfying $A0/A1=n^{1/20}$. It is observed that for the setting that distribution $F_1$ occurs very rarely (i.e., smaller values of $\varepsilon_n$) the adaptive sampling strategy is quicker than the repeated CUSUM tests. On the other hand, by increasing the priors, the repeated CUSUM test outperforms the adaptive sampling strategy.  

\begin{figure}[t]
\centering
\includegraphics[width=5in]{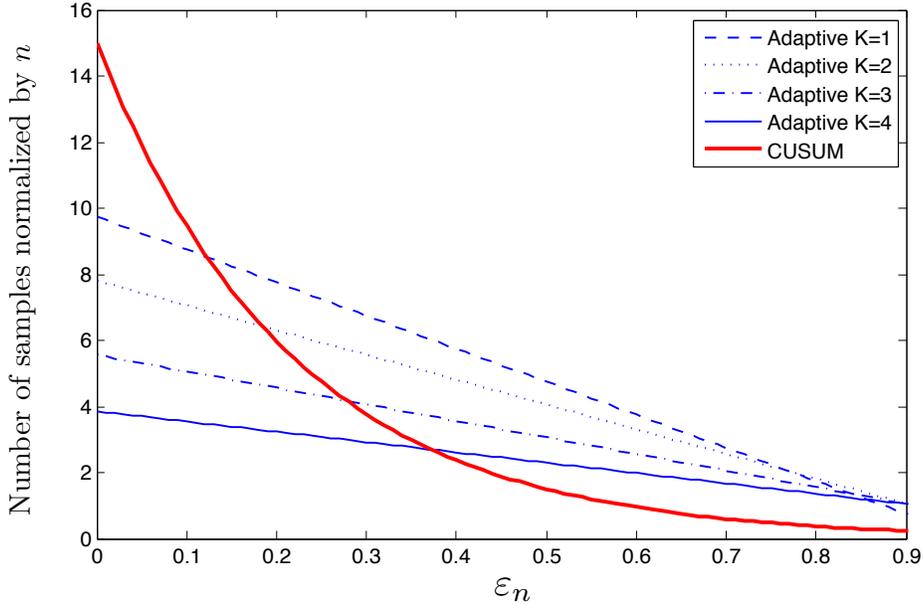}\\
\caption{Normalized sampling budget $S$ versus the prior probability controlled by $\varepsilon_n$.}
\label{fig:QS}
\end{figure}
\begin{corollary}[Scaling Gain]
\label{cor:scaling_mean} When $\alpha\leq 1-\frac{1}{S}$ the scaling gain is
\begin{equation*}
    \alpha^{-K}\left(1+\frac{1}{S}\cdot\frac{\alpha^{K+1}-1}{1-\alpha}\right)\;\leq\;{G_{\rm s}}\;\leq\;\alpha^{-K}\left(1+\frac{1}{S}\cdot\frac{\alpha^{K}-1}{1-\alpha}\right)\ ,
\end{equation*}
and in the asymptote of large $K$ the upper and lower bounds on $G_{\rm s}$ meet and are equal to $\alpha^{-K}(1-\frac{1}{S(1-\alpha)})$.
\end{corollary}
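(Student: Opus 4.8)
\noindent\emph{Proof proposal.} The plan is to read the two relevant detection thresholds directly off Theorem~\ref{th:power_NA} and take their ratio; once that is done, the only work left is the bookkeeping of the floor in $s(K)$ and the passage to $K\to\infty$. By Theorem~\ref{th:power_NA}, an adaptive procedure with normalized budget $S$ and $K$ refinement cycles attains $\sP_n(S,K)\to 0$ exactly when $r_{\rm m}>\frac{(1-\sqrt{{\varepsilon_n}})^2}{K+s(K)}$, so the smallest coefficient for which $(\mu_0-\mu_1)=\sqrt{2r_{\rm m}\ln n}$ still yields error-free detection is $r_{\rm m}\doteq\frac{(1-\sqrt{{\varepsilon_n}})^2}{K+s(K)}$; specializing to the non-adaptive case $K=0$ gives $s(0)=\lfloor S\rfloor=S$ and the threshold $r_{\rm m}^0\doteq\frac{(1-\sqrt{{\varepsilon_n}})^2}{S}$. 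The prior-dependent factor $(1-\sqrt{{\varepsilon_n}})^2$ is common to both, hence cancels in
\[
G_{\rm s}\;=\;\frac{r_{\rm m}^0}{r_{\rm m}}\;=\;\frac{K+s(K)}{S}\;\doteq\;\frac{s(K)}{S}\ ,
\]
the last step discarding the $K$ coarse-observation rounds, which are of lower order than $s(K)=\Theta(\alpha^{-K})$.

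Next I would control $s(K)$ with the elementary bound $x-1<\lfloor x\rfloor\le x$ applied to $x=S\alpha^{-K}+\frac{1-\alpha^{-K}}{1-\alpha}$, giving $x-1<s(K)\le x$. Dividing by $S$ and rewriting the fractions via the identities $\alpha^{-K}(\alpha^{K}-1)=1-\alpha^{-K}$ and $\alpha^{-K}(\alpha^{K+1}-1)=(1-\alpha^{-K})-(1-\alpha)=\alpha-\alpha^{-K}$ turns the upper estimate $x/S$ into $\alpha^{-K}\bigl(1+\frac{1}{S}\cdot\frac{\alpha^{K}-1}{1-\alpha}\bigr)$ and the lower estimate $(x-1)/S$ into $\alpha^{-K}\bigl(1+\frac{1}{S}\cdot\frac{\alpha^{K+1}-1}{1-\alpha}\bigr)$, which are precisely the two bounds in the statement. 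For the limiting value, I would note that each bound equals $\alpha^{-K}\bigl(1-\frac{1}{S(1-\alpha)}\bigr)$ plus the bounded constant $\frac{1}{S(1-\alpha)}$ (upper) or $\frac{\alpha}{S(1-\alpha)}$ (lower); the regime hypothesis $\alpha\le 1-\frac1S$ is exactly $1-\frac{1}{S(1-\alpha)}\ge 0$, so for $\alpha<1-\frac1S$ the term $\alpha^{-K}\bigl(1-\frac{1}{S(1-\alpha)}\bigr)$ diverges as $K\to\infty$ while the additive constants stay fixed; hence the ratio of each bound to $\alpha^{-K}\bigl(1-\frac{1}{S(1-\alpha)}\bigr)$ tends to $1$, and in particular the upper-to-lower ratio tends to $1$. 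This is the asserted common limit (the two bounds differ only by the constant $1/S$, so ``meet'' is to be read in the asymptotic-equivalence sense of Definition~\ref{def:asym}).

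The proof carries essentially no analytic content once Theorem~\ref{th:power_NA} is granted. The one point that demands care — and where I expect the main, though modest, obstacle — is the bookkeeping in the first step: justifying that the $K$ additive coarse rounds present in the optimal stopping time $\tau=K+s(K)$ (Theorem~\ref{th:stop}) are immaterial to the scaling gain at the stated order, so that the effective comparison is $s(K)$ against $s(0)=S$, and checking that the floor in $s(K)$ cannot push $G_{\rm s}$ outside the claimed window.
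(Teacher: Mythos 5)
Your proposal is correct and follows essentially the same route as the paper, whose own proof of this corollary is a one-line deferral to the agility-gain argument: take the ratio of the two detection thresholds from Theorem~\ref{th:power_NA}, reduce it to $s(K)/S$, and convert the floor bounds $x-1<\lfloor x\rfloor\le x$ into the two stated expressions via the identities you list. Your explicit flagging of the dropped $+K$ term and of the ratio-sense reading of ``meet'' (the two bounds differ by the fixed constant $1/S$ for every $K$) matches, and in fact makes more honest, the paper's own implicit treatment, which already equates $\lfloor S\alpha^{-K}+\frac{1-\alpha^{-K}}{1-\alpha}\rfloor$ with the non-adaptive stopping time without carrying the $K$ refinement rounds.
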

\begin{proof}
The proof follows the same line of argument as the proof of Corollary~\ref{cor:agility_mean} and the characterization of $\xi_{\rm v}$ in \eqref{eq:xi}.
\end{proof}
This corollary indicates that the scaling gain grows exponentially with the number of refinement actions. This result also indicates that the adaptive procedure can detect signals with much smaller means than those detectable by the non-adaptive procedure. More specifically, by noting that $\alpha^{-K}$ is substantially larger than $1$, the mean scaling requirement in the adaptive scenario becomes substantially less stringent than its counterpart in the non-adaptive scenario. As a result, there are scenarios in which non-adaptive schemes fail to successfully identify $T$ sequences of interest, while the adaptive scheme succeeds.

\section{Conclusion}
In this paper we have presented an adaptive sampling methodology for quickly searching over finitely many events with the objective of identifying multiple events that occur sparsely and are distributed according to a given distribution of interest. The main idea of the sampling procedure is to successively and gradually adjust the measurement process using information gleaned from the previous measurements. Compared to corresponding non-adaptive procedures, dramatic gains in terms of reliability and agility are achieved.

\appendix

\section{Proof of Lemma \ref{lmm:gaussian_min}}
\label{app:lmm:gaussian_min}
Let us define $Y_1,\dots,Y_m$ as i.i.d. standard Gaussian random variables with cdf $G$ and pdf $g$ and define the function $h:\{x\in\mathbb{R}\;:\;x>1\}\rightarrow\mathbb{R}^+$
as
\begin{equation*}
    h(x)\dff 2\ln x-\ln\ln x\ .
\end{equation*}
By defining
\begin{eqnarray}\label{eq:Wi_last}
    W_i & = & a_m\;+\;b_m\;Y_i\ ,
\end{eqnarray}
and setting
\begin{equation}\label{eq:ambm}
    a_m\;=\;h(m)\ ,\qquad\mbox{and}\qquad b_m=\sqrt{h(m)}\ ,
\end{equation}
for the cdf of $W_{i:m}$, denoted by $Q_{1:m}(\cdot;m)$, we have
\begin{eqnarray}\label{eq:1_Q}
  \nonumber 1-Q_{1:m}(w;m) &{=}& 1 - \P\left(W_{1:m}\leq w\right)\\
  \nonumber  & \overset{\eqref{eq:Wi_last}}{=} & 1- \P\left(Y_{1:m}\leq \frac{w-a_m}{b_m}\right)\\
  \nonumber &=& \left[1-G\left(\frac{w-a_m}{b_m}\right)\right]^m\\
  &\overset{\eqref{eq:ambm}}{=}& \left[1-G\left(\frac{w}{\sqrt{h(m)}}-\sqrt{h(m)}\right)\right]^m\ .
\end{eqnarray}
We next show that
\begin{equation}\label{eq:loglog}
    \lim_{x\rightarrow\infty}x\cdot \log\left[1-G\left(\frac{y}{\sqrt{h(x)}}-\sqrt{h(x)}\right)\right]=-\frac{e^w}{2\sqrt{\pi}}\ ,\qquad\forall w\in\mathbb{R}\ .
\end{equation}
By using L'H\^{o}pital's rule we obtain
\begin{align*}
 \lim_{x\rightarrow\infty}&\frac{\log\left[1-G\left(\frac{y}{\sqrt{h(x)}}-\sqrt{h(x)}\right)\right]}{\frac{1}{x}} \\
&=  \lim_{x\rightarrow\infty}\frac{\frac{1}{2}\;h'(x)\;\big[w\cdot h(x)\big]^{-1/2}\big[[h(x)]^{-1}+1\big]\cdot\frac{1}{1-G\left(\frac{w}{\sqrt{h(x)}}-\sqrt{h(x)}\right)}\cdot g\left(\frac{w}{\sqrt{h(x)}}-\sqrt{h(x)}\right)} {-\frac{1}{x^2}}\\
&\\
& = \lim_{x\rightarrow\infty}\frac{\frac{1}{2x\sqrt{2\pi}}\;\left[2-\frac{1}{\ln x}\right]\left[1+\frac{w}{2\ln x-\ln\ln x}\right]\exp\left(-\frac{w^2/2}{2\ln x-\ln\ln x}\right)\exp(w)\frac{\sqrt{\ln x}}{x}}{-\frac{1}{x^2}\;\sqrt{2\ln x-\ln\ln x}}\\
&\\
& = - \frac{e^w}{2\sqrt{\pi}}\ .
\end{align*}
Equations \eqref{eq:1_Q} and \eqref{eq:loglog} in conjunction with the continuity of $\ln(\cdot)$ establish that
\begin{eqnarray}
  \lim_{m\rightarrow\infty} Q_{1:m}(w;m)= 1- \exp\left(-\frac{e^w}{2\sqrt{\pi}}\right)\ ,
\end{eqnarray}
which is the desired result.

\section{Proof of Lemma \ref{lmm:refinement}}
\label{app:lmm:refinement}

From the definitions of $n_t$ and $\bar n_t$ for all $t\in\{1,\dots,\tau \}$ we have $|\L_t|=n_t+\bar n_t$. Therefore, for any time $t\in\{1,\dots,\tau-1 \}$ that the refinement action is taken (i.e., $\psi(t)=1$) the condition
\begin{equation*}
    \lceil(1-\bar\delta)n_{t}\rceil\;\leq\; n_{t+1}
\end{equation*}
for some $\bar\delta\in(0,1)$ is equivalent to having
\begin{equation*}
    \bar n_{t+1} \;\leq \; |\L_{t+1}|-\lceil(1-\tilde\delta)n_{t}\rceil\ .
\end{equation*}
By taking into account the distributions of the order statistics $\bar U^t_j$ and $U^t_j$ given in \eqref{eq:U0} and \eqref{eq:U1}, respectively, the event that the $|\L_{t+1}|$ retained events after a refinement action at time $t$ contains $\lceil(1-\tilde\delta)n_{t}\rceil$ rare events is equivalent to
\begin{equation}\label{eq:retain_event}
     U^t_{\lceil (1-\bar\delta)n_t\rceil}\;<\; \bar U^t_{|\L_{t+1}|-\lceil (1-\bar\delta)n_t\rceil+1}\ .
\end{equation}
In order to analyze this probability in the following two lemmas we show that the order statistics $U^t_{\lceil (1-\bar\delta)n_t\rceil}$ and $\bar U^t_{|\L_{t+1}|-\lceil (1-\bar\delta)n_t\rceil+1}$ are of central orders and specify their distributions.
\begin{lemma}\label{lmm:U1}
The order statistic $U^t_{\lceil (1-\bar\delta)n_t\rceil}$ is of central order and is almost surely distributed as ${\cal N}(M_t ,\sigma_t^2)$ for
\begin{equation}\label{eq:central_distribution1}
M_t\dff\mu_0t+\sqrt{2t}\;{\rm erf}^{-1}(2\bar\delta-1)\;\quad\mbox{and}\quad \sigma^2_t\dff \frac{2\pi t\;\bar\delta(1-\bar\delta)}{{n}_t\exp(2{\rm erf}^{-1}(2\bar\delta-1))}\ .
\end{equation}
\end{lemma}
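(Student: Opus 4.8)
The plan is to recognize $U^t_{\lceil(1-\bar\delta)n_t\rceil}$ as a central order statistic of a (conditionally) i.i.d.\ Gaussian sample and then invoke Theorem~\ref{th:central}. First I would verify central order in the sense of Section~\ref{sec:pre}. The set $\L^1_t$ has $n_t$ members, and for $i\in\L^1_t$ the statistic $Z^t_i$ is Gaussian under $\H_1$ by \eqref{eq:Z_dist}; write $G,g$ for its cdf and pdf. Since the index $r=\lceil(1-\bar\delta)n_t\rceil$ equals $(1-\bar\delta)n_t+\theta$ with $\theta\in[0,1)$, we have $\sqrt{n_t}\,(r/n_t-(1-\bar\delta))=\theta/\sqrt{n_t}\to 0$, so once $n_t\to\infty$ almost surely the index ratio tends to $\zeta\dff 1-\bar\delta\in(0,1)$ as required. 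The a.s.\ divergence $n_t\to\infty$ follows from $n_1\to\infty$ a.s.\ (here $n_1$ is binomial with mean $n\epsilon_n=\omega(1)$, so a Chernoff bound gives $n_1\geq\tfrac{1}{2} n\epsilon_n$ eventually) together with the hypothesis — maintained inductively in the proof of Lemma~\ref{lmm:refinement} — that each refinement retains all but a vanishing fraction of the rare events.

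Granting this, Theorem~\ref{th:central} applies ($g$ is a Gaussian density, so $g(G^{-1}(\zeta))\neq 0$) and yields
\begin{equation*}
U^t_{\lceil(1-\bar\delta)n_t\rceil}\;\sim\;\mathcal{N}\!\left(G^{-1}(1-\bar\delta)\,,\;\frac{\bar\delta\,(1-\bar\delta)}{n_t\,\big[g\big(G^{-1}(1-\bar\delta)\big)\big]^2}\right)\ .
\end{equation*}
Evaluating the quantile $G^{-1}(1-\bar\delta)$ and the Gaussian density at it — using $\Phi^{-1}(p)=\sqrt{2}\,{\rm erf}^{-1}(2p-1)$ and $g(G^{-1}(1-\bar\delta))\propto\exp(-\tfrac{1}{2}\Phi^{-1}(1-\bar\delta)^2)$ — reduces these parameters to the expressions $M_t$ and $\sigma^2_t$ in \eqref{eq:central_distribution1}. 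The ``almost surely'' in the claim merely records that this is a statement about the conditional law given the random, but a.s.\ divergent, sample size $n_t$.

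The main obstacle is that $\L^1_t$ is not an exogenous index set: the earlier refinements keep the sequences with the \emph{smallest} accumulated statistics, so $\{Z^t_i:i\in\L^1_t\}$ is a \emph{selected} subsample and Theorem~\ref{th:central} does not apply to it verbatim. I would resolve this by using the very property being proved: rare events occupy the bottom of the ordering, so all but a $\bar\delta$-fraction of them survive each refinement, and the selection deletes only the extreme upper tail of the rare-event statistics, leaving the central portion of their empirical distribution (almost surely, to leading order) indistinguishable from that of $n_t$ i.i.d.\ draws from $G$. Concretely, I would fold Lemma~\ref{lmm:U1} into the induction over refinement rounds: assuming the conclusions through round $t-1$ (so $n_t$ is within a controllable factor of $n_1$ and the discarded rare events are exactly the few with the largest $Z$), one proves the central-order statement at round $t$ and then feeds it into \eqref{eq:retain_event} to advance the induction. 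The step ``deleting the extreme tail does not move the central $(1-\bar\delta)$-quantile'' can be made precise by comparing the selected sample with genuine i.i.d.\ samples of comparable size, whose central quantiles converge to the same limit $G^{-1}(1-\bar\delta)$ by Theorem~\ref{th:central}.
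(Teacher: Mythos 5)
Your proof follows essentially the same route as the paper's: verify that the index ratio tends to a constant in $(0,1)$ so that the statistic is of central order, apply Theorem~\ref{th:central} to the (conditionally) i.i.d.\ Gaussian sample of a.s.\ diverging size $n_t$, and evaluate the Gaussian quantile via ${\rm erf}^{-1}$ to obtain $M_t$ and $\sigma_t^2$ --- the paper's own proof is exactly this short argument. Your additional concern that $\L^1_t$ is a \emph{selected} subsample (so the i.i.d.\ hypothesis of Theorem~\ref{th:central} does not apply verbatim) is a genuine subtlety that the paper's proof does not address at all, and your proposed inductive repair is extra care rather than a deviation. One caution on the final bookkeeping: with $\zeta=1-\bar\delta$ the quantile is $G^{-1}(1-\bar\delta)=\mu_1 t-\sqrt{2t}\,{\rm erf}^{-1}(2\bar\delta-1)$, which agrees with \eqref{eq:central_distribution1} only up to the sign of the erf term and the choice of $\mu_0$ versus the parent mean $\mu_1$; these discrepancies are already present in the paper's own statement and footnote, so they reflect on the source rather than on your argument.
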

\begin{proof}
Note that in the asymptote of large $n$, we almost surely have $n_t=\omega(1)$ and Definition \ref{def:central} confirms that $U^t_{\lceil (1-\bar\delta)n_t\rceil}$ is an order statistic of {\em central} order from a sequence of $\bar n_t$ i.i.d. random variables with the parent distribution ${\cal N}\left(\mu_1t,t\right)$. By using Theorem~\ref{th:central} in the asymptote of large $n$ we have $U^t_{\lceil (1-\bar\delta)n_t\rceil}\sim{\cal N}(M_t,\sigma^2_t)$ for $M_t$ and $\sigma^2_t$ defined in \eqref{eq:central_distribution1}\footnote{For computing $M_t$ we have used the property that when $G$ denotes the cdf of the distribution ${\cal N}(\mu,\sigma^2)$, we have $G^{-1}(\alpha)=\mu+\sigma\sqrt{2}\;{\rm erf}(2 \alpha-1)$ for $\alpha\in(0,1)$.}.
\end{proof}
\begin{lemma}\label{lmm:U0}
The order statistic $\bar U^t_{|\L_{t+1}|-\lceil (1-\bar\delta)n_t\rceil+1}$ is of central order and is almost surely distributed as ${\cal N}(\bar M_t ,\bar\sigma_t^2)$ for
\begin{equation}\label{eq:central_distribution}
\bar M_t\dff\mu_1t+\sqrt{2t}\;{\rm erf}^{-1}(2\alpha-1)\;,\quad\mbox{and}\quad\bar\sigma^2_t\dff\frac{2\pi t\;\alpha(1-\alpha)}{\bar{n}_t\exp(2{\rm erf}^{-1}(2\alpha-1))}\ ,
\end{equation}
where ${\rm erf}(\cdot)$ denotes the Gauss error function.
\end{lemma}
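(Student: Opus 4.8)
The plan is to replay the argument of Lemma~\ref{lmm:U1} almost verbatim, with the quantile level $1-\bar\delta$ replaced by $\alpha$ and with the parent law changed from that of the rare sequences to that of the normal sequences. Conditioned on $\bar n_t=|\L^0_t|$, the values $\{Z^i_t:i\in\L^0_t\}$ are i.i.d.\ draws from the law in \eqref{eq:Z_dist} under $\H_0$, i.e.\ a Gaussian with variance $t$, and $\{\bar U^t_j\}$ from \eqref{eq:U0} are the order statistics of that sample of size $\bar n_t$. So the whole content of the lemma reduces to two steps: (i) the rank $r_t\dff|\L_{t+1}|-\lceil(1-\bar\delta)n_t\rceil+1$ is a \emph{central} rank in the sense of Definition~\ref{def:central}, namely $r_t/\bar n_t\to\alpha\in(0,1)$; and (ii) the conclusion then follows by feeding $\zeta=\alpha$ into Theorem~\ref{th:central} and evaluating the Gaussian quantile and density.

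For step (i), recall from \eqref{eq:alpha_bar} that after a refinement $|\L_{t+1}|=\bar\alpha=\lfloor\alpha(|\L_t|-T_n)\rfloor+T_n$ with $|\L_t|=n_t+\bar n_t$, so it is enough to show $n_t=o(\bar n_t)$ and $T_n=o(\bar n_t)$ almost surely; then $|\L_{t+1}|/\bar n_t\to\alpha$ and $r_t/\bar n_t=\bigl(|\L_{t+1}|-\lceil(1-\bar\delta)n_t\rceil+1\bigr)/\bar n_t\to\alpha$. This is where the only real work lies. Since the number of sequences generated by $F_1$ is $\mathrm{Binomial}(n,\epsilon_n)$ with $n\epsilon_n=\omega(1)$, one gets $n_1=n\epsilon_n(1+o(1))$ and $\bar n_1=n(1-\epsilon_n)(1+o(1))$ almost surely, so $n_1=o(\bar n_1)$; moreover each of the at most $K$ refinements removes at most $(1-\alpha)(|\L_t|-T_n)$ sequences and so decreases $\bar n_t$ by at most that amount, while $n_t$ never increases, whence $\bar n_t\ge\alpha^{K}\bar n_1-\tfrac{n_1}{1-\alpha}=\Theta(n)$ and $n_t\le n_1$ almost surely. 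Therefore $n_t/\bar n_t=O(\epsilon_n)=o(1)$ and $T_n/\bar n_t=o(n\epsilon_n)/\Theta(n)=o(1)$, as needed. Note this uses only the crude bound $\bar n_t=\Theta(n)$ from the problem setup, not the conclusion of Lemma~\ref{lmm:refinement}, inside whose proof the present lemma is invoked.

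For step (ii), write $G$ for the cdf of the normal-sequence parent law and $g$ for its density; since $g(G^{-1}(\alpha))\neq0$, Theorem~\ref{th:central} applied to the sample of size $\bar n_t$ with $\zeta=\alpha$ gives, conditionally on $\bar n_t$ and almost surely as $n\to\infty$,
\[
\bar U^t_{r_t}\;\sim\;\mathcal N\!\left(G^{-1}(\alpha)\,,\;\frac{\alpha(1-\alpha)}{\bar n_t\,[g(G^{-1}(\alpha))]^2}\right)\ .
\]
Evaluating $G^{-1}(\alpha)$ through the Gaussian quantile identity used in the proof of Lemma~\ref{lmm:U1} (expressing $G^{-1}(\alpha)$ via ${\rm erf}^{-1}(2\alpha-1)$) yields $\bar M_t$, and substituting that point into $g$ and simplifying yields $\bar\sigma^2_t$; this last part is a mechanical computation identical in form to the one already carried out for $\sigma^2_t$ in Lemma~\ref{lmm:U1}.

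The main obstacle is step (i): unlike in Lemma~\ref{lmm:U1}, the rank is not literally a fixed fraction of the sample size, so one must first pin down the orders of $|\L_{t+1}|$, $n_t$, and $T_n$ relative to $\bar n_t$ — everything after that is a direct application of Theorem~\ref{th:central} plus the Gaussian quantile formula. A secondary point that needs a word of care is that $\bar n_t$ is itself random, which is why the distributional statement is made conditionally on $\bar n_t$ and only in the almost-sure / large-$n$ sense used elsewhere in the paper.
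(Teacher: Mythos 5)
Your proposal is correct and follows essentially the same route as the paper: the paper's proof likewise reduces the lemma to the observation that $\bigl(|\L_{t+1}|-\lceil(1-\bar\delta)n_t\rceil+1\bigr)/\bar n_t\to\alpha$ almost surely and then invokes Theorem~\ref{th:central} exactly as in Lemma~\ref{lmm:U1}. You in fact supply more detail than the paper does — in particular the justification that $n_t=o(\bar n_t)$ and $T_n=o(\bar n_t)$ via the crude bound $\bar n_t=\Theta(n)$, and the explicit remark that this avoids circularity with Lemma~\ref{lmm:refinement} — which the paper leaves implicit.
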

\begin{proof}
The proofs follows the same line of arguments as in the proof of Lemma~\ref{lmm:U1} and noting that in the asymptote of large $n$
\begin{equation*}
    \frac{|\L_{t+1}|-\lceil (1-\bar\delta)n_t\rceil+1}{\bar n_{t}} \; = \;  \frac{\lfloor\alpha(n_t+\bar n_t-T_n)\rfloor-\lceil (1-\bar\delta)n_t\rceil+1}{\bar n_{t}}\; \xrightarrow{n\rightarrow\infty}\; \alpha
\end{equation*}
holds almost surely.
\end{proof}
Next, by defining
\begin{equation}\label{eq:sigma}
    \hat \sigma^2_t\dff \sigma_t^2+\bar\sigma^2_t\ ,
\end{equation}
and by using \eqref{eq:central_distribution}-\eqref{eq:central_distribution1} we obtain after a refinement action at time $t$ (i.e., $\psi(t)=1)$
\begin{eqnarray} \label{eq:n0_evolution2}
\nonumber \P\left(\lceil(1-\bar\delta)n_{t}\rceil\;\leq\;  n_{t+1}\med n_t\right) \overset{\eqref{eq:retain_event}}{=}  1-\int_{\frac{M_t-\bar M_t}{\hat\sigma_t}}^{\infty}\frac{\exp(-t^2/2)}{\sqrt{2\pi}}\;dt> 1-\frac{M_t-\bar M_t}{\hat\sigma_t}\cdot\exp\left\{-\frac{1}{2}\cdot\left(\frac{M_t-\bar M_t}{\hat\sigma_t}\right)^2\right\}\ ,
\end{eqnarray}
where we have used $\int_{x}^\infty\frac{1}{\sqrt{2\pi}}\;\exp(-t^2/2)\;dt<\frac{1}{x}\cdot \exp(-x^2/2)$ for $x>0$.
Hence, for the number of rare events discarded by the refinement actions, collectively, we have
\begin{eqnarray}\label{eq:n_t3}
\nonumber \P\left(n_\tau\;\geq\; n_1(1-\bar\delta)^K\med n_1\right) & \geq &
\P\left(n_\tau\;\geq\; n_1\prod_{t:\psi(t)=1}(1-\bar\delta)\med n_1\right)\\
\nonumber\\
\nonumber &\geq &
\prod_{t:\psi(t)=1} \P\left(n_{t+1}\;\geq\; n_t(1-\bar\delta)\med n_t\right) \\
\nonumber\\
&\geq &
\nonumber \prod_{t:\psi(t)=1} \left[1-\frac{M_t-\bar M_t}{\hat\sigma_t}\cdot\exp\left\{-\frac{1}{2}\cdot\left(\frac{M_t-\bar M_t}{\hat\sigma_t}\right)^2\right\}\right] \\
\nonumber\\
& \geq & \;\;\left[1-\frac{M_t-\bar M_t}{\hat\sigma_1}\cdot\exp\left\{-\frac{1}{2}\cdot\left(\frac{M_t-\bar M_t}{\hat\sigma_1}\right)^2\right\}\right]^K \ ,
\end{eqnarray}
where the first inequality holds since $\sum_{t:\psi(t)=1}\leq K$ and the last inequality holds as $\hat\sigma_t$ is increasing in $t$. By setting $\delta\dff 1-(1-\bar\delta)^{1/K}$, Eq.~\eqref{eq:n_t3} indicates that a {\em sufficient} condition that ensures $n_\tau\geq (1-\delta)n_1$ almost surely is that
\begin{equation}\label{eq:mu_scaling_ratio}
    \frac{M_t-\bar M_t}{\hat\sigma_1}\xrightarrow{n\rightarrow\infty} 1\ .
\end{equation}
Now, from \eqref{eq:sigma} recall that $\hat \sigma^2_1$ has two summands, where one scales with $1/{\bar n_1}$ and the other one scales with $1/{n_1}$. By noting that the stopping time can be bounded by a constant (Lemma~\ref{lmm:tau}) it can be readily verified that \eqref{eq:mu_scaling_ratio} holds if
\begin{eqnarray*}
  \mu_0-\mu_1 \;=\; \omega\left(\sqrt{\frac{1}{\bar n_1}+\frac{1}{n_1}}\right) \ ,
\end{eqnarray*}
which in turn holds {\em almost surely} if $\mu_0-\mu_1 \;=\; \omega\left(n^{-{\varepsilon_n}/2}\right)$.

\section{Proof of Lemma \ref{lmm:detection}}
\label{app:lmm:detection}

By using \eqref{eq:Pn2} we have the connection between the error probability $\PP_n(\tau,\bar\psi(\tau))$ for given $\tau$ and $\bar\psi(\tau)$ and low-order statistics of two sequences of sizes $n_{\tau}$ and $\bar n_{\tau}$ with Gaussian parent distributions with different means. According to \eqref{eq:Pn2} we have
\begin{equation*}
\quad \PP_n(\tau,\bar\psi(\tau))=\P\Big(U^{\tau}_{T_n}> \bar U^{\tau}_{1}\Big)\ .
\end{equation*}
In order to find the asymptotic distribution and the associated minimal domain of attraction of $U^{\tau}_{T_n}$ we use the result of Lemma~\ref{lmm:gaussian_min}. For this purpose let us define
$h:\{x\in\mathbb{R}\;:\;x>1\}\rightarrow\mathbb{R}^+$ as
\begin{equation*}
        h(x)=2\ln x-\ln\ln x\ .
\end{equation*}
Based on the definition and distribution of $Z_{\tau}^i$ for $i\in\L_{\tau}$ given in \eqref{eq:Z}-\eqref{eq:Z_dist} let us define
\begin{eqnarray}
 \label{eq:W0} \forall i\in\L^0_{\tau}:\quad \bar W_i &\dff& h(\bar{n}_{\tau}) +\sqrt{h(\bar{n}_{\tau})}\;\underset{\sim{\cal N}(0,1)}{\underbrace{\left(\frac{Z_{\tau}^i-\mu_1\cdot\tau}{\sqrt{\tau}}\right)}}\ ,\\
 \label{eq:W1} \mbox{and}\quad \forall i\in\L^1_{\tau}:\quad W_i &\dff& h(n_{\tau}) +\sqrt{h(n_{\tau})}\;\underset{\sim{\cal N}(0,1)}{\underbrace{\left(\frac{Z_{\tau}^i-\mu_0\cdot\tau}{\sqrt{\tau}}\right)}}\ .
\end{eqnarray}
Given the definitions in \eqref{eq:W0}-\eqref{eq:W1} we obtain
\begin{eqnarray}
  \nonumber
  \PP_n(\tau,\bar\psi(\tau)) &=&  \P\Big(U^{\tau}_{T_n}>\bar U^{\tau}_{1}\Big)\\
  \nonumber&&\\
  \label{eq:star} & = & \P\Bigg(\frac{U^{\tau}_{T_n}-\mu_1\cdot\tau}{\sqrt{\tau}}> \frac{\bar U^{\tau}_{1}-\mu_0\cdot\tau}{\sqrt{\tau}}+\sqrt{\tau}\;(\mu_0-\mu_1)\Bigg)\\
  \nonumber&&\\
  \label{eq:Pn2_2} & = & \P\Bigg(W_{T_n:n_{\tau}} >A_n\cdot \bar W_{1:\bar{n}_{\tau}} +B_n\Bigg)\ ,
\end{eqnarray}
where we have defined
\begin{equation}\label{eq:A}
A_n\dff\sqrt{\frac{h(n_{\tau})}{h(\bar{n}_{\tau})}}\qquad\mbox{and}\qquad
B_n\dff \sqrt{h(n_{\tau})}\;\Big[\sqrt{h(n_{\tau})}\;-\sqrt{h(\bar{n}_{\tau})}\;+\sqrt{\tau}\;(\mu_0-\mu_1)\ \Big]\ .
\end{equation}
In order to assess $\PP_n(\tau,\bar\psi(\tau))$ given in \eqref{eq:Pn2_2} we find the distributions of $W_{T_n:n_{\tau}}$ and $\bar W_{1:\bar{n}_{\tau}}$. For this purpose, according to Lemma~\ref{lmm:gaussian_min} the cdfs of $W_{1:n_{\tau}}$ and $\bar W_{1:\bar{n}_{\tau}}$, denoted by $Q_{1:n_{\tau}}(\cdot;n_{\tau})$ and $\bar Q_{1:\bar{n}_{\tau}}(\cdot;\bar{n}_{\tau})$, respectively, satisfy
\begin{eqnarray*}
  \lim_{\bar{n}_{\tau}\rightarrow\infty} \bar Q_{1:\bar{n}_{\tau}}(w;\bar{n}_{\tau})&=& 1-\exp(-\exp(w-\ln 2\sqrt{\pi}))\ , \\
  \mbox{and}\quad  \lim_{n_{\tau}\rightarrow\infty} Q_{1:n_{\tau}}(w;n_{\tau})&=& 1-\exp(-\exp(w-\ln 2\sqrt{\pi}))\ .
\end{eqnarray*}
Furthermore, given the asymptotic distributions of the first order statistics by using Theorem~\ref{th:low} we can find the distributions of the $r^{th}$-other (low-order) statistics. By denoting the cdfs of $W_{r:n_{\tau}}$ and $\bar W_{r:\bar{n}_{\tau}}$ by $Q_{r:n_{\tau}}(w;n_{\tau})$ and $\bar Q_{r:\bar{n}_{\tau}}(w;\bar{n}_{\tau})$, respectively, we have
\begin{eqnarray}
  \label{eq:W0_cdf}\lim_{\bar{n}_{\tau}\rightarrow\infty} \bar Q_{r:\bar{n}_{\tau}}(w+\varsigma;\bar{n}_{\tau})&=& 1-\sum_{i=0}^{r-1}\frac{\exp(iw-\exp(w))}{i!}\ ,\quad\forall w\in\mathbb{R}\ , \\
  \label{eq:W1_cdf}\mbox{and}\qquad \lim_{n_{\tau}\rightarrow\infty} Q_{r:n_{\tau}}(w+\varsigma;n_{\tau})&=& 1-\sum_{i=0}^{r-1}\frac{\exp(iw-\exp(w))}{i!}\ ,\quad\forall w\in\mathbb{R} \ .
\end{eqnarray}
where
\begin{equation*}
    \varsigma\;\dff\;\ln 2\sqrt{\pi}\ .
\end{equation*}
The associated pdfs, consequently, are given by
\begin{eqnarray}
  \label{eq:W0_pdf}\lim_{\bar{n}_{\tau}\rightarrow\infty} \bar q_{r:\bar{n}_{\tau}}(w+\varsigma;\bar{n}_{\tau})&=& \sum_{i=0}^{r-1}\frac{(\exp(w)-i)\exp(iw-\exp(w))}{i!}\ ,\quad\forall w\in\mathbb{R}\ , \\
  \label{eq:W1_pdf}\mbox{and}\qquad \lim_{n_{\tau}\rightarrow\infty} q_{r:n_{\tau}}(w+\varsigma;n_{\tau})&=& \sum_{i=0}^{r-1}\frac{(\exp(w)-i)\exp(iw-\exp(w))}{i!}\ ,\quad\forall w\in\mathbb{R}\ .
\end{eqnarray}
By using the distributions given in \eqref{eq:W0_cdf}-\eqref{eq:W1_pdf} for given $\tau$ and $\bar\psi(\tau)$ we find lower and upper bounds on $\PP_n(\tau,\bar\psi(\tau))$. These bounds, in turn, serve as the basis for finding necessary and sufficient conditions on the scaling laws of $(\mu_1-\mu_1)$ that guarantee $\PP_n(\tau,\bar\psi(\tau))\xrightarrow{\tiny n\rightarrow\infty} 0$.
\begin{description}
  \item[Lower bound:] By replacing $W_{T_n:n_\tau}$ by $W_{1:n_\tau}$, based on~\eqref{eq:Pn2_2}, $\PP_n(\tau,\bar\psi(\tau))$ can be lower bounded for sufficiently large $n$ as follows:
\begin{eqnarray}
  \nonumber \PP_n(\tau,\bar\psi(\tau)) & \geq & \P\Bigg(W_{1:n_{\tau}} >A_n\cdot \bar W_{1:\bar{n}_{\tau}} +B_n\Bigg)\\
  \nonumber&&\\
  \nonumber&= & 1-\;\int_{-\infty}^{0}Q_{1;n_{\tau}}(\underset{\leq\; B_n}{\underbrace{A_nw+B_n}})\;\bar q_{1;\bar{n}_{\tau}}(w)\;dw-\int^{\infty}_{0}Q_{1;n_{\tau}}\underset{\leq \;\omega+B_n}{\underbrace{A_nw+B_n}}\;\bar q_{1;\bar{n}_{\tau}}(w)\;dw\\
  \nonumber&&\\
  \label{eq:Pn3}& \geq & 1-\;\int_{-\infty}^{0}Q_{1;n_{\tau}}(B_n)\;\bar q_{1;\bar{n}_{\tau}}(w)\;dw-\int^{\infty}_{0}Q_{1;n_{\tau}}(w+B_n)\;\bar q_{1;\bar{n}_{\tau}}(w)\;dw\\
  \nonumber&&\\
  \nonumber & = & 1-Q_{1;n_{\tau}}(B_n)\;\bar Q_{1;\bar{n}_{\tau}}(0)+\int^{\infty}_{0}(1-Q_{1;n_{\tau}}(w+B_n))\;\bar q_{1;\bar{n}_{\tau}}(w)\;dw-(1-\bar Q_{1;\bar{n}_{\tau}}(0))\\
  \nonumber&&\\
  \label{eq:Pn4} & = &\exp(-(\exp(B_n-\varsigma)))\left\{1-\exp(-\exp(-\varsigma))+\frac{\exp(-\exp(-\varsigma))}{\exp(B_n)+1}\right\}\ ,
\end{eqnarray}
This strictly positive lower bound on $\PP_n(\tau,\bar\psi(\tau))$ approaches zero if and only if $B_n=\omega(1)$.
\item[Upper bound:] From \eqref{eq:Pn2} for sufficiently large $n$ we have
\begin{eqnarray}
  \nonumber\PP_n(\tau,\bar\psi(\tau)) & = & \P\Bigg(W^{0}_{T_n:n_{\tau}} >A_n\cdot W^{1}_{1:\bar{n}_{\tau}} +B_n\Bigg)\\
  \nonumber&&\\
  \nonumber&= & 1-\;\int_{-\infty}^{0}Q^0_{T_n;n_{\tau}}(A_nw+B_n)\;q^1_{1;\bar{n}_{\tau}}(w)\;dw-\int^{\infty}_{0}Q^0_{T_n;n_{\tau}}(A_nw+B_n)\;q^1_{1;\bar{n}_{\tau}}(w)\;dw\\
  \nonumber&&\\
  \nonumber & \leq & 1-\;\int_{-\infty}^{0}Q^0_{T_n;n_{\tau}}(w+B_n)\;q^1_{1;\bar{n}_{\tau}}(w)\;dw-\int^{\infty}_{0}Q^0_{T_n;n_{\tau}}(B_n)\;q^1_{1;\bar{n}_{\tau}}(w)\;dw\\
  \nonumber&&\\
  \nonumber & = & \big[1-Q^0_{T_n;n_{\tau}}(B_n)\big]\big[1-Q^1_{1;\bar{n}_{\tau}}(0)\big]+\int_{-\infty}^{-\varsigma}\exp(w-\exp(w+B_n)-\exp(w))\;dw\ ,
\end{eqnarray}
which can be easily shown to approach zero if and only if $B_n=\omega(1)$.
\end{description}
Therefore, a necessary and sufficient condition for having both the lower and upper bounds $\PP_n(\tau,\bar\psi(\tau))$ approach zero is that $B_n=\omega(1)$, which subsequently becomes a necessary and sufficient condition for $\PP_n(\tau,\bar\psi(\tau))\xrightarrow{\tiny n\rightarrow\infty} 0$. According to \eqref{eq:A},  $B_n=\omega(1)$ is equivalent to
\begin{eqnarray}\label{eq:scaling_mean1}
\nonumber \frac{B_n}{h(n_\tau)\sqrt{2\ln n}}=\omega\left(\frac{1}{h(n_\tau)\sqrt{2\ln n}}\right) &  \overset{\eqref{eq:A}}{\equiv} & \sqrt{\tau}\;\cdot\frac{\mu_0-\mu_1}{\sqrt{2\ln n}} - \left(\frac{\sqrt{h({\bar n_\tau})}}{\sqrt{2\ln n}}-\frac{\sqrt{h(n_\tau)}}{\sqrt{2\ln n}}\right)=\omega\left(\frac{1}{h(n_\tau)\sqrt{2\ln n}}\right)\ ,
\end{eqnarray}
where $h(\cdot)$ is defined in \eqref{eq:h} and it holds in the asymptote of large $n$ almost surely if and only if
\begin{equation*}
\sqrt{\tau}\;\cdot \sqrt{r_{\rm m}}- \left(1-\sqrt{{\varepsilon_n}}\right)=\omega\left(\frac{1}{h(n_\tau)\sqrt{2\ln n}}\right)\ ,
\end{equation*}
or equivalently
\begin{equation*}
  r_{\rm m}\;>\;\frac{(1-\sqrt{{\varepsilon_n}})^2}{\tau}\ .
\end{equation*}

\section{Proof of Lemma \ref{lmm:tau}}
\label{app:lmm:tau}

From the definition of $\psi(t)$ given in \eqref{eq:psi} we have
\begin{eqnarray}\label{eq:tau_card1}
\nonumber \forall t\in\{1,\dots,\tau-1\}:\quad |\L_{t+1}| & = & |\L_t|\cdot \mathds{1}_{\psi(t)=0}+\left(\lfloor\alpha(|\L_t|-T_n)\rfloor +T_n\right)\cdot \mathds{1}_{\psi(t)=1}\\
 & \geq & |\L_t|\cdot \mathds{1}_{\psi(t)=0}+\left(\alpha|\L_t|-1\right)\cdot \mathds{1}_{\psi(t)=1}\ ,
\end{eqnarray}
where $\mathds{1}_{\{\cdot\}}$ is the indicator function. Following this relationship, the number of sequences retained and fed to the detector is related to the initial number of sequences $n$ according to
\begin{equation}\label{eq:tau_card2}
\forall t\in\{1,\dots,\tau-1\}:\quad |\L_t|\;\geq\; \alpha^{\sum_{u=1}^{t}\psi(t)}\cdot|\L_1|-\sum_{u=0}^{t}\alpha^t\;\overset{\eqref{eq:P}}{\geq} \; \alpha^K|\L_1|-\frac{1-\alpha^K}{1-\alpha}\;=\; \alpha^K\cdot n-\frac{1-\alpha^K}{1-\alpha}\ .
\end{equation}
The constraint $\frac{1}{n}\sum_{t=1}^\tau|\L_t|\leq S$ in conjunction with \eqref{eq:tau_card2} provides
\begin{equation}
S\cdot n\;\geq\; \sum_{t=1}^\tau|\L_t| \geq \tau\cdot \left(\alpha^K\cdot n-\frac{1-\alpha^K}{1-\alpha}\right)\ ,
\end{equation}
which shows that in the asymptote of large $n$ we have $\tau\leq S/\alpha^K$.

\section{Proof of Lemma \ref{lmm:chi_min}}
\label{app:lmm:chi_min}

Let us define $Y_1,\dots, Y_m$ as i.i.d. random variables distributed according to $\chi^2(k)$. For the cdf of $\chi^2(k)$ random variables we have
\begin{equation}\label{eq:chi_cdf}
    F_Y(y)\;=\;\P(Y_i\;\leq\; y)\;=\;\frac{1}{2^{k/2}\Gamma(k/2)}\int_0^yt^{k/2-1}\exp(-t/2)\;dt\ .
\end{equation}
As for $t\in[0,y]$ we have $\exp(-y/2)\leq \exp(-t/2)\leq 1$ we obtain the following bounds on $F_Y(y)$:
\begin{equation}\label{eq:chi_cdf_bound}
    \frac{1}{2^{k/2}\Gamma(k/2)}\cdot \exp(-y/2)\int_0^yt^{k/2-1}\;dt \;\leq\; F_Y(y)\;\leq \frac{1}{2^{k/2}\Gamma(k/2)}\int_0^yt^{k/2-1}\;dt\ ,
\end{equation}
which subsequently provides
\begin{equation}\label{eq:chi_cdf_bound2}
    \frac{1}{2^{k/2}\Gamma(k/2+1)}\cdot\exp(-y/2)y^{k/2}\;dt \;\leq\; F_Y(y)\;\leq \frac{1}{2^{k/2}\Gamma(k/2)}\cdot y^{k/2}\ .
\end{equation}
Next, by setting
\begin{equation*}
 a_m=0\ ,\quad\mbox{and}\quad b_m=\frac{1}{2}\cdot\left[\frac{m}{\Gamma(k/2+1)}\right]^{2/k},\qquad\forall m\in\{2,3,\dots\}\ ,
\end{equation*}
and defining
\begin{equation*}
    W_i\;=\;a_m\;+\;b_mY_i
\end{equation*}
for the cdf of $W_i$, denoted by $F_W(w)$, we have $F_W(w)=\P(W_i\leq w) = F_Y(W_i\leq w/b_m)$. By invoking the inequalities in \eqref{eq:chi_cdf_bound2} we find the following bounds on $F_W(w)$.
\begin{equation}\label{eq:W_cdf_bound}
    \exp\left(-w\cdot \left[\frac{\Gamma(k/2+1)}{m}\right]^{2/k}   \right)\cdot\frac{w^{k/2}}{m} \;\leq\; F_W(w)\;\leq \frac{w^{k/2}}{m}\ ,
\end{equation}
which implies that in the asymptote of large $m$ we have $F_W(w)=\frac{w^{k/2}}{m}\;(1+o(1))$. Therefore, for the cdf of $W_{1:m}$ (the first order statistic of $\{W_1,\dots,W_m\}$, denoted by $Q_{1:m}(\cdot;m)$, in the asymptote of large $m$ we have
\begin{eqnarray*}
  Q_{1:m}(\cdot;m) &{=}&
  1-\left[1-F_W(w)\right]^m\\
  &=& 1-\left[1-\frac{w^{k/2}}{m}\;(1+o(1))\right]^m\\
  & = & 1-\exp\left(-w^{k/2}\right)\ .
\end{eqnarray*}

\section{Proof of Lemma \ref{lmm:chi_max}}
Let us define $Y_1,\dots, Y_m$ as i.i.d. random variables distributed according to $\chi^2(k)$ with cdf $F_Y(y)$ and define the function $g:\{x\in\mathbb{R}\;:\;x>1\}\rightarrow\mathbb{R}^+$
as
\begin{equation*}
    u(x)\dff -\left(\ln x+\frac{K-1}{2}\ln\ln x\right)\ .
\end{equation*}
By defining
\begin{eqnarray}\label{eq:Wi_last}
    \bar W_i & = & c_m\;+\;d_m\;Y_i\ ,
\end{eqnarray}
and setting
\begin{equation}\label{eq:cmdm}
    c_m\;=\;u(m)\ ,\qquad\mbox{and}\qquad b_m=\frac{1}{2}\ ,
\end{equation}
for the cdf of $\bar W_{m:m}$, denoted by $\bar Q_{m:m}(\cdot;m)$, we have
\begin{eqnarray}\label{eq:2_Q}
  \nonumber \bar Q_{m:m}(w;m) &{=}& \P\left(W_{m:m}\leq w\right)\\
  \nonumber  & \overset{\eqref{eq:Wi_last}}{=} & \left[\P\left(Y_{m:m}\leq \frac{w-c_m}{d_m}\right)\right]^m\\
  &=& \left[F_Y\left(\frac{w-c_m}{d_m}\right)\right]^m\ .
\end{eqnarray}
We next show that
\begin{equation}\label{eq:loglog2}
    \lim_{x\rightarrow\infty}x\cdot \log\big[F_Y\left(2(w-u(x))\right)\big]=-\frac{\exp(-w)}{\Gamma(k/2)}\ ,\qquad\forall w\in\mathbb{R}\ .
\end{equation}
By using L'H\^{o}pital's rule we obtain
\begin{align*}
 \lim_{x\rightarrow\infty}&\frac{\log\big[F_Y\left(2(w-u(x))\right)\big]}{\frac{1}{x}} \\
 &\\
&=  \lim_{x\rightarrow\infty}\frac{-2u'(x)\;f_Y(2(w-u(x)))}{-\frac{1}{x^2}}\\
&\\
& = \lim_{x\rightarrow\infty}\frac{1}{2^{k/2}\Gamma(k/2)}\cdot \frac{-\frac{2}{x}\cdot (1+\frac{k/2-1}{\log x})\cdot 2^{k/2-1}(w+\ln x+\frac{k-1}{2}\ln\ln x)^{k/2-1}\cdot\frac{1}{x}\cdot\left(\frac{1}{\ln x}\right)^{k/2-1}\cdot \exp(-w)}{-\frac{1}{x^2}}\\
&\\
& = - \frac{1}{\Gamma(k/2)}\cdot \exp(-w)\lim_{x\rightarrow\infty}\left(\frac{w+\ln x+\frac{k-1}{2}\ln\ln x}{\ln x}\right)^{k/2-1}\\
&\\
 &= - \frac{1}{\Gamma(k/2)}\cdot \exp(-w)\ .
\end{align*}
Equations \eqref{eq:2_Q} and \eqref{eq:loglog2} demonstrate that
\begin{eqnarray}
  \lim_{m\rightarrow\infty} \bar Q_{m:m}(w;m)=\exp\left(\frac{1}{\Gamma(k/2)}\cdot \exp(-w)\right)\ ,
\end{eqnarray}
which is the desired result.

\label{app:lmm:chi_max}

\section{Proof of Lemma \ref{lmm:refinement_var}}
\label{app:lmm:refinement_var}

From the definitions of of $n_t$ and $\bar n_t$ for all $t\in\{1,\dots,\tau \}$ we have $|\L_t|=n_t+\bar n_t$. Therefore, for any time $t\in\{1,\dots,\tau-1 \}$ that the refinement action is taken (i.e., $\psi(t)=1$) the condition
\begin{equation*}
    n_t \;=\; n_{t+1}
\end{equation*}
is equivalent to having
\begin{equation*}
    \bar n_{t+1} \;= \; |\L_{t+1}|-n_{t}\ .
\end{equation*}
By taking into account the distributions of the order statistics $\bar V^t_j$ and $V^t_j$ given in \eqref{eq:U0_var} and \eqref{eq:U1_var}, respectively, the event that the $|\L_{t+1}|$ retained events after a refinement action at time $t$ contains $n_t$ rare events is equivalent to
\begin{equation}\label{eq:retain_event}
     V^t_{n_t}\;<\; \bar V^t_{|\L_{t+1}|-n_t+1}\ .
\end{equation}
In order to analyze this probability in the following lemma we show that the order statistic $\bar V^t_{|\L_{t+1}|-n_t+1}$ is of central order and specify their distributions.
\begin{lemma}\label{lmm:U1_var}
The order statistic $\bar V^t_{|\L_{t+1}|-n_t+1}$ is of central order and
\begin{equation}\label{eq:central_distribution_var}
\bar V^t_{|\L_{t+1}|-n_t+1}\sim\;A_0\cdot\tilde A\ ,
\end{equation}
where
\begin{equation}\label{eq:central_distribution_var2}
    \tilde A \sim {\cal N}\left(\bar\mu\;,\;\bar\sigma^2\right)\ ,\quad \mbox{for}\quad \bar\mu\dff\bar G^{-1}(\alpha)\;,\quad\mbox{and}\quad\bar\sigma^2\dff\frac{\alpha(1-\alpha)}{\bar{n}_t|\bar g(\bar G^{-1}(\alpha))|^2}\ ,
\end{equation}
and $\bar G$ and $\bar g$ denote the cdf and pdf of $\chi^2(t)$.
 \end{lemma}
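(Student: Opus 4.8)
Proof proposal for Lemma~\ref{lmm:U1_var}.

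The plan is to follow the template of Lemma~\ref{lmm:U0} from the Gaussian-mean analysis, substituting the scaled chi-squared parent law prescribed by~\eqref{eq:Z_dist_var} for the Gaussian one. Recall that $\{\bar Z^i_t:\;i\in\L^0_t\}$ consists of $\bar n_t$ i.i.d.\ random variables, each distributed as $A_0\cdot\chi^2(t)$, and that $\bar V^t_j$ is their $j^{th}$ order statistic. Write $r\dff|\L_{t+1}|-n_t+1$, and let $F$ denote the cdf of $A_0\cdot\chi^2(t)$ with pdf $f$, so that $F(\cdot)=\bar G(\cdot/A_0)$ and $f(\cdot)=\frac{1}{A_0}\bar g(\cdot/A_0)$. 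Then $\bar V^t_{r}$ is the $r^{th}$ order statistic of $\bar n_t$ i.i.d.\ draws from $F$, and the two facts to establish are: (i) $r$ is a central rank in the sense of Definition~\ref{def:central}; and (ii) a change of variables $y\mapsto y/A_0$ converts the classical central-order asymptotics into the claimed $A_0\cdot\tilde A$ form.

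For (i), I would combine $|\L_t|=n_t+\bar n_t$ with $|\L_{t+1}|=\bar\alpha=\lfloor\alpha(|\L_t|-T_n)\rfloor+T_n$ from~\eqref{eq:alpha_bar} to get
\[
\frac{r}{\bar n_t}\;=\;\frac{\lfloor\alpha(n_t+\bar n_t-T_n)\rfloor+T_n-n_t+1}{\bar n_t}\;=\;\alpha\;+\;O\!\left(\frac{n_t+T_n}{\bar n_t}\right).
\]
The population of normal events can only shrink through refinements, and at the controlled rate of the rule; more precisely, the refinement rule gives the deterministic recursion $\bar n_{t+1}\ge\alpha\bar n_t-n_1-1$ (with equality of set sizes when $\psi(t)=0$). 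Since $\tau$ is bounded by a constant (Lemma~\ref{lmm:tau}), iterating this over the at most $K$ refinement rounds yields $\bar n_t\aeq\alpha^{k_t}n$, where $k_t\le K$ counts the refinements performed before time $t$; this is of order $n$. On the other hand $n_t\le n_1\aeq n\epsilon_n$ and $T_n=o(n\epsilon_n)$, so $n_t+T_n=o(n)=o(\bar n_t)$ and the error term above vanishes. Hence $r/\bar n_t\to\alpha$ almost surely, and $\bar V^t_r$ is a central order statistic with $\zeta=\alpha$.

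For (ii), I would apply Theorem~\ref{th:central} to the parent cdf $F=\bar G(\cdot/A_0)$: the non-degeneracy hypothesis holds because for $\alpha\in(0,1)$ and $t\ge1$ we have $\bar G^{-1}(\alpha)\in(0,\infty)$ and $\bar g$ is strictly positive there, so $f(F^{-1}(\alpha))\ne0$. Theorem~\ref{th:central} then gives $\bar V^t_r\sim\mathcal N\bigl(F^{-1}(\alpha),\,\alpha(1-\alpha)/(\bar n_t[f(F^{-1}(\alpha))]^2)\bigr)$, and substituting $F^{-1}(\alpha)=A_0\bar G^{-1}(\alpha)=A_0\bar\mu$ and $f(F^{-1}(\alpha))=\frac{1}{A_0}\bar g(\bar G^{-1}(\alpha))$ turns the mean into $A_0\bar\mu$ and the variance into $A_0^2\bar\sigma^2$, with $\bar\mu,\bar\sigma^2$ exactly as in~\eqref{eq:central_distribution_var2}. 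Equivalently $\bar V^t_r\sim A_0\cdot\tilde A$ with $\tilde A\sim\mathcal N(\bar\mu,\bar\sigma^2)$, which is the assertion. The main obstacle is step (i): one must ensure that, uniformly across all refinement rounds, the surviving count $\bar n_t$ of normal events stays asymptotically larger than $n_t+T_n$, so that the rank-to-size ratio is genuinely $\alpha+o(1)$ almost surely; this is where Lemma~\ref{lmm:tau}, the monotone bound $n_t\le n_1$, and the standing hypotheses $n\epsilon_n=\omega(1)$ and $T_n=o(n\epsilon_n)$ all enter. The remaining work is the routine change of variables in the standard central-order-statistic asymptotics.
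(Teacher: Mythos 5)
Your proposal is correct and follows essentially the same route as the paper, which simply invokes the central-order-statistic machinery (Theorem~\ref{th:central}) exactly as in Lemmas~\ref{lmm:U1}--\ref{lmm:U0}: verify that the rank $|\L_{t+1}|-n_t+1$ is central with $\zeta=\alpha$ because $n_t+T_n=o(\bar n_t)$, then apply the theorem to the scaled parent law $A_0\cdot\chi^2(t)$ and rescale by $A_0$. Your write-up is merely more explicit about the rank-ratio bookkeeping and the change of variables than the paper's one-line reference.
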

\begin{proof}
The proof of this follows from the same line of argument as in the proof of Lemma \ref{lmm:U1}.
\end{proof}
Therefore from \eqref{eq:central_distribution_var}-\eqref{eq:central_distribution_var2} we obtain when $\psi(t)=1$
\begin{eqnarray} \label{eq:n0_evolution2_var}
\P\left(n_{t+1}\;=\; n_t\med n_t\right) &=& \int_{-\infty}^{\infty}\P\left(\frac{V^t_{n_t}}{2A_1}-\ln n_t\leq \frac{A_0}{2A_1}\cdot a-\ln n_t\right)f_{\tilde A}(a)\;da\ ,
\end{eqnarray}
where $f_{\tilde A}$ denotes the pdf of $\tilde A$. Based on the definition of $V^t_j$ given in \eqref{eq:U0_var}, $V^t_{n_t}$ is the highest order statistic of a sequence of $n_t$ i.i.d. random variables with the parent distribution $\chi^2(t)$. Hence, by using Lemma~\ref{lmm:chi_max} we obtain
\begin{equation}\label{eq:n0_evolution2_var2}
\P\left(\frac{V^t_{n_t}}{2A_1}-\ln n_t\leq w\right)=\exp\left(\frac{1}{\Gamma(t/2)}\cdot\exp(-w)\right)\ .
\end{equation}
Therefore, from \eqref{eq:n0_evolution2_var} and \eqref{eq:n0_evolution2_var2} we get
\begin{eqnarray}
\nonumber \P\left(n_{t+1}\;=\; n_t\right) &=& \int_{-\infty}^{\infty}\exp\left(\frac{1}{\Gamma(t/2)}\cdot\exp\left(\ln n_t-\frac{A_0}{2A_1}\cdot a\right)\right)f_{\tilde A}(a)\;da\\
\nonumber &&\\
\label{eq:n0_evolution2_var3} &\geq & \int_{\bar\mu/2}^{3\bar\mu/2}\exp\left(\frac{1}{\Gamma(t/2)}\cdot\exp\left(\ln n_t-\frac{A_0}{2A_1}\cdot a\right)\right)f_{\tilde A}(a)\;da\\
\nonumber &&\\
\label{eq:n0_evolution2_var4} &\geq & \int_{\bar\mu/2}^{3\bar\mu/2}\exp\left(\frac{1}{\Gamma(t/2)}\cdot\exp\left(\ln n_t-\frac{A_0}{2A_1}\cdot \frac{\bar\mu}{2}\right)\right)f_{\tilde A}(a)\;da\\
\nonumber &&\\
\nonumber & = & \exp\left(\frac{1}{\Gamma(t/2)}\cdot\exp\left(\ln n_t-\frac{A_0}{2A_1}\cdot \frac{\bar\mu}{2}\right)\right)\;\int_{\bar\mu/2}^{3\bar\mu/2}f_{\tilde A}(a)\;da\\
\nonumber &&\\
\nonumber & = & \exp\left(\frac{1}{\Gamma(t/2)}\cdot\exp\left(\ln n_t-\frac{A_0}{2A_1}\cdot \frac{\bar\mu}{2}\right)\right)\;\P(|\tilde A-\bar\mu|\leq\bar\mu/2)\\
\nonumber &&\\
\label{eq:n0_evolution2_var5} & = & \exp\left(\frac{1}{\Gamma(t/2)}\cdot\exp\left(\ln n_t-\frac{A_0}{2A_1}\cdot \frac{\bar\mu}{2}\right)\right)\; \left[1-\left(\frac{2\bar\sigma}{\bar\mu}\right)^2\right]\ ,
\end{eqnarray}
where \eqref{eq:n0_evolution2_var3} holds by narrowing the interval of integral, \eqref{eq:n0_evolution2_var4} holds as the integrand in \eqref{eq:n0_evolution2_var3} takes its smallest value at $a=\bar\mu/2$, and \eqref{eq:n0_evolution2_var5} holds according to Chebyshev's  inequality. Next, note that in the asymptote of large $n$ almost surely we have $n_t\omega(1)$, and consequently, $\left(\frac{2\bar\sigma}{\bar\mu}\right)^2\xrightarrow{n\rightarrow\infty}0$. Hence,
\begin{equation}\label{eq:suff}
    \frac{A_0}{2A_1}\cdot\frac{\bar\mu}{2}-\ln n_t =\omega(1)
\end{equation}
is a sufficient condition for ensuring $\P\left(n_{t+1}\;=\; n_t\right)\xrightarrow{n\rightarrow\infty} 1$. In turn, this condition is satisfied in the asymptote of large $n$ almost surely when
$\frac{A_0}{A_1}=\omega(\ln n\epsilon_n)$ or equivalently $\frac{A_0}{A_1}=\omega({\varepsilon_n}\ln n)$.

\section{Proof of Lemma \ref{lmm:detection_var}}
\label{app:lmm:detection_var}

According to \eqref{eq:Pn2_var}, the detection error probability is
\begin{equation*}
\quad \PP_n(\tau,\bar\psi(\tau))=\P\Big(V^{\tau}_{T_n}\;>\; \bar V^{\tau}_{1}\Big)\ .
\end{equation*}
In order to find the asymptotic distribution and the associated minimal domains of attraction of $V^{\tau}_{T_n}$ and $\bar V^{\tau}_1$ we use the result of Lemma~\ref{lmm:chi_min}. For this purpose, let us define
\begin{equation}\label{eq:h_bar}
\bar h (x)=\frac{1}{2}\cdot x^{2/\tau}\ .
\end{equation}
Also, let us set
\begin{eqnarray}
 \label{eq:W0_var} \forall i\in\L^0_{\tau}:\quad \bar W_i &\dff& \bar Z_t^i\cdot\frac{\bar h(\bar n_{\tau})}{A_{0}}\ ,\\
 \label{eq:W1_var} \mbox{and}\quad \forall i\in\L^1_{\tau}:\quad W_i &\dff& \bar Z_t^i\cdot\frac{\bar h(n_{\tau})}{A_{1}}\ .
\end{eqnarray}
Therefore, according to Lemma~\ref{lmm:chi_min} the cdfs of $W_{1:n_{\tau}}$ and $\bar W_{1:\bar{n}_{\tau}}$, denoted by $Q_{1:n_{\tau}}(w;n_{\tau})$ and $\bar Q_{1:\bar{n}_{\tau}}(w;\bar{n}_{\tau})$, respectively, satisfy
\begin{eqnarray*}
  \lim_{\bar n_{\tau}\rightarrow\infty} \bar Q_{1:\bar n_{\tau}}(w;\bar n_{\tau})&=& 1-\exp\left(-w^{\tau/2}\right)\ , \\
  \mbox{and}\quad \lim_{{n}_{\tau}\rightarrow\infty} Q_{1:{n}_{\tau}}(w;{n}_{\tau})&=& 1-\exp\left(-w^{\tau/2}\right)\ .
\end{eqnarray*}
Furthermore, by using Theorem~\ref{th:low} we can find the distribution of the $T_n^{th}$-other (low-order) statistic $ W_{T_n: n_{\tau}}$ as follows:
\begin{eqnarray}
  \label{eq:W1_cdf_var}\lim_{\bar{n}_{\tau}\rightarrow\infty} Q_{T_n:{n}_{\tau}}(w)&=& 1-\exp\left(-w^{\tau/2}\right)\sum_{i=0}^{T_n-1}\frac{w^{i\tau/2}}{i!}\ ,\quad\forall w\in\mathbb{R}\ .
\end{eqnarray}
Given the definitions in \eqref{eq:W0_var}-\eqref{eq:W1_var} we obtain
\begin{eqnarray*}
  \PP_n(\tau,\bar\psi(\tau)) &=&  \P\Big(V^{\tau}_{T_n}>\bar V^{\tau}_{1}\Big)\\
 &=&  1-\P\left(W_{T_n:n_\tau}\cdot\frac{A_{1}}{\bar h( n_{\tau})} < \bar W_{1:\bar n_\tau}\cdot\frac{A_{0}}{\bar h(\bar n_{\tau})}\right) \\
 &=&  1-\P\left(W_{T_n:n_\tau} < \bar W_{1:\bar n_\tau}\cdot\frac{A_{0}}{A_{1}}\cdot\frac{\bar h( n_{\tau})}{\bar h(\bar n_{\tau})}\right) \ .
\end{eqnarray*}
By setting
\begin{equation}\label{eq:Theta}
\Theta=\frac{A_0}{A_1}\cdot\frac{\bar h( n_{\tau})}{\bar h(\bar n_{\tau})}\ ,
\end{equation}
we get
\begin{eqnarray}
\nonumber \PP_n(\tau,\bar\psi(\tau)) & = & 1- \int_0^\infty \bar q_{1:\bar n_{\tau}}(w;\bar n_{\tau}) \int_0^{\Theta} q_{T_n;n_{\tau}} (x;n_{\tau})\  dx \ dw\\
\nonumber &&\\
\nonumber & = & 1- \int_0^\infty \bar q_{1:\bar n_{\tau}}(w;\bar n_{\tau})\; Q_{T_n;n_{\tau}}\left(w\cdot\frac{A_0}{A_1}\cdot\Theta \ ;\ n_{\tau}\right) \ dw\\
\nonumber &&\\
\nonumber & \overset{\eqref{eq:W1_cdf_var}}{=} & 1- \int_0^\infty \underset{=\tau/2\;w^{\tau/2-1}\;e^{-w^{\tau/2}}}{\underbrace{\bar q_{1;\bar n_{\tau}}(w;\bar n_{\tau})}} \left(1-\exp\left(-w^{\tau/2}\; \Theta^{\tau/2}\;\right)\sum_{i=0}^{T_n-1}\ \frac{w^{i\tau/2}\;\Theta^{i\tau/2}}{i!}\right)\ dw\\
\nonumber &&\\
\nonumber &= & 1- \left\{1-\sum_{i=0}^{T_n-1}\frac{\tau/2\;\Theta^{i\tau/2}}{i!}\int_0^\infty \exp\left(-w^{\tau/2}\left(1+ \Theta^{\tau/2}\right)\right)\ w^{i\tau/2+\tau/2-1}\ dw\right\}\ .
\end{eqnarray}
By further setting
\begin{equation*}
    s\dff w^{\tau/2}\left(1+ \Theta^{\tau/2}\right)\ ,
\end{equation*}
we find
\begin{eqnarray}\label{eq:Pn_var2}
\nonumber \PP_n(\tau,\bar\psi(\tau))
& = & 1- \left\{1-\sum_{i=0}^{T_n-1}\frac{\tau/2\;\Theta^{i\tau/2}}{i!}\cdot\frac{1}{\tau/2\;\left(1+\Theta^{\tau/2}\right)^{i+1}}\underset{=\ \Gamma(i+1)\ =\ i!}{\underbrace{\int_0^\infty \exp(-s)\  s^i\ ds}}\right\}\\
\nonumber \nonumber &&\\
\nonumber & = & 1- \left\{1-\sum_{i=0}^{T_n-1}\frac{\Theta^{i\tau/2}}{\left(1+\Theta^{\tau/2}\right)^{i+1}}\right\}\\
\nonumber \nonumber &&\\
\nonumber & = & 1- \left\{1-\frac{1}{1+\Theta^{\tau/2}}\sum_{i=0}^{T_n-1}\left(\frac{\Theta^{\tau/2}}{1+\Theta^{\tau/2}}\right)^i\right\}\\
\nonumber \nonumber &&\\
& = & 1- \left(\frac{\Theta^{\tau/2}}{1+\Theta^{\tau/2}}\right)^{T_n}\ .
\end{eqnarray}
Hence, the requirement $\PP_n(\tau,\bar\psi(\tau)) \xrightarrow{n\rightarrow\infty} 0$ is equivalent to $\Theta^{\tau/2}=\omega(1)$, which by taking into account \eqref{eq:h_bar}~and~\eqref{eq:Theta}, is in turn equivalent to
\begin{equation}\label{eq:A_scaling}
\frac{A_{0}}{A_{1}}\;=\; \omega\left(\sqrt[\tau/2]{\frac{\bar n_{\tau}}{ n_{\tau}}}\right)\ .
\end{equation}
By taking into account that $\frac{\bar n_\tau}{n_\tau}\xrightarrow{n\rightarrow\infty} c\cdot n^{1-{\varepsilon_n}}$ for some constant $c\in\mathbb{R}_+$, this condition can be equivalently cast as
\begin{equation}\label{eq:A_scaling2}
\xi_{\rm v}\dff \frac{\ln\frac{A_{0}}{A_{1}}}{\ln n}\;>\;\frac{2(1-{\varepsilon_n})}{\tau}\ .
\end{equation}

\section{Proof of Theorem \ref{th:stop}}
\label{app:th:mean}
From the definition of $\sP_n(S,K)$, the optimal stopping time $\tau$ and the optimal switching sequence $\bar\psi(\tau)$ are the minimizers of $\PP_n(\tau,\bar\psi(\tau))$ within the constraints on the sampling budget and the number of switchings. In the sequel we first prove that the impacts of the stopping time and the switching sequence are embedded in the term $\tau$ and minimizing $\PP_n(\tau,\bar\psi(\tau))$ reduces to maximizing $\tau$. We provide separate proofs for the mean and variance cases.\vspace{.1 in}\\
  \underline{\bf Mean:}\\
As shown in \eqref{eq:Pn2} and \eqref{eq:star} we have
\begin{eqnarray}\label{eq:Pn5}
  \PP_n(\tau,\bar\psi(\tau)) &=&
  P\Bigg(\frac{U^{\tau}_{T_n}-\mu_1\cdot\tau}{\sqrt{\tau}}> \frac{\bar U^{\tau}_{1}-\mu_0\cdot\tau}{\sqrt{\tau}}+\sqrt{\tau}\;(\mu_0-\mu_1)\Bigg)\ ,
\end{eqnarray}
where $U^\tau_{T_n}$ and $\bar U^\tau_1$ denote the order statistics of the sets
$\{Z^i_t:\; t\in\L_t^1\}$ and $\{Z^i_t:\; t\in\L_t^0\}$. Furthermore, as shown in \eqref{eq:W0} and \eqref{eq:W1} we also have
\begin{eqnarray*}
&&\forall i\in\L^0_{\tau}:\quad  \frac{Z_{\tau}^i-\mu_0\cdot\tau}{\sqrt{\tau}}\;\sim\;{\cal N}(0,1)\ ,\\
\mbox{and}\quad &&\forall i\in\L^1_{\tau}:\quad  \frac{Z_{\tau}^i-\mu_1\cdot\tau}{\sqrt{\tau}}\;\sim\;{\cal N}(0,1)\ .
\end{eqnarray*}
Hence, the distributions of $\frac{U^\tau_{T_n}-\mu_1\cdot\tau}{\sqrt{\tau}}$ and $\frac{\bar U_{\tau}^1-\mu_0\cdot\tau}{\sqrt{\tau}}$ are independent of $\tau$ and the effect of $\tau$ on $ \PP_n(\tau,\bar\psi(\tau))$ is captured entirely by the term $\sqrt{\tau}(\mu_0-\mu_1)$. Therefore, the error probability is minimizes when $\tau$ is maximized. \\
\underline{\bf Variance:}\\
By recalling \eqref{eq:h_bar}, \eqref{eq:Theta}, and \eqref{eq:Pn_var2} we have
\begin{eqnarray*}
\PP_n(\tau,\bar\psi(\tau))
& = & 1- \left(\frac{\Theta^{\tau/2}}{1+\Theta^{\tau/2}}\right)^{T_n}\ ,
\end{eqnarray*}
where from \eqref{eq:Theta} and \eqref{eq:h_bar} we have
\begin{equation*}
    \Theta^{\tau/2}=\left(\frac{A_0}{A_1}\right)^{\tau/2}\cdot\frac{ n_{\tau}}{\bar n_{\tau}}\
\end{equation*}
Therefore, minimizing $\PP_n(\tau,\bar\psi(\tau))$ is equivalent to maximizing $\Theta^{\tau/2}$, which happens when $\tau$ is maximized. Next, we obtain the choices of $\tau$ and $\bar\psi(\tau)$ that maximize $\tau$. Let us denote the optimal number of refinement actions by $K^*\leq K$. We argue that the optimal switching sequence that maximizes $\tau$ is of the form
\begin{equation}\label{eq:psi_proof1}
\bar\psi(\tau)=\{\underset{K^*}{\underbrace{1,\dots , 1}},\;0,\dots, 0\}\ .
\end{equation}
To prove this we first show that the aggregate number of observations taken corresponding to the switching sequence
\begin{equation}\label{eq:hat_psi}
\hat\psi(\tau)\dff \{\hat\psi_1\;,\,0\;,\;1\;,\;\hat\psi_2\}
\end{equation}
is strictly smaller than that corresponding to the switching sequence
\begin{equation}\label{eq:hat_psi2}
\doublehat\psi(\tau)\dff \{\hat\psi_1\;,\,1\;,\;0\;,\;\hat\psi_2\}\ ,
\end{equation}
where  $\hat\psi_1$ and $\hat\psi_2$ are switching sub-sequences. Note that the sequences $\hat\psi(\tau)$ and $\doublehat\psi(\tau)$ differ only in two switch values. By denoting the lengths of $\hat\psi_1$ and $\hat\psi_2$ by $|\hat\psi_1|$ and $|\hat\psi_2|$, respectively, the number of samples taken according to the switching sequence $\hat\psi(\tau)$ is
\begin{equation}\label{eq:tau_1}
\sum_{t=1}^\tau|\L_t|\; =\; \underset{\mbox{\small times }t=1,\dots,|\hat\psi_1|+1}{\underbrace{\sum_{t=1}^{|\hat\psi_1|+1}|\L_t|}}\;+\;\underset{\mbox{\small time }t=|\hat\psi_1|+2}{\underbrace{|\L_{\hat\psi_1|+1}|}}\;+\;\underset{\mbox{\small time }t=|\hat\psi_1|+3}{\underbrace{\lfloor \alpha(|\L_{\hat\psi_1|+1}|-T_n)\rfloor +T_n}}+\sum_{t=|\hat\psi_1|+4}^{|\hat\psi_1|+|\hat\psi_2|+2}|\L_t|\ ,
\end{equation}
and the number of samples taken according to the switching sequence $\doublehat\psi(\tau)$ is
\begin{equation}\label{eq:tau_2}
\sum_{t=1}^\tau|\L_t|\; =\; \underset{\mbox{\small times }t=1,\dots,|\hat\psi_1|+1}{\underbrace{\sum_{t=1}^{|\hat\psi_1|+1}|\L_t|}}\;+\;\underset{\mbox{\small time }t=|\hat\psi_1|+2}{\underbrace{\lfloor \alpha(|\L_{\hat\psi_1|+1}|-T_n)\rfloor +T_n}}\;+\;\underset{\mbox{\small time }t=|\hat\psi_1|+3}{\underbrace{\lfloor \alpha(|\L_{\hat\psi_1|+1}|-T_n)\rfloor +T_n}}+\sum_{t=|\hat\psi_1|+4}^{|\hat\psi_1|+|\hat\psi_2|+2}|\L_t|\ .
\end{equation}
By comparing \eqref{eq:tau_1} and \eqref{eq:tau_2} all the summands of $\sum_{t=1}^\tau|\L_t|$ are identical except the second terms, where it is is strictly smaller in \eqref{eq:tau_1}. By following the same line of argument, it is concluded that among all switching sequences that contain $K^*$  switches with value 1, the sequence that has its initial $K^*$ switches set to 1, takes the least number of samples. Therefore, among all switching sequences with $K^*$ refinement actions, the sequence of the form \eqref{eq:psi_proof1} leaves the most unused sampling resources, which in turn can be exploited to take further samples and delay (increase) the stopping time.

Next, we need to determine the optimal number of refinement actions $K^*$. By using \eqref{eq:alpha_bar}, the following equation delineates the number of sequences of sequences observed at time $t$.
\begin{equation}\label{eq:t_observe}
|\L_t|\;=\; \left\lfloor \alpha^{\sum_{u=1}^{t-1}\psi(t)}(n-T_n)\right\rfloor+T_n\ .
\end{equation}
Hence, by invoking that the switching sequence is of form \eqref{eq:psi_proof1}, we obtain
\begin{equation}\label{eq:t_bound 4}
\sum_{t=1}^{\tau}|\L_t|\;=\; \sum_{t=1}^{K^*}\left(\lfloor\alpha^{t-1}( n-T_n)\rfloor+T_n\right)+(\tau-K^*)\left(\lfloor\alpha^{K^*}(n-T_n)\rfloor+T_n\right)\ .
\end{equation}
Furthermore, by taking into account the hard constraint on the sampling resources, i.e., $\sum_{t=1}^\tau|\L_t|\leq S\cdot n$, in the asymptote of large $n$ we have
\begin{equation}\label{eq:t_bound5}
\tau=\left\lfloor \frac{nS-\sum_{t=1}^{K^*}(\lfloor\alpha^{t-1}(n-T_n)+T_n\rfloor)}{\lfloor\alpha^{K^*}( n-T_n)\rfloor+T_n}\right\rfloor\;+\;K^*\; \doteq \;\left\lfloor S\cdot\alpha^{-K^*}+\frac{1-\alpha^{-K^*}}{1-\alpha}\right\rfloor \ .
\end{equation}
Also, by noting that $\alpha\in(0,1)$ it can be readily shown that for $x\in\{0,\mathbb{N}\}$
\begin{equation*}
S\cdot\alpha^{-x}+\frac{1-\alpha^{-x}}{1-\alpha}
\end{equation*}
is increasing in $x$ when $\alpha\leq 1-\frac{1}{S}$ and decreasing in $x$ when $\alpha>1-\frac{1}{S}$. As a result
\begin{equation*}
K^*=\left\{
\begin{array}{ll}
K\ , & \mbox{if}\;\;\alpha\leq 1-\frac{1}{S}\\
\;0\ , & \mbox{if}\;\;\alpha> 1-\frac{1}{S}\\
\end{array}\right.\ .
\end{equation*}
By using \eqref{eq:t_bound5} the optimal stopping time is
\begin{equation*}
\tau=\left\{
\begin{array}{ll}
\;K+ \left\lfloor S\cdot\alpha^{-K}+\frac{1-\alpha^{-K}}{1-\alpha}\right\rfloor \ , & \mbox{if}\;\;\alpha\leq 1-\frac{1}{S}\\
\;S\ , & \mbox{if}\;\;\alpha> 1-\frac{1}{S}\\
\end{array}\right.\ .
\end{equation*}

\section{Proof of Corollary \ref{cor:agility_mean}}
\label{app:cor:mean:agility}

By following the same line of argument as in the proof of Theorem~\ref{th:stop} it can be readily shown that the impact of the average sampling budget on the asymptotic error probability $\sP_n(S,K)$ is captured by $\tau$. Therefore, achieving identical asymptotic performance $\sP_n(S,K)\doteq \sP_n(S_0,0)$ is equivalent to equating the term $\tau$ under the adaptive and the non-adaptive procedures. By recalling the results of Theorem~\ref{th:stop} and \eqref{eq:switch}-\eqref{eq:t_optimal} this latter equivalence can be stated as
\begin{equation}\label{eq:agility_1}
    \left\lfloor S\cdot\alpha^{-K}+\frac{1-\alpha^{-K}}{1-\alpha}\right\rfloor=S_0\ ,
\end{equation}
which in turn provides
\begin{equation}\label{eq:agility_2}
    S\cdot\alpha^{-K}+\frac{\alpha-\alpha^{-K}}{1-\alpha}\;<\;S_0\;<\; S\cdot\alpha^{-K}+\frac{\alpha-\alpha^{-K}}{1-\alpha}\ .
\end{equation}
After some simple manipulations we obtain the desired result.

{\small
\bibliographystyle{IEEETran}
\bibliography{IEEEabrv,Search}
}

\end{document}